\newcommand{\tcb}{\textcolor{black}}
\DeclareMathOperator*{\argmax}{arg\,max}
\DeclareMathOperator*{\argmin}{arg\,min}
\newtheorem{thm}{Theorem}
\newtheorem{prop}{Proposition}
\newtheorem{lemma}{Lemma}
\newtheorem{definition}{Definition}
\newtheorem{assumption}{Assumption}
\newtheorem{remark}{Remark}
\newtheorem{example}{Example}
\newcommand\tx{\tilde{x}}
\newcommand\tu{\tilde{u}}
\newcommand\bj{\mathbf{J}}
\newcommand\re{\mathbb{R}}
\newcommand\cj{\mathcal{J}}
\newcommand\cc{\mathcal{C}}
\newcommand\cg{\mathcal{G}}
\newcommand\ck{\mathcal{K}}
\newcommand\vpt{\varphi(\theta)}
\newcommand\cki{\mathcal{K}_\infty^{-P}}
\newcommand\ckf{\mathcal{K}_\infty^{\text{FxT}}}
\newcommand\fx{\mathcal{F}_{\xi}}
\newcommand{\ov}{\overline}
\newcommand{\un}{\underline}
\DeclarePairedDelimiter\sg{\lfloor}{\rceil}
\newcommand\ckp{\mathcal{K}_\infty^P}
\newcommand\klfx{\mathcal{KL}_{\text{FxT}}}
\newcommand*{\QEDB}{\hfill\ensuremath{\square}}%
\def\BibTeX{{\rm B\kern-.05em{\sc i\kern-.025em b}\kern-.08em
    T\kern-.1667em\lower.7ex\hbox{E}\kern-.125emX}}
\begin{document}
\title{A Lyapunov-Based Small-Gain Theorem for Fixed-Time ISS:\\Theory, Optimization, and Games}

\author{Michael Tang, Miroslav Krstic, Jorge I. Poveda
\thanks{M. Tang, J. I. Poveda are with the Department of Electrical and Computer Engineering, University of California, San Diego, La Jolla, CA, USA. Email: myt001@ucsd.edu, poveda@ucsd.edu.}
\thanks{ M. Krsti\'c is with the Department of Mechanical and Aerospace Engineering, University of California, San Diego, La Jolla, CA, USA.}
\thanks{This work was supported in part by NSF grants ECCS CAREER 2305756, CMMI 2228791, and AFOSR FA9550-22-1-0211.}
\vspace{-0.3cm}
}
\maketitle

\begin{abstract}
We develop a Lyapunov-based small-gain theorem for establishing fixed-time input-to-state stability (FxT-ISS) guarantees in interconnected nonlinear dynamical systems. The proposed framework considers interconnections in which each subsystem admits a FxT-ISS Lyapunov function, providing robustness with respect to external inputs. We show that, under an appropriate nonlinear small-gain condition, the overall interconnected system inherits the FxT-ISS property. In this sense, the proposed result complements existing Lyapunov-based small-gain theorems for asymptotic and finite-time stability, and enables a systematic analysis of interconnection structures exhibiting fixed-time stability. To illustrate the applicability of the theory, we study feedback-based optimization problems with time-varying cost functions, and Nash-equilibrium seeking for noncooperative games with nonlinear dynamical plants in the loop. Both settings correspond to model-based equilibrium seeking problems, for which we propose a class of nonsmooth gradient- or pseudo-gradient-based controllers that achieve fixed-time convergence using real-time feedback, \emph{without requiring time-scale separation or passivity-based assumptions}. Numerical examples are provided to validate the theoretical findings.
\end{abstract}

\vspace{-0.3cm}
\section{INTRODUCTION}
\subsection{Motivation}
In recent years, the control and learning communities have devoted considerable effort to the development of algorithms that exhibit faster convergence, higher accuracy, and stronger robustness guarantees. Within this line of research, finite-time stability (FTS) \cite{bhatfinite} and finite-time control methods \cite{zpjfinite} have received significant attention due to their ability to guarantee convergence in finite time and to enhance disturbance rejection in nonlinear systems. A fundamental limitation of finite-time approaches, however, is that the settling time—the time required for trajectories to reach the desired equilibrium—generally depends on the initial conditions and may grow unbounded. This drawback motivated the introduction of fixed-time stability (FxTS) \cite{doi:10.1137/060675861}, a stronger notion of stability that guarantees convergence within a uniform time bound that is independent of the system’s initialization. The main goal of this paper is to introduce a novel analytical tool for the study of fixed-time stability properties in interconnected systems via a small gain theorem. 

\vspace{-0.2cm}
\subsection{Literature Review}
Following the introduction of Lyapunov-based conditions for verifying FxTS in continuous-time autonomous systems \cite{6104367}, the concept has been successfully applied across a broad range of engineering domains, including feedback control \cite{9696292}, network synchronization \cite{hu2020fixed}, optimization dynamics \cite{10885949}, learning algorithms \cite{9760031}, and neural networks \cite{CHEN2020412}, among others. Despite this growing body of work, the analysis of FxTS properties in interconnected systems remains comparatively underdeveloped. Existing methods typically rely on restrictive assumptions, such as specific structural interconnection properties \cite{9714166}, quadratic-type coupling conditions combined with time-scale separation \cite{10644358,11239427}, or trajectory-based small-gain conditions \cite{10886634}. These requirements substantially limit the applicability of the available results to large-scale or heterogeneous interconnected systems.

Small-gain theory has long been a central tool for analyzing the stability of interconnected systems. Originally developed for systems with linear gains \cite{1098316}, it was later extended to nonlinear gains \cite{jiang1994small} after the development of input-to-state stability (ISS) \cite{sontag2008input} theory. These foundational results have since been generalized to a wide variety of system classes, including hybrid systems \cite{bao2018ios,liberzon2012small}, stochastic systems \cite{dragan1997small}, discrete-time systems \cite{jiang2004nonlinear,zhongping2008nonlinear}, and infinite-dimensional systems \cite{mironchenko2021nonlinear,mironchenko2021small}. In parallel, substantial effort has been devoted to reformulating nonlinear small-gain conditions in a Lyapunov framework, enabling constructive stability certificates and facilitating controller synthesis \cite{jiang1996lyapunov,liberzon2014lyapunov,kawan2020lyapunov}.

Despite these extensive developments, extensions of small-gain theory to systems exhibiting accelerated convergence have only begun to emerge in recent years. Trajectory-based small-gain conditions for interconnected finite-time input-to-state stable systems were first derived in \cite{zpjfinite}, and a Lyapunov-based reformulation for networked finite-time systems was proposed in \cite{sgfinite}. More recently, the introduction of fixed-time input-to-state stability (FxT-ISS) \cite{LOPEZRAMIREZ2020104775,fxtiss_converse} enabled the derivation of a trajectory-based small-gain theorem for interconnected FxT-ISS systems \cite{10886634}. However, as acknowledged in \cite{10886634}, these results are conservative and impose additional technical conditions involving the settling-time function, its inverse, and several auxiliary bounding functions, which can restrict the class of admissible nonlinear gains and limit applicability. As a result, Lyapunov-based small-gain tools, similar to those developed for asymptotic \cite{jiang1996lyapunov,liberzon2014lyapunov,kawan2020lyapunov} and finite-time stability \cite{sgfinite}, but capable of certifying FxT-ISS for general interconnected systems, remain largely unexplored.

\vspace{-0.2cm}
\subsection{Contributions}
The main contribution of this paper is the introduction of a Lyapunov-based small-gain theorem to establish FxT-ISS for interconnected systems, where each subsystem is FxT-ISS on its own. This small-gain theorem is instrumental for the design and study of different feedback mechanisms for decision-making in dynamical systems without the use of timescale separation, in particular, feedback optimization and Nash equilibrium-seeking. Specifically, the contributions of this article are as follows:
\begin{enumerate}
\item We introduce a Lyapunov-based small-gain theorem for FxT-ISS that parallels similar tools developed in the literature for the study of asymptotic \cite{jiang1996lyapunov,liberzon2014lyapunov,kawan2020lyapunov} and finite-time \cite{sgfinite} stability. Specifically, we show that if each subsystem admits a FxT-ISS Lyapunov function, where an external input and the state of the other subsystems are treated as the subsystem's input, and the gain functions ---under a mild structural requirement--- satisfy the nonlinear small-gain condition, then the overall interconnection is FxT-ISS. 
Our results, presented in Theorem \ref{thm_sgt}, do not follow as simple extensions of previous results, but rather require the derivation of several new technical lemmas, which \tcb{establish the
preservation of FxT-ISS Lyapunov inequalities under nonlinear
gain-dependent scalings. These auxiliary results} can be of independent interest for future studies of fixed-time control and FxTS. Academic examples, including the interconnection of two systems with arbitrarily many homogeneous disturbance terms, are presented to illustrate the main result.
\item We investigate the problem of feedback-based optimization \cite{9075378,8673636,bianchin2022online,9540998} with fixed-time convergence guarantees and \emph{without requiring time-scale separation or passivity-based assumptions}. By leveraging the proposed fixed-time small-gain theorem, we show in Theorem \ref{thm_fbk} that a fixed-time gradient flow with unit gain, interconnected in feedback with a plant that has been pre-stabilized in fixed time, drives the trajectories of the resulting closed-loop system toward the solution set of a time-varying optimization problem, provided the curvature of the quasi-steady state cost function is sufficiently large relative to the ``smoothness'' of the dynamics. Since no internal model of the exosystem generating the time variations is assumed to be available, convergence is characterized in terms of a residual tracking error. The error is quantified using the FxT-ISS property, where the input corresponds to the rate of change of the optimizer. \tcb{This result complements singular perturbation and small-gain approaches to feedback optimization by providing fixed-time tracking guarantees for time-varying optimizers.}
\item To further demonstrate the applicability of the proposed fixed-time small-gain theorem to multi-agent systems, we extend the feedback optimization framework to Nash equilibrium seeking in non-cooperative games with dynamic plants in the loop \cite{romano2025game,belgioioso2024online,tang2024fixed}. We show in Theorem \ref{thm_nes} that a class of decoupled fixed-time pseudo-gradient flows, akin to those studied in \cite{9683248} for static mappings, achieves fixed-time stability with respect to the Nash equilibrium in potential games, even when the game dynamics are interconnected with nonlinear plant dynamics \emph{using no timescale separation or passivity-based assumptions}. To the best of our knowledge, this result is the first in the context of Nash seeking with plants in the loop that achieves fixed-time stability for the closed-loop system.
\end{enumerate}

\vspace{-0.2cm}
\subsection{Additional Contributions with Respect to \cite{tang2025lyapunov}}
Earlier, partial
results of this article were submitted to the 2026 American Control Conference \cite{tang2025lyapunov}\footnote{Submitted in the supplemental material.}. The results of \cite{tang2025lyapunov} consider only systems \emph{without} external inputs \tcb{and study closed-loop FxTS}. In contrast, the results of this article pertain to the study of \emph{fixed-time input-to-state stability} via small gain theory, further generalize the class of permissible gain functions compared to those studied in \cite{tang2025lyapunov}, and additionally present two novel applications using fixed-time control that are contributions on their own: the solution of feedback optimization problems with plants in the loop and time-varying cost functions, and the solution of Nash-equilibrium seeking problems in noncooperative games with dynamic plants. The article also presents new numerical illustrative examples, with the complete stability analysis and proofs.


The rest of this paper is organized as follows: Section \ref{sec_preliminaries} presents some mathematical preliminaries. Section \ref{sec_main} presents the Lyapunov-based small gain theorem for fixed-time stability. Section \ref{sec_fbkopt} \tcb{applies the results to study} the problem of feedback optimization in fixed-time, Section \ref{sec_fixedtime} \tcb{similarly considers} the fixed-time Nash equilibrium seeking problem, and Section \ref{sec_conclusions} ends with the conclusions. 

\section{PRELIMINARIES}
\label{sec_preliminaries}
\subsection{Notation}
We use $\re_{\ge 0}$ to denote the set of nonnegative real numbers. Given $n\in\mathbb{N}$, we denote $[n]:=\{1,2,...,n\}$. We use $\cc(X,Y)$ to denote the set of all continuous mappings $f:X\to Y$ between metric spaces $X$ and $Y$. We use $f^{-1}$ to denote the inverse of a function $f$. Moreover, if $X\subset\re$ and $Y\subset\re$, we use $f'$ to denote the derivative of $f$. We introduce the following classes of comparison functions, which will play an important role in this paper:
\begin{align*}
    \ck&:=\{\alpha\in \cc(\re_{\ge 0}, \re_{\ge 0}) : \alpha(0)=0,\ \alpha \text{ strictly increasing}\}\\
    \ck_\infty&:=\{\alpha\in\ck : \lim_{s\to\infty}\alpha(s)=\infty\}\\
    \ckp&:=\{\alpha\in\ck_\infty : \alpha(s)=\sum_{i=1}^n c_i s^{p_i} \text{ for some } c_1,...,c_n\neq 0,\\&~~~~~~~~~~~~~~~~~~ p_1,...,p_n>0, \text{ and } \alpha'(s)>0 \ \forall s>0\}\\
    \ckf&:=\{\alpha\in\ckp : \alpha(s)=c_1 s^{p_1}+c_2 s^{p_2} \text{ for some } c_1, c_2>0,\\&~~~~~~~~~~~~~~~~~~ p_1\in (0,1),\ p_2>1\}\\
    \cki&:=\{\alpha\in\ck_\infty : \alpha^{-1}\in\ckp\}
\end{align*}
A continuous function $\beta:\re_{\ge 0}\times\re_{\ge 0}\to\re_{\ge 0}$ is said to be of class $\klfx$, denoted $\beta\in\klfx$, if $\beta(\cdot, 0)\in\mathcal{K}$ and for each fixed $r\ge 0$, $\beta(r, \cdot)$ is continuous, non-increasing and there exists a continuous function $T:\mathbb{R}_{\ge 0}\to\mathbb{R}_{\ge 0}$ such that $\beta(r, t)=0$ for all $t\ge T(r)$, where $T(0)=0$ and $T$ is uniformly bounded. The mapping $T$ is called the \emph{settling time function}. Given a measurable function $u:\mathbb{R}_{\ge 0}\to\mathbb{R}^m$ we denote $|u|_\infty=\text{ess}\sup_{t\ge 0}|u(t)|$, where $|\cdot|$ represents the Euclidean norm. We use $\mathcal{L}_\infty^m$ to denote the set of measurable functions $u:\mathbb{R}_{\ge 0}\to\mathbb{R}^m$ satisfying $|u|_\infty<\infty$. Given a continuously differentiable function $t\mapsto x(t)$ and a continuous function $x\mapsto V(x)$, the upper-right Dini derivative at $t$ of the function $W(t)=V(x(t))$ is defined as $D^+W(t)=\limsup_{h\to 0^+}\frac{W(t+h)-W(t)}{h}$. Given a continuous function $V:\mathbb{R}^n\to\mathbb{R}_{\geq0}$, the upper right directional Dini derivative  of $V$ at a point $x\in\re^n$ in the direction $v\in\re^n$ is defined to be
\begin{equation*}
    DV(x;v)=\limsup_{h\to 0^+}\frac{V(x+hv)-V(x)}{h}.
\end{equation*}
Given a continuous function $f:\mathbb{R}^n\to\mathbb{R}^m$, we say $f$ is $\ell$-Lipschitz if $|f(x)-f(y)|\le \ell|x-y|$ for all $x,y\in\re^n$. 
If $f$ is differentiable, we use $\bj_f(x)\in\mathbb{R}^{m\times n}$ to denote the Jacobian of $f$ evaluated at $x\in\mathbb{R}^n$. If $m=1$, we use $\nabla f(x)=\bj_f(x)^\top$. If $\bj_f(x)$ is continuous, we say $f$ is $\mathcal{C}^1$.
\subsection{Auxiliary Lemmas}
The following auxiliary Lemmas will be instrumental for our results:
\begin{lemma}\label{lem_sandw}
    Given $\underline{p}\le p\le \overline{p}$, the following holds
    \begin{equation*}
        x^p\le x^{\underline{p}}+x^{\overline{p}},
    \end{equation*}
    for all $x\ge 0$.
    \QEDB
\end{lemma}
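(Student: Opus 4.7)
The plan is to prove the inequality by a simple case analysis based on whether $x$ lies below or above $1$, using the monotonicity of the map $p \mapsto x^p$ (which reverses direction depending on the regime of $x$).

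First, I would handle the trivial case $x=0$, where both sides vanish (assuming, as is implicit in the surrounding context of comparison functions, that all exponents involved are positive).

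Next, for $0 < x \le 1$, I would use the fact that $p \mapsto x^p$ is non-increasing on this range: since $p \ge \underline{p}$ and $x \le 1$, we have $x^p \le x^{\underline{p}}$. Adding the nonnegative quantity $x^{\overline{p}}$ on the right yields
\begin{equation*}
x^p \le x^{\underline{p}} \le x^{\underline{p}} + x^{\overline{p}}.
\end{equation*}

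For $x \ge 1$, the map $p \mapsto x^p$ is non-decreasing: since $p \le \overline{p}$ and $x \ge 1$, we have $x^p \le x^{\overline{p}}$. Adding the nonnegative quantity $x^{\underline{p}}$ on the left of the right-hand side gives
\begin{equation*}
x^p \le x^{\overline{p}} \le x^{\underline{p}} + x^{\overline{p}}.
\end{equation*}
Combining the three cases covers all $x \ge 0$ and completes the argument. There is no genuine obstacle here; the only subtlety is checking the direction of monotonicity of $x^p$ in $p$, which flips at $x=1$ and motivates the split into the two regimes.
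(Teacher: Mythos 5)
Your proof is correct and follows essentially the same case split as the paper's (dividing at $x=1$ and using monotonicity of $p \mapsto x^p$ in each regime); your added note on the $x=0$ case and the implicit positivity of exponents is a minor but fair observation.
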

\begin{proof}
    For $x\in [0,1]$ we have $x^p\le x^{\underline{p}}$, and for $x\ge 1$ we have $x^p\le x^{\overline{p}}$. We combine both cases to establish the result.
\end{proof}
\vspace{0.1cm}
\begin{lemma}\label{lem_fxtbd}
    If $\alpha_1,...,\alpha_N\in\ckf$, there exists $\alpha\in\ckf$ such that $\min_k\alpha_k(s)\ge \alpha(s)$ for all $s\ge 0$.\QEDB
\end{lemma}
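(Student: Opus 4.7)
The plan is to construct $\alpha\in\ckf$ explicitly from the parameters of the $\alpha_k$'s. By the definition of $\ckf$, each $\alpha_k$ has the form $\alpha_k(s)=c_{k,1}s^{p_{k,1}}+c_{k,2}s^{p_{k,2}}$ with $c_{k,1},c_{k,2}>0$, $p_{k,1}\in(0,1)$ and $p_{k,2}>1$. I would aggregate the exponents by defining $\overline{p}_1:=\max_{k\in[N]}p_{k,1}$ and $\underline{p}_2:=\min_{k\in[N]}p_{k,2}$; since these are extrema over a finite family, $\overline{p}_1\in(0,1)$ and $\underline{p}_2>1$. Choosing the common coefficient $c^*:=\tfrac{1}{2}\min_{k\in[N]}\min\{c_{k,1},c_{k,2}\}>0$, the candidate lower bound is
\begin{equation*}
\alpha(s):=c^*s^{\overline{p}_1}+c^*s^{\underline{p}_2},
\end{equation*}
which lies in $\ckf$ by construction.

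To establish $\alpha(s)\le \alpha_k(s)$ for every $k$ and every $s\ge 0$, I would split the argument at $s=1$, since the monotonicity of $p\mapsto s^p$ reverses there. On $[0,1]$ higher exponents yield smaller values, giving $s^{p_{k,1}}\ge s^{\overline{p}_1}$ for every $k$ and $s^{\underline{p}_2}\le s^{\overline{p}_1}$. These combine to give
\begin{equation*}
\alpha_k(s)\ge c_{k,1}s^{p_{k,1}}\ge c_{k,1}s^{\overline{p}_1}\ge 2c^*s^{\overline{p}_1}\ge c^*s^{\overline{p}_1}+c^*s^{\underline{p}_2}=\alpha(s),
\end{equation*}
where the penultimate inequality uses $c_{k,1}\ge 2c^*$ and the last uses $s^{\underline{p}_2}\le s^{\overline{p}_1}$ on $[0,1]$. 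On $[1,\infty)$ the inequalities reverse: $s^{p_{k,2}}\ge s^{\underline{p}_2}$ for every $k$ and $s^{\overline{p}_1}\le s^{\underline{p}_2}$, which yields an analogous chain through the $c_{k,2}s^{p_{k,2}}$ term of $\alpha_k$. Taking the minimum over $k$ in both regions then concludes the argument.

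I do not anticipate a substantive obstacle; the proof is essentially careful bookkeeping. The two points requiring attention are ensuring the aggregated exponents remain strictly on the correct side of $1$ (automatic because the extrema are taken over a finite index set) and verifying that the resulting $\alpha$ satisfies every structural requirement of $\ckf$, namely positivity of both coefficients and of $\alpha'$ on $(0,\infty)$. Lemma \ref{lem_sandw} provides an alternative route for consolidating powers, but it yields an upper bound on $s^p$ while here I need lower bounds on each $\alpha_k$, so the direct case split above is the more natural path.
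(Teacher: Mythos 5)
Your proof is correct and takes essentially the same approach as the paper, arriving at the same candidate $\alpha(s)=\tfrac{c}{2}\bigl(s^{\max_k p_{k,1}}+s^{\min_k p_{k,2}}\bigr)$ with $c=\min_k\min\{c_{k,1},c_{k,2}\}$; the only difference is that you inline the $s\le 1$ versus $s\ge 1$ case split, whereas the paper routes the identical reasoning through Lemma~\ref{lem_sandw}. Your closing remark that Lemma~\ref{lem_sandw} is unusable here because it gives an upper bound is a slight misreading: the paper applies it as $cs^r\le c(s^{p_{i,1}}+s^{p_{i,2}})\le \alpha_i(s)$ for $r\in[p_{i,1},p_{i,2}]$, which is precisely the lower bound on $\alpha_i$ you need.
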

\begin{proof}
    Let $\alpha_i(s)=a_i s^{p_i}+b_i s^{q_i}$, where $p_i\in (0,1)$ and $q_i>1$ for $i=1,...,N$. Denote $c_i=\min\{a_i, b_i\}$ and $c=\min_k c_k$. By Lemma \ref{lem_sandw}, we have that $\alpha_i(s)\ge c s^r$ for all $s\ge0$, $i\in [N]$, and $r\in\cap_{j=1}^N [p_j, q_j]=[\max_k p_k, \min_k q_k]$. Fix $p\in [\max_k p_k, 1)$ and $q\in (1, \min_k q_k]$, then
    \begin{equation*}
        \min_k\alpha_k(s)\ge \frac{c}{2}(s^p+s^q)\tcb{:=\alpha(s),}
    \end{equation*}
    \tcb{where $\alpha\in\ckf$. }
    This establishes the result. 
\end{proof}
\vspace{0.1cm}
    \begin{lemma}\label{jensenlemma}
    Let $s_i\ge 0$ for each $i\in \{1,2,...,n\}$. If $p\in (0,1]$, then $(\sum_{i=1}^n s_i)^p\le \sum_{i=1}^n s_i^p$. If $p>1$, then $(\sum_{i=1}^n s_i)^p\le n^{p-1}\sum_{i=1}^n s_i^p$.\QEDB
\end{lemma}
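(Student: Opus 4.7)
The plan is to handle the two cases separately, since they rest on opposite convexity properties of the map $x\mapsto x^p$ on $\mathbb{R}_{\ge 0}$. In both cases the inequality is trivial when $\sum_i s_i = 0$, so I would dispose of that case immediately and assume $S := \sum_{i=1}^n s_i > 0$ from then on.

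For the case $p\in(0,1]$, my plan is to exploit the subadditivity of $x\mapsto x^p$ by a normalization argument. Define $t_i := s_i/S \in [0,1]$, so that $\sum_{i=1}^n t_i = 1$. Since each $t_i\in[0,1]$ and $p\le 1$, I have $t_i^p \ge t_i$ for every $i$, hence $\sum_{i=1}^n t_i^p \ge \sum_{i=1}^n t_i = 1$. Multiplying through by $S^p$ and using $s_i^p = S^p t_i^p$ yields $\sum_{i=1}^n s_i^p \ge S^p = (\sum_{i=1}^n s_i)^p$, which is the desired inequality. The boundary case $p=1$ holds with equality, so it poses no additional difficulty.

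For the case $p>1$, my plan is to invoke convexity of $x\mapsto x^p$ on $\mathbb{R}_{\ge 0}$ directly via Jensen's inequality (or equivalently, via the power-mean inequality). Specifically,
\begin{equation*}
\left(\frac{1}{n}\sum_{i=1}^n s_i\right)^{\!p} \le \frac{1}{n}\sum_{i=1}^n s_i^p,
\end{equation*}
and multiplying both sides by $n^p$ gives $(\sum_{i=1}^n s_i)^p \le n^{p-1}\sum_{i=1}^n s_i^p$, as claimed.

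There is no real obstacle in this proof; both inequalities are classical and follow from a one-line convexity argument once the normalization is set up. The only mild subtlety worth flagging is ensuring the inequality $t_i^p \ge t_i$ in the first case is stated for $t_i\in[0,1]$ and $p\in(0,1]$ (which is where the hypothesis that the $s_i$'s are nonnegative and that we divide by their \emph{sum} is crucial), so that all quantities stay within the domain on which the power function has the correct monotonicity/concavity behavior.
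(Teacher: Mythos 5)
Your proof is correct. The paper itself does not give an argument for this lemma; it simply cites reference \cite{9760031}, so there is no in-paper proof to compare against. Your two-case argument---normalization plus $t^p\ge t$ on $[0,1]$ for the concave regime $p\in(0,1]$, and Jensen's inequality for the convex regime $p>1$---is the standard route and is complete, including the correct handling of the degenerate case $\sum_i s_i=0$.
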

\begin{proof}
    The proof is presented in \cite{9760031}.
\end{proof}

\subsection{Fixed-Time Stability and Input-to-State Stability}
Consider the following system
\begin{equation}\label{sysfxts}
    \dot{x}=f(x,u),\quad x(0)=x_0,
\end{equation}
where $x\in\re^n$ is the state, $u\in\mathcal{L}_\infty^m$ is the input, and $f:\re^n\times\re^m\to\re^n$ is a continuous function that satisfies $f(0,0)=0$. Given $x_0\in\re^n$ and $u\in\mathcal{L}_\infty^m$, a solution to \eqref{sysfxts} is represented as $x(t,x_0,u)$ for $t\ge 0$, where $x(0, x_0,u)=x_0$. If $x_0$ and $u$ are clear from the context, we will use $x(t):=x(t, x_0, u)$. When $x$ is defined for all $t\geq0$, we say that $x$ is forward complete. 

In this paper, we are interested in studying fixed-time (ISS) stability properties for system \eqref{sysfxts}. The following definitions, borrowed from \cite{6104367,LOPEZRAMIREZ2020104775,fxtiss_converse}, formalize these notions: 

\vspace{0.1cm}
\begin{definition}
    The origin of \eqref{sysfxts} with $u=0$ is said to be \emph{uniformly globally fixed-time stable (FxTS)} if there exists $\beta\in\klfx$ such that for each $x_0\in\re^n$, every solution of \eqref{sysfxts} is forward complete and satisfies
    \begin{equation}
    |x(t)|\le \beta(|x_0|, t),~~~\forall~~t\geq0.
    \end{equation}
    %
\end{definition}
\vspace{0.2cm}

\begin{definition}
    The origin of \eqref{sysfxts} is said to be \emph{fixed-time input-to-state stable (FxT-ISS)} if there exists $\beta\in\klfx$ and $\varrho\in\mathcal{K}$ such that for each $x_0\in\mathbb{R}^n$ and $u\in\mathcal{L}_\infty^m$, every solution $x(t)$ of \eqref{sysfxts} is forward complete and satisfies
     \begin{equation*}
         |x(t)|\le \beta(|x_0|, t)+\varrho(|u|_\infty),~~\forall~t\geq0.
     \end{equation*}
\end{definition}
\vspace{0.2cm}

The FxT and FxT-ISS properties of \eqref{sysfxts} can be studied via Lyapunov functions. In particular, 
given a locally Lipschitz continuous function $V:\re^n\to\re_{\geq0}$, 
%
%
we define
\begin{equation*}
    \dot{V}(x,u):=DV(x; f(x,u)).
\end{equation*}
%
Note that $\dot{V}$ also satisfies $\dot{V}(x,u)=D^+V(x(t,x_0, u))$
along the trajectories of the system \eqref{sysfxts} for all $t\ge 0$.
It is also useful to note that if $V$ is $\cc^1$, then $\dot{V}$ satisfies $\dot{V}(x,u):=\nabla V(x)^\top f(x,u)$. When $u(t)\equiv 0$ in \eqref{sysfxts}, we simply use $\dot{V}(x):=\dot{V}(x,0)$. Based on this notation, the following definitions are also borrowed from the standard literature on fixed-time stability \cite{6104367,LOPEZRAMIREZ2020104775,fxtiss_converse}:
%

\vspace{0.1cm}
\begin{definition}\label{def_fxts_lf}
    A locally Lipschitz continuous function $V:\re^n\to\re_{\ge 0}$ is said to be a FxTS Lyapunov function for \eqref{sysfxts} if
    \begin{enumerate}
        \item There exists $\underline{\alpha}, \overline{\alpha}\in\ck_\infty$ such that
        \begin{equation}\label{sandw}
            \underline{\alpha}(|x|)\le V(x)\le \overline{\alpha}(|x|),~~~\forall~x\in\mathbb{R}^n.
        \end{equation}
        \item There exists $\Psi\in\ckf$ such that the following holds:
\begin{equation}\label{fxts_lyapunov_bd}
       \dot{V}(x)\le -\Psi(V(x)),~~\forall~x\in\mathbb{R}^n.
    \end{equation}
    \end{enumerate}
\end{definition}
\vspace{0.2cm}

\begin{remark}
In \cite[Lemma 1]{6104367}, it was shown that the origin of \eqref{sysfxts} is FxTS if it admits a FxTS Lyapunov function. Moreover, since $\Psi$ in \eqref{fxts_lyapunov_bd} is given by $\Psi(s)=c_1 s^{p_1}+c_2s^{p_2}$ with $c_1,c_2>0, p_1\in (0,1), p_2>1$, the settling time of $\beta$ satisfies the following upper bound for all $x_0\in \re^n$:
\begin{equation*}
    T(x_0)\le \frac{1}{c_1(1-p_1)}+\frac{1}{c_2(p_2-1)}.
\end{equation*}
\end{remark}

\vspace{0.2cm}
\begin{definition}
    A locally Lipschitz function $V:\re^n\to\re_{\ge 0}$ is said to be a FxT-ISS Lyapunov function for the system \eqref{sysfxts} if it satisfies item 1) of Definition \ref{def_fxts_lf} and 
    there exists $\chi\in\ck_\infty$ and $\Psi\in\ckf$ such that the following holds:
    \begin{equation}\label{fxtiss_imp}
        V(x)\ge \chi(|u|)~~ \implies ~~\dot{V}(x,u)\le -\Psi(V(x)),
    \end{equation}
    for all $x\in\re^n$ and $u\in\re^m$.\QEDB
\end{definition}
\vspace{0.2cm}

It is shown in \cite{LOPEZRAMIREZ2020104775} that if \eqref{sysfxts} admits a FxT-ISS Lyapunov function, then it is FxT-ISS. Converse results can also be found in \cite{fxtiss_converse}.

\section{A LYAPUNOV-BASED FXT-ISS SMALL GAIN THEOREM}\label{sec_main}
\subsection{Main Result}
We consider interconnected systems of the form:
\begin{subequations}\label{sysnu}
    \begin{align}
        \dot{x}_1&=f_1(x_1,x_2, u_1)\label{sysnu1}\\
        \dot{x}_2&=f_2(x_1,x_2, u_2),\label{sysnu2}
    \end{align}
\end{subequations}
where $x_i\in\re^{n_i}$ are the states, $u_i\in \mathcal{L}_\infty^{m_i}$ are the inputs, and $f_i:\re^{n_1}\times\re^{n_2}\times \re^{m_i}\to\re^{n_i}$ are continuous functions with $f_1(0,0,0)=f_2(0,0,0)=0$, see Figure \ref{block}. \tcb{Since the vector fields $f_i$ are continuous and the inputs are essentially bounded, the existence of solutions is guaranteed. In this sense, all of our derived stability results are then understood to hold for all maximal solutions}. To study \eqref{sysnu}, we propose a Lyapunov-based small gain approach and view each $x_i$ subsystem as having input $d_i(t)$, where $d_i(t)$ is given by
\begin{equation*}
    d_i(t)=[x_{3-i}(t)\quad u_i(t)]^\top.
\end{equation*}
We now make the following assumption on the system \eqref{sysnu}:

\begin{figure}[t!]
  \centering \includegraphics[width=0.42\textwidth]{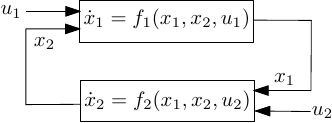}
    \caption{A block diagram depicting the interconnection \eqref{sysnu}.} \label{block}
\end{figure}

\vspace{0.1cm}
\begin{assumption}\label{assump_sysnu}
    Consider the system \eqref{sysnu} and, for each $i=1,2$, there exists locally Lipschitz functions $V_i:\re^{n_i}\to\re_+$ satisfying the following conditions:
    \begin{enumerate}
        \item There exists $\underline{\alpha}_i, \overline{\alpha}_i\in\ck_\infty$ such that
        \begin{equation}\label{assump_sandw}
            \underline{\alpha}_i(|x_i|)\le V_i(x_i)\le \overline{\alpha}_i(|x_i|),
        \end{equation}
        for all $x_i\in\re^{n_i}$.
        \item There exists $\chi_i, \gamma_{i}\in\ck_\infty$ and $\Psi_i\in\ckf$ such that
    \begin{align}\label{fxtiss_imp_assump}
        &V_i(x_i)\ge \max\{\gamma_{i}\tcb{(V_{3-i}(x_{3-i}))}, \chi_i(|u_i|)\}\notag\\&
        \implies \dot{V}_i(x_i,d_i)\le -\Psi_i(V_i(x_i)),
    \end{align}
    for all $x_i\in\re^{n_i}$, \tcb{$x_{3-i}\in\re^{n_{3-i}}$}, and $u_i\in\re^{m_i}$. \QEDB
    \end{enumerate}
\end{assumption}
\vspace{0.1cm}

In other words, Assumption \ref{assump_sysnu} asks that each $x_i$ subsystem is FxT-ISS with respect to the ``inputs" $x_{3-i}$ and $u_i$, and that there exists a FxT-ISS Lyapunov function that certifies this property. Note that in \eqref{fxtiss_imp_assump}, if $V_i$ is $\cc^1$, then $\dot{V}_i(x_i,d_i)=\nabla V_i(x_i)^\top f_i(x_1,x_2,u_i)$.
Although \eqref{fxtiss_imp_assump} is not exactly in the form \eqref{fxtiss_imp}, it can be placed into the form \eqref{fxtiss_imp} by using \eqref{assump_sandw}. 

\vspace{0.1cm}
We can now state the first main result of the paper:

\vspace{0.1cm}
\begin{thm}[FxT-ISS Small-Gain Theorem]\label{thm_sgt}
    Let system \eqref{sysnu} satisfy Assumption \ref{assump_sysnu} and suppose that, for each $i=1,2$, there exists $\hat{\gamma}_{i}\in\ckp\cup\cki$ such that $\hat{\gamma}_{i}(s)\ge\gamma_{i}(s)$ for all $s>0$ and that the following small-gain condition holds:
    \begin{equation}
    \hat{\gamma}_{1}\circ\hat{\gamma}_{2}(s)<s,~~~\forall~s>0,
    \end{equation}
    %
    %
    Then, the origin of system \eqref{sysnu} is FxT-ISS.\QEDB
\end{thm}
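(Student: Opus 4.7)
The plan is to follow the classical Jiang--Teel--Praly-style Lyapunov small-gain construction---building a composite Lyapunov function $V(x)=\max\{V_1(x_1),\sigma(V_2(x_2))\}$ for a suitably chosen $\sigma\in\ck_\infty$---and to exploit the restriction $\hat{\gamma}_i\in\ckp\cup\cki$ together with the polynomial structure of $\ckf$ to guarantee that this composite function is in fact a FxT-ISS Lyapunov function for \eqref{sysnu}. The conclusion will then follow from the Lyapunov characterization of FxT-ISS in \cite{LOPEZRAMIREZ2020104775}.

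First I would rewrite the small-gain hypothesis $\hat{\gamma}_1\circ\hat{\gamma}_2(s)<s$ in its equivalent form $\hat{\gamma}_1(s)<\hat{\gamma}_2^{-1}(s)$ on $(0,\infty)$ and construct $\sigma\in\ck_\infty$ with $\hat{\gamma}_1(s)<\sigma(s)<\hat{\gamma}_2^{-1}(s)$ for all $s>0$. The assumption $\hat{\gamma}_i\in\ckp\cup\cki$ is precisely what permits $\sigma$ to be chosen as a polynomial-type $\ck_\infty$ function---for example, a single power $\sigma(s)=ks^q$ in the simplest case, or a two-term sum of powers---that is $\mathcal{C}^1$ with $\sigma'>0$ and whose inverse $\sigma^{-1}$ is also polynomially structured. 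Using item~1) of Assumption \ref{assump_sysnu} and the continuity of $\sigma$, I would then verify the sandwich bound $\underline{\alpha}(|x|)\le V(x)\le\overline{\alpha}(|x|)$ for appropriate $\underline{\alpha},\overline{\alpha}\in\ck_\infty$.

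Next I would bound $D^+V$ along trajectories in the two open regions $\mathcal{R}_1=\{V_1(x_1)>\sigma(V_2(x_2))\}$ and $\mathcal{R}_2=\{V_1(x_1)<\sigma(V_2(x_2))\}$, handling the boundary set by the standard fact that the Dini derivative of a pointwise maximum is bounded by the maximum of the component Dini derivatives. In $\mathcal{R}_1$, we have $V_1>\sigma(V_2)>\hat{\gamma}_1(V_2)>\gamma_1(V_2)$, so item~2) of Assumption \ref{assump_sysnu} applied to the $x_1$ subsystem yields $D^+V=D^+V_1\le -\Psi_1(V_1)=-\Psi_1(V)$ whenever $V\ge\chi_1(|u_1|)$. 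Symmetrically, in $\mathcal{R}_2$ we have $V_2>\hat{\gamma}_2(\sigma(V_2))>\hat{\gamma}_2(V_1)>\gamma_2(V_1)$, yielding $D^+V=\sigma'(V_2)\,D^+V_2\le -\sigma'(V_2)\Psi_2(V_2)$ whenever $V\ge\sigma(\chi_2(|u_2|))$.

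The main obstacle will be to show that the regional bound $-\sigma'(V_2)\Psi_2(V_2)$ in $\mathcal{R}_2$, once re-expressed as a function of $V=\sigma(V_2)$ via $V_2=\sigma^{-1}(V)$, admits a uniform majorant of the form $-\tilde{\Psi}_2(V)$ with $\tilde{\Psi}_2\in\ckf$. Here I would use the polynomial structure granted by $\hat{\gamma}_i\in\ckp\cup\cki$ to expand the product $\sigma'(\sigma^{-1}(V))\Psi_2(\sigma^{-1}(V))$ as a finite combination of powers of $V$, and then apply Lemma \ref{lem_sandw} to sandwich each intermediate exponent between a subhomogeneous ($<1$) and a superhomogeneous ($>1$) power. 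Finally, Lemma \ref{lem_fxtbd} will allow me to combine $-\Psi_1(V)$ and $-\tilde{\Psi}_2(V)$ into a single $\Psi\in\ckf$, and the composite ISS gain is built as $\chi(\cdot)=\max\{\chi_1(\cdot),\sigma\circ\chi_2(\cdot)\}$, so that $V(x)\ge\chi(|u|)\Rightarrow D^+V(x,u)\le -\Psi(V(x))$, certifying \eqref{sysnu} as FxT-ISS and completing the proof.
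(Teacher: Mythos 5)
Your overall plan—building a composite max-type Lyapunov function, verifying its sandwich bounds, and doing a two-region Dini-derivative argument—is the classical Jiang--Teel--Praly template and is conceptually sound. The paper also ultimately builds a max of scaled Lyapunov functions. However, there is a genuine gap in your Step~3 (the "main obstacle"), and it is not a bookkeeping detail: it is precisely the technical core that the paper had to develop new machinery (a power-boost and a four-term max) to handle.

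First, the claim that $\sigma'(\sigma^{-1}(V))\,\Psi_2(\sigma^{-1}(V))$ can be "expanded as a finite combination of powers of $V$" is false in general. For an interpolating $\sigma$ that is a multi-term polynomial—say $\sigma(s)=s+s^2$—the inverse $\sigma^{-1}$ is algebraic, not polynomial, so the expression has no such expansion, and Lemma~\ref{lem_sandw}/\ref{lem_fxtbd} cannot be applied term by term. The paper resolves this by never forming $\sigma^{-1}$: it works with the inverse-function-theorem identity $\gamma'(s)=1/(\gamma^{-1})'(\gamma(s))$ for $\gamma\in\cki$ (Lemmas~\ref{lem_invscale}, \ref{lem_invfxt}) so that only $(\gamma^{-1})'$—which \emph{is} polynomial—appears. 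Second, even when $\sigma$ is a finite sum of powers, the scaled decrease rate $\sigma'(s)\Psi_2(s)$ admits a $\ckf$ minorant of the form $\tilde\Psi_2(\sigma(s))$ only when the smallest exponent of $\sigma$ is at least $1$ (Lemma~\ref{lem_k1scale}), or when $\sigma=\gamma^\lambda$ for $\gamma\in\cki$ with $\lambda$ sufficiently large (Lemma~\ref{lem_invscale}). An interpolating $\sigma$ squeezed between $\hat\gamma_1$ and $\hat\gamma_2^{-1}$ generically has fractional exponents below $1$ and is forced by the interpolation constraint, so you cannot freely raise it to a high power without destroying the interpolation. You have no mechanism to guarantee the required minorant exists, and this is exactly where your proof would break down.

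The paper sidesteps both obstacles in one move: rather than a two-term $\max\{V_1,\sigma(V_2)\}$ with an interpolating $\sigma$, it uses $V=\max\{V_{11},V_{12},V_{21},V_{22}\}$ with $V_{ij}=\sigma_{ij}\circ V_i$, where $\sigma_{i1}(s)=s^\lambda$ and $\sigma_{i2}(s)=\hat\gamma_{3-i}(s)^\lambda$. Scaling \emph{both} subsystems by the same power $\lambda$ (chosen large via Lemmas~\ref{lem_k1fxt} and \ref{lem_invfxt}) decouples the "preserve the $\ckf$ decrease" requirement from the "interpolate between gains" requirement; the four-term case analysis together with the small-gain inequality then supplies the needed orderings $V_i\geq\hat\gamma_i(V_j)$ without ever constructing an interpolant. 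Your two-term construction collapses to a special case of the paper's argument only if you first introduce this extra $\lambda$-power scaling on both coordinates; as written, your proof is incomplete because it never establishes that the scaled decrease rate stays in $\ckf$.
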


\vspace{0.1cm}
\begin{proof}
    Let $\sigma_\lambda\in\ckp$ take the form $\sigma_\lambda(s)=s^\lambda$ for $\lambda\ge 1$. For each $i,j\in\{1,2\}$, consider the following functions:
    %
    \begin{align*}
        \sigma_{i1}(s)&:=\sigma_\lambda(s)\\ \sigma_{i2}(s)&:=\sigma_\lambda\circ\hat{\gamma}_{3-i}(s)\\
        V_{ij}(x_i)&:=\sigma_{ij}\circ V_i(x_i).
    \end{align*}
    where $\lambda$ is a sufficiently large constant such that the pairs of functions $\{V_{11},V_{12}\}$ and $\{V_{21},V_{22}\}$ are both FxT-ISS Lyapunov functions pairs for the subsystems \eqref{sysnu1} and \eqref{sysnu2}, respectively. As shown in Lemmas \ref{lem_k1fxt} and \ref{lem_invfxt} in the Appendix, such a $\lambda\ge 1$ always exists.

    %
    %
    %
    Let $x(0)\in\mathbb{R}^n$ and let $x(\cdot)$ be a maximal solution to system \eqref{sysnu} from $x(0)$, defined for all $t\in[0,T_\text{max})$, with $T_\text{max}\in(0,\infty]$. Since each $V_{ij}(x_i)$ is a FxT-ISS Lyapunov function for the $x_i$ subsystem, it satisfies the following properties for $i,j\in \{1,2\}$:
    \begin{enumerate}
        \item The inequalities
        \begin{equation*}
            \underline{\alpha}_{ij}(|x_i|)\le V_{ij}(x_i)\le \overline{\alpha}_{ij}(|x_i|),
        \end{equation*}
        hold for all $x_i\in\re^{n_i}$, where $\underline{\alpha}_{ij}=\sigma_{ij}\circ\underline{\alpha}_i$ and $\overline{\alpha}_{ij}=\sigma_{ij}\circ\overline{\alpha}_i$.
        \item There exists $\Psi_{ij}\in\ckf$ such that:
        \begin{align}\label{subiss}
            &V_{ij}(x_i)\ge \max\{\gamma_{ij}\circ V_{3-i}(x_{3-i}), \chi_{ij}(|u_i|)\}\notag\\&\implies \dot{V}_{ij}(x_i,d_i)\le -\Psi_{ij}(V_{ij}(x_i)),
        \end{align} 
        for all $x_i\in\re^{n_i},$ $x_{3-i}\in\re^{n_{3-i}}$, and $u_i\in\re^{m_i}$, where $\gamma_{ij}:=\sigma_{ij}\circ\hat{\gamma}_i$ and $\chi_{ij}:=\sigma_{ij}\circ \chi_i$.
    \end{enumerate}
    Next, for all $x\in\mathbb{R}^n$, we define
    \begin{align}
        V(x)&:=\max\left\{V_{11}(x_1), V_{22}(x_2), V_{21}(x_2),V_{12}(x_1)\right\},\label{lfmax}
    \end{align}
    and for all $s\ge 0$:
    \begin{align*}
        \underline{\alpha}(s)&=\min_{i,j=1,2}\underline{\alpha}_{ij}\left(\frac{s}{\sqrt{2}}\right),\quad
        \overline{\alpha}(s)=\max_{i,j=1,2}\overline{\alpha}_{ij}(s).
    \end{align*}
    We can verify that $V$ satisfies \eqref{sandw} with $\underline{\alpha}, \overline{\alpha}\in\ck_\infty$. Let $\chi(\cdot):=\max_{i,j=1,2}\chi_{i,j}(\cdot)$, $u(t):=[u_1(t), u_2(t)]^\top$. For each $t\in [0, T_\text{max})$ where $V(x(t))= V_{ij}(x_i(t))$ and $V(x(t))\ge \chi(|u(t)|)$, we have four possible cases (to simplify presentation, we omit the time dependency from $x(t)$):
    \begin{enumerate}
        \item If $i=1$ and $j=1$, we have that $\sigma_{11}\circ V_1(x_1)\ge \sigma_{22}\circ V_2(x_2)$. This implies $V_1(x_1)\ge \hat{\gamma}_1\circ V_2(x_2)$, and thus $V_{11}(x_1)\ge \gamma_{11}\circ V_2(x_2)$. By \eqref{subiss}, we have $\dot{V}_{11}(x_1,d_1)\le -\Psi_{11}(V_{11}(x_1))$.  
        \item If $i=1$ and $j=2$, we have $\sigma_{12}\circ V_1(x_1)\ge \sigma_{21}\circ V_2(x_2)$, which implies $\hat{\gamma}_2\circ V_1(x_1)\ge V_2(x_2)$. By the small gain condition, we also have $V_1(x_1)\ge \hat{\gamma}_1\circ V_2(x_2)$, which implies $V_{12}(x_1)\ge \gamma_{12}\circ V_2(x_2)$. By \eqref{subiss}, we have $\dot{V}_{12}(x_1,d_1)\le -\Psi_{12}(V_{12}(x_1))$. 
        \item If $i=2$ and $j=1$, we have $\sigma_{21}\circ V_2(x_2)\ge \sigma_{12}\circ V_1(x_1)$, which implies $V_2(x_2)\ge \hat{\gamma}_2\circ V_1(x_1)$, and thus $V_{21}(x_2)\ge \gamma_{21}\circ V_1(x_1)$. By \eqref{subiss}, we have $\dot{V}_{21}(x_2,d_2)\le -\Psi_{21}(V_{21}(x_2))$. 
        \item If $i=2$ and $j=2$, we have $\sigma_{22}\circ V_2(x_2)\ge \sigma_{11}\circ V_1(x_1)$, which implies $\hat{\gamma}_1\circ V_2(x_2)\ge V_1(x_1)$. By the small gain condition, we have $V_2(x_2)\ge \hat{\gamma}_2\circ V_1(x_1)$, which implies $V_{22}(x_2)\ge \gamma_{22}\circ V_1(x_1)$. By \eqref{subiss}, we have $\dot{V}_{22}(x_2,d_2)\le -\Psi_{22}(V_{22}(x_2))$. 
    \end{enumerate}
    Since $\Psi_{i,j}\in\ckf$ for each $i,j$, it follows from Lemma \ref{lem_fxtbd} that there exists some $\Psi\in\ckf$ such that 
    \begin{equation*}
        \Psi(s)\le \min_{i,j=1,2}\Psi_{ij}(s),\quad \forall s\ge 0.
    \end{equation*}
    Let $I(t):=\{(i,j) : V(x(t))=V_{ij}(x_i(t))\}$ for $t\in [0, T_\text{max})$,
    and consider the upper-right Dini derivative of $V$ along the trajectories of \eqref{sysnu}, which satisfies
    \begin{align*}
        D^+V(x(t))
        &=\max_{(i,j)\in I(t)} D^+ V_{ij}(x_i(t))\\
        &\le \max_{(i,j)\in I(t)}-\Psi_{ij}(V_{ij}(x_i(t)))\\
        &\le \max_{(i,j)\in I(t)} -\Psi(V_{ij}(x_i(t)))\\
        &=-\Psi(V(x(t))),
    \end{align*}
    where the first equality follows from \cite[Lemma 2.9]{giorgi1992dini}. The trajectories are bounded on $[0, T_\text{max})$, and hence they are defined for all $t\ge 0$. This concludes the proof.
\end{proof}
\vspace{-0.4cm}
\tcb{\begin{remark}An estimate on the settling time for the interconnected system can be obtained from $\Psi$, which depends on a multitude of parameters that appear in the Appendix, some of which are defined implicitly. While this may be computed numerically, it would be of interest for future work to explicitly characterize how the settling time estimate of the interconnected system depends on the settling time estimates of the subsystems.\QEDB
\end{remark}}

Our approach is inspired by the ideas from \cite{jiang1996lyapunov,sgfinite}, in that we leverage the gain functions $\gamma_i$ to construct a FxT-ISS Lyapunov function candidate defined in the max form \eqref{lfmax}. To ensure that the FxT-ISS Lyapunov function properties are preserved, we also utilize the power-function-based scaling technique introduced in \cite{sgfinite}. However, since \cite{sgfinite} studies \emph{finite time} stability notions, it is sufficient in their setting for the gains $\hat{\gamma}_i(\cdot)$ to only have a class $\ckp$ approximation near the origin. On the other hand, since we consider the global notion of \emph{fixed-time} stability, we require that the class $\ckp$ approximation holds globally. One challenge that arises is that if $\hat{\gamma}_1,\hat{\gamma}_2\in\ckp$, the small-gain condition will never be satisfied if there is a $\hat{\gamma}_i$ containing terms with different powers. To address this limitation, we also consider cases where $\hat{\gamma}_i\in\cki$, and we show in Lemma \ref{lem_invfxt} of the Appendix that such functions can also be appropriately scaled to preserve the FxT-ISS Lyapunov function property. \tcb{It is also useful to note that if $\gamma_i\in\ckp\cup\cki$, which is a homogeneous and power-type form that commonly arises in Lyapunov-based FxT ISS analysis, one can simply pick $\hat{\gamma}_i(s)=\gamma_i(s)$.}

\vspace{0.1cm}
\begin{remark}
    While most functions in the class $\cki$ do not have a closed-form expression, this is not particularly problematic for the use of Theorem \ref{thm_sgt}. Indeed, if $\hat{\gamma}_i\in\cki$, then $\hat{\gamma}_i^{-1}\in\ckp$, so we can use the following equivalence 
    \begin{equation*}
        V_i(x_i)\ge \hat{\gamma}_i(V_j(x_j))\Longleftrightarrow \hat{\gamma}_i^{-1}(V_i(x_i))\ge V_j(x_j),
    \end{equation*}
    to verify \eqref{fxtiss_imp_assump} in a simplified manner. Moreover, if $\hat{\gamma}_1\in\ckp$ and $\hat{\gamma}_2\in\cki$, we can leverage the following fact
    \begin{equation}\label{sgeq}
        \hat{\gamma}_1\circ\hat{\gamma}_2(s)<s\Longleftrightarrow \hat{\gamma}_1(s)<\hat{\gamma}_2^{-1}(s),
    \end{equation}
    to verify the small-gain condition. Since $\hat{\gamma}_2^{-1}\in\ckp$, the condition $\hat{\gamma}_1(s)<\hat{\gamma}_2^{-1}(s)$ can be checked using straightforward methods, such as Lemma \ref{lem_sandw}. 
    \QEDB
\end{remark}

\vspace{-0.3cm}
\subsection{A Second-Order Illustrative Example}
To illustrate the flexibility of Theorem \ref{thm_sgt}, we consider the interconnection between two FxTS systems containing arbitrarily many homogeneous cross terms. In particular, consider the dynamics
\begin{subequations}\label{ex0}
    \begin{align}
        \dot{x}&=f(x)+\varepsilon_1\sum_{i=1}^{n_1}\sg{y}^{\eta_{1,i}}+u_1\label{ex_x}\\
        \dot{y}&=g(y)+\varepsilon_2\sum_{j=1}^{n_2} \sg{x}^{\eta_{2,j}}+u_2\label{ex_y},
    \end{align}
\end{subequations}
where $\sg{\cdot}^a=|\cdot|^a\text{sgn}(\cdot)$, with $x,y,u_1,u_2\in\re$, and $\eta_{1,i}>0, \eta_{2,j}>0$ for all $i\in [n_1], j\in [n_2]$. We assume that the origins of the systems $\dot{x}=f(x)$ and $\dot{y}=g(y)$ are FxTS. We also assume that they admit FxTS Lyapunov functions that satisfy the following assumption
\begin{assumption}\label{assump_ex0}
    The functions $f:\re\to\re$ and $g:\re\to\re$ are continuous and there exist $\cc^1$ functions $V:\re\to\re_{\ge 0}$ and $W:\re\to\re_{\ge 0}$ satisfying the following properties:
    \begin{enumerate}
        \item There exists $\un{c}_1,\ov{c}_1,\un{c}_2, \ov{c}_2>0$ and $\kappa_1, \kappa_2>0$ such that
        \begin{equation*}
            \un{c}_1 |x|^2\le V(x)\le \ov{c}_1 |x|^2,\quad  \un{c}_2 |y|^2\le W(y)\le \ov{c}_2 |y|^2
        \end{equation*}
        and $|\nabla V(x)|\le \kappa_1 |x|,\ |\nabla W(y)|\le \kappa_2|y|$ for all $x,y\in\re$.
        \item There exists $a_1, a_2>0, p_1, p_2\in (0,1)$, and $q_1, q_2>1$ such that:
    \begin{subequations}
    \begin{align*}
        \nabla V(x)f(x)&\le -a_1 V^{p_1}(x)-a_1 V^{q_1}(x)\\ \nabla W(y)g(y)&\le -a_2 W^{p_2}(y)-a_2 W^{q_2}(y),
    \end{align*}
    \end{subequations}
    for all $x,y\in\re$.\QEDB
    \end{enumerate}
\end{assumption}
We will show that, under the natural conditions of Assumption \ref{assump_ex0}, system \eqref{ex0} is FxT-ISS, as long as $\varepsilon_1, \varepsilon_2$ are sufficiently small in magnitude and the exponents $\eta_{1,i}, \eta_{2,j}$ satisfy some conditions specified below.
\begin{prop}\label{thm_ex}
    Suppose that Assumption \ref{assump_ex0} holds, and consider the interconnected system \eqref{ex0}, where $\eta_{1,i}, \eta_{2,j}>0$ and $\eta_{1,i} \eta_{2,j}\in [p_2, q_2]$ for each $i\in [n_1]$, and $j\in [n_2]$, and
    \begin{equation}\label{conditionthm2}
        \frac{\max_j \eta_{2,j}}{\min_k \eta_{2,k}}\le \frac{q_2}{p_2}.
    \end{equation}
    Then, there exists $\varepsilon>0$ such that if
    \begin{equation}\label{epbd}
        |\varepsilon_2|\max_{j}|\varepsilon_1|^{\eta_{2,j}}<\varepsilon,
    \end{equation}
     then, the origin is FxT-ISS for system \eqref{ex0}.\QEDB
\end{prop}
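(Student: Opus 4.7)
The plan is to verify the hypotheses of Theorem~\ref{thm_sgt} using $V$ and $W$ from Assumption~\ref{assump_ex0} as FxT-ISS Lyapunov function candidates for the $x$- and $y$-subsystems of \eqref{ex0}, treating the pairs $d_1=(y,u_1)$ and $d_2=(x,u_2)$ as the respective inputs. Assumption~\ref{assump_ex0}(1) delivers the class-$\ck_\infty$ sandwich bound of Assumption~\ref{assump_sysnu}(1) immediately.

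First I would differentiate $V$ along \eqref{ex_x}, use $|\nabla V(x)|\le \kappa_1|x|$, and, for each cross term $\kappa_1|\varepsilon_1||x|\,|y|^{\eta_{1,i}}$, apply Young's inequality (splitting into two pieces with Young exponents tuned so that the $|x|$-powers become $|x|^{2p_1}$ and $|x|^{2q_1}$, convertible to $V^{p_1}$ and $V^{q_1}$ via $\underline c_1|x|^2\le V$). Together with $\underline c_2|y|^2\le W$, this yields a bound of the form
\begin{equation*}
\dot V \;\le\; -a_1(V^{p_1}+V^{q_1})+|\varepsilon_1|\,E_1(V^{p_1}+V^{q_1})+|\varepsilon_1|\sum_i D_i W^{\nu_{1,i}}+R_{u_1},
\end{equation*}
where $\nu_{1,i}$ denotes the $W$-exponent produced by the Young split and $R_{u_1}$ collects the $u_1$-contribution. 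An analogous computation for $\dot W$ produces a bound with $\sum_j V^{\nu_{2,j}}$-terms and a $(W^{p_2}+W^{q_2})$-factor.

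Next I would pick the gains. For the $y$-subsystem, set $\gamma_2(V)=c_2 V^{\mu_2}$ with $\mu_2\in\bigl[\max_j \eta_{2,j}/q_2,\;\min_j \eta_{2,j}/p_2\bigr]$; this interval is nonempty \emph{exactly} under \eqref{conditionthm2}. The implication $W\ge \gamma_2(V)$ then gives $V^{\eta_{2,j}}\le c_2^{-\eta_{2,j}/\mu_2}W^{\eta_{2,j}/\mu_2}$ with $\eta_{2,j}/\mu_2\in[p_2,q_2]$, and Lemma~\ref{lem_sandw} absorbs this into a constant multiple of $W^{p_2}+W^{q_2}$. For the $x$-subsystem, set $\gamma_1(W)=c_1 W^{1/\mu_2}$; the constraint $\eta_{1,i}\eta_{2,j}\in[p_2,q_2]$ combined with the admissible range of $\mu_2$ pins the exponents appearing after substituting $W\le(V/c_1)^{\mu_2}$ into the band where Lemma~\ref{lem_sandw} collapses them into $V^{p_1}+V^{q_1}$. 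The ISS gains $\chi_i(r)=c_{u_i}r^2$ are then chosen with $c_{u_i}$ large enough to dominate the $u_i$-contributions through the same Young/Lemma~\ref{lem_sandw} mechanism. With these choices the small-gain composition is \emph{linear},
\begin{equation*}
\hat\gamma_1\circ\hat\gamma_2(s)\;=\;c_1\,c_2^{1/\mu_2}\,s,
\end{equation*}
so the small-gain hypothesis of Theorem~\ref{thm_sgt} reduces to the single algebraic inequality $c_1 c_2^{1/\mu_2}<1$. Tuning this margin against the constants $c_1^{-\mu_2\eta_{1,i}}$ and $c_2^{-\eta_{2,j}/\mu_2}$ that multiply $|\varepsilon_1|$ and $|\varepsilon_2|$ in the residual terms, and observing that the $y$-side is dominated by the factor $|\varepsilon_2|^{\eta_{2,j}}$ maximized over $j$ after absorbing the $c_2$-scaling, yields a sufficient condition of precisely the form \eqref{epbd}. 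Under \eqref{epbd}, each subsystem satisfies the FxT-ISS Lyapunov implication with $\Psi_i(s)=(a_i/2)(s^{p_i}+s^{q_i})\in\ckf$, and Theorem~\ref{thm_sgt} delivers FxT-ISS of \eqref{ex0}.

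The main obstacle is the exponent bookkeeping: every power of $V$ or $W$ that appears after Young's inequality and after substituting the gain bound must land in $[p_1,q_1]$ or $[p_2,q_2]$ so that Lemma~\ref{lem_sandw} reshapes the estimate into the $\ckf$-form required by Theorem~\ref{thm_sgt}. The two hypotheses on $\eta_{1,i},\eta_{2,j}$ were designed precisely to make this feasible with the single free parameter $\mu_2$; the quantitative bound \eqref{epbd} is then a balance-of-constants consequence of reconciling the small-gain margin with the Young and scaling constants.
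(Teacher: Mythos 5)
Your overall plan (take $V,W$ from Assumption~\ref{assump_ex0}, derive the FxT-ISS Lyapunov implication for each subsystem, then invoke Theorem~\ref{thm_sgt}) matches the paper, but two of your intermediate steps fail.

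First, the Young split you describe cannot directly produce $|x|^{2p_1}$ as a term: Young's inequality $ab\le a^{r}/r+b^{s}/s$ needs $r>1$, so ``tuning'' the $|x|$-exponent to $2p_1$ requires $p_1>1/2$, which is not assumed. The paper avoids this by applying the simple $(2,2)$-Young to get an $|x|^2\sim V$ correction, then using Lemma~\ref{lem_sandw} to split $V\le V^{p_1}+V^{q_1}$; the $W$-exponents produced this way are just $\eta_{1,i}$ (not a Young-distorted $\nu_{1,i}$). Second, and more seriously, your pure-power gain $\gamma_1(W)=c_1W^{1/\mu_2}$ forces $\mu_2\eta_{1,i}\in[p_1,q_1]$ for \emph{every} $i$ in order to collapse $W^{\eta_{1,i}}\le (V/c_1)^{\mu_2\eta_{1,i}}$ back into $V^{p_1}+V^{q_1}$ via Lemma~\ref{lem_sandw}. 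The hypotheses of the Proposition only tie the $\eta$'s to $p_2,q_2$, not to $p_1,q_1$, so this can fail: take $p_1=0.99$, $q_1=1.01$, $p_2=0.5$, $q_2=2$, $\eta_{2,1}=1$, $\eta_{1,1}=0.5$, $\eta_{1,2}=1$ — all hypotheses hold, yet no $\mu_2$ puts both $0.5\mu_2$ and $\mu_2$ into $[0.99,1.01]$.

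The paper sidesteps both issues by never matching exponents to $[p_1,q_1]$ at the subsystem level. It sets $\gamma_i:=\alpha_i^{-1}\circ\sigma_i$ where $\alpha_i(s)=\tfrac{a_i}{4}(s^{p_i}+s^{q_i})$ and $\sigma_i(s)=\text{const}\cdot\sum_j s^{\eta_{i,j}}$; then $V\ge\gamma_1(W)$ is \emph{equivalent} to $\alpha_1(V)\ge\sigma_1(W)$, so the disturbance term is directly dominated by $\alpha_1(V)$ with no Lemma~\ref{lem_sandw} step and no $[p_1,q_1]$ constraint. The exponent bookkeeping you worry about is handled once, in Lemma~\ref{lem_diss}, by majorizing $\gamma_1$ with the \emph{sum-of-powers} function $\hat\gamma_1(s)=\tfrac{c_1}{\delta_1}\sum_j s^{\eta_{1,j}}\in\ckp$ (not a single power) and $\gamma_2$ with $\hat\gamma_2\in\cki$, then comparing two sum-of-powers via Lemma~\ref{lem_sandw}. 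Taking advantage of $\ckp$-sums rather than monomials is exactly what makes the argument go through under the stated hypotheses.
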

\vspace{0.1cm}
\begin{proof}
   Consider the FxT-ISS Lyapunov functions $V(x)$ and $W(y)$ from Assumption \ref{assump_ex0}, respectively. Since we can always scale $V$ and $W$ while still satisfying Assumption \ref{assump_ex0}, we will assume, without loss of generality, that $\un{c}_1=\un{c}_2=1$. Along the trajectories of the $x$ subsystem, we have
    \begin{align*}
 \dot{V}(x)
        &\le -a_1 V^{p_1}(x)-a_1 V^{q_1}(x)+\frac{a_1}{4} V(x)\\
        &~~~~+\frac{\kappa_1^2}{a_1} \left(\varepsilon_1\sum_{i=1}^{n_1} \sg{y}^{\eta_{1,i}}+u_1\right)^2\\
        &\le-\frac{3a_1}{4} V^{p_1}(x)-\frac{3a_1}{4} V^{q_1}(x)\\
        &~~~~+\frac{(n_1+1)\kappa_1^2\varepsilon_1^2}{ a_1}\sum_{i=1}^{n_1} W^{\eta_{1,i}}(y)+\frac{(n_1+1)\kappa_1^2}{a_1} u_1^2,
    \end{align*}
where the first inequality follows by Young's inequality with item 1) in Assumption \ref{assump_ex0}, the second inequality follows by Lemma \ref{jensenlemma}, and the last inequality follows again by item 1) in Assumption \ref{assump_ex0}.

We can perform the same computations for the $y$-subsystem to obtain the following inequality:
\begin{align*}
    \dot{W}(y)&\le -\frac{3a_2}{4} W^{p_2}(y)-\frac{3a_2}{4} W^{q_2}(y)\\&~~~+\frac{(n_2+1)\kappa_2^2\varepsilon_2^2}{ a_2}\sum_{i=1}^{n_2} V^{\eta_{2,i}}(x)+\frac{(n_2+1)\kappa_2^2}{ a_2} u_2^2.
\end{align*}
Next, for $i=1,2$, we define the following functions:
\begin{subequations}
    \begin{align}
        \alpha_i(s)&:=\frac{a_i}{4}(s^{p_i}+s^{q_i})\\
        \sigma_i(s)&:=\frac{(n_i+1)\kappa_i^2\varepsilon_i^2}{ a_i}\sum_{j=1}^{n_i} s^{\eta_{i,j}}\\
        \hat{\chi}_i(s)&:=\frac{(n_i+1)\kappa_i^2}{ a_i}s^2.
    \end{align}
\end{subequations}
Let $\gamma_i:=\alpha_i^{-1}\circ \sigma_i$ and $\chi_i:=\alpha_i^{-1}\circ\hat{\chi}_i$ for $i=1,2$. Then, we obtain:
\begin{align*}
    V(x)&\ge \max\{\gamma_1(W(y)), \chi_1(|u_1|) \}\Rightarrow \dot{V}(x)\le -\alpha_1(V(x))\\
    W(y)&\ge \max\{\gamma_2(V(x)), \chi_2(|u_2|) \}\Rightarrow \dot{W}(y)\le -\alpha_2(W(y)).
\end{align*}
By Lemma \ref{lem_diss} in the Appendix, there exists $C>0$ such that the inequality
\begin{equation}\label{ex0_cond}
    \frac{(n_2+1)\kappa_2^2\varepsilon_2^2}{a_2}\max_j \left(\frac{(n_1+1)\kappa_1^2\varepsilon_1^2}{a_1}\right)^{\eta_{2,j}}< C,
\end{equation}
implies the existence of $\hat{\gamma}_1, \hat{\gamma}_2\in\ckp\cup\cki$ such that $\hat{\gamma}_i(s)\ge\gamma_i(s)$ and $\hat{\gamma}_i\circ\hat{\gamma}_{3-i}(s)<s$ for all $s>0$ and $i=1,2$. Inequality \eqref{ex0_cond} holds if $\varepsilon_1,\varepsilon_2$ satisfy the following condition:

\vspace{-0.5cm}
\begin{small}
\begin{equation*}
    |\varepsilon_2|\max_j|\varepsilon_1|^{\eta_{2,j}}<\sqrt{\frac{Ca_2}{(n_2+1)\kappa_2^2}\min_j \left(\frac{a_1}{(n_1+1)\kappa_1^2}\right)^{\eta_{2,j}}}:=\varepsilon.
\end{equation*}
\end{small}

\noindent 
Therefore, we can then apply Theorem \ref{thm_sgt} to obtain the result.
\end{proof}
\vspace{0.1cm}

To verify our results numerically, we plot the trajectories of system \eqref{ex0} with varying initial conditions, where the parameters are chosen to yield the following system
\begin{subequations}\label{exdplot}
    \begin{align}        \dot{x}&=-x^\frac13-x^3+0.5\left(\sg{y}^{2}+\sg{y}^{2.5}\right)\tcb{+u_1(t)}\\
        \dot{y}&=-\sg{y}^\frac12+\sg{y}^2-0.3\sg{x}^{\frac49}+0.2\sg{x}^{\frac58}\tcb{+u_2(t)},
    \end{align}
\end{subequations}
\tcb{with $u_1(t)=\sin(1.7t)+\frac12\sin(5.1t)$ and $u_2(t)=\frac45\cos(2.3t)+0.35\sin(6.7t)$.} It can be verified that this choice of parameters satisfies \eqref{epbd}. The trajectories are shown in Figure \ref{fxt1plot}, illustrating the FxT ISS property.
\begin{figure}[t!]
  \centering \includegraphics[width=0.44\textwidth]{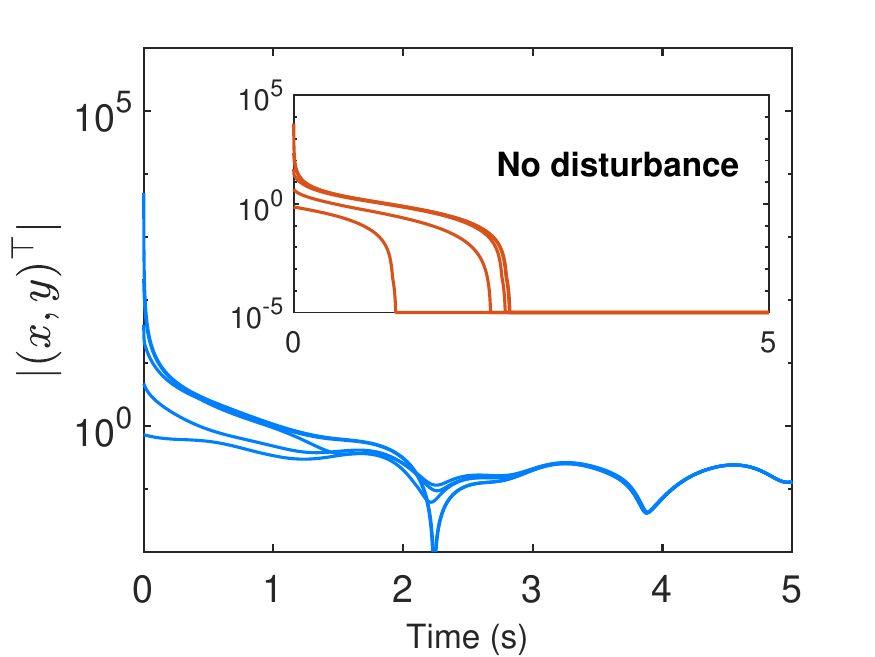}
    \caption{\tcb{Trajectories of system \eqref{exdplot} with varying initializations.}} \label{fxt1plot}
    \vspace{-0.4cm}
\end{figure}

\begin{remark}
The previous example illustrates the applicability of Theorem~\ref{thm_sgt} to settings involving nonlinearities in~$\gamma_i$, which are not directly addressed by the small-gain conditions presented in~\cite{10886634}. For instance, by the structure of $\gamma_1(W(y))$, we notice that $\gamma_1(W(y))>|y|$ either when $y$ is small or large in magnitude, rendering the results of \cite{10886634} incompatible for our particular setting (see also Remark 2 and 4 of \cite{10886634}). 
An important observation is that the presence of multiple exponents of $V$ in the FxT-ISS Lyapunov bounds in \eqref{fxtiss_imp} allows our results to handle $\ckp$ gains and interconnections that contain terms with different exponents. For standard Lyapunov-based nonlinear small-gain theory \cite{jiang1996lyapunov}, the structure of the rate of decrease of the Lyapunov function (upon satisfaction of the ISS condition) results in a limited class of admissible gain functions. Even in the case of finite-time ISS \cite{sgfinite} for interconnections, gain functions in the class $\ckp$ often only contain one term, and the exponents of the different gain functions need to multiply to 1. On the other hand, Theorem \ref{thm_sgt} relaxes this condition by leveraging the different exponents of $V$ and by using Lemma \ref{lem_sandw}. In particular, instead of requiring $\eta_{1,i}\eta_{2,j}=1$ for each $i\in[n_1], j\in[n_2]$, we now only require $\eta_{1,i}\eta_{2,j}\in [p_2, q_2]$, which is a neighborhood of 1.\QEDB
\end{remark}
\vspace{0.1cm}
\begin{remark}
The condition \eqref{epbd} essentially represents a trade-off between the weights of the disturbance terms in each subsystem, which is quite standard in small-gain theory. Moreover, we also allow for arbitrarily many cross terms, as long as the exponents of the cross terms for one of the subsystems are grouped sufficiently ``close" together (this is related to the condition $\frac{\max_j \eta_{2,j}}{\min_k \eta_{2,k}}\le \frac{q_2}{p_2}$ in Proposition \ref{thm_ex}). It is also useful to note that the system \eqref{ex0} is symmetric in $x$ and $y$, so if the conditions $\frac{\max_j \eta_{2,j}}{\min_k \eta_{2,k}}\le \frac{q_2}{p_2}$ and $\eta_{1,i}\eta_{2,j}\in [p_2, q_2]$ do not hold, then we can switch the role of $x$ and $y$ and instead check a new set of conditions. \QEDB
\end{remark}
\vspace{0.1cm}


\section{FIXED-TIME FEEDBACK OPTIMIZATION WITHOUT TIMESCALE SEPARATION}\label{sec_fbkopt}
In this section, we use the FxT-ISS Small-Gain Theorem presented in Section \ref{sec_main} to study feedback optimization problems in dynamical systems with time-varying cost functions, \emph{without using time scale separation}. Specifically, traditional feedback optimization techniques require inducing a timescale separation between the plant and the controller, as it enables the system designer to leverage tools from singular perturbation theory for the stability analysis. \tcb{Such techniques are widely used for the real-time optimization of real-world infrastructure, including power systems, traffic networks, and cyber-physical systems \cite{9075378,8673636,bianchin2022online,9540998}.} 
In the context of FxT stability, this technique was explored recently in \cite{11239427}, where the authors used a composite Lyapunov function to establish fixed-time convergence for certain classes of multi-timescale feedback optimization schemes. However, as acknowledged by the authors, the derived timescale separation requirement can be highly conservative. \tcb{Related ideas were also recently explored in \cite{liu2023singular}, where Lyapunov-based small-gain arguments were used within a generalized singular perturbation framework with state-dependent perturbation functions, including an application to feedback optimization. In contrast, the approach we take here is not based on selecting perturbation functions or enforcing a singular perturbation structure. Instead, we use the proposed FxT-ISS small-gain theorem to show that, under suitable cost function structure, nonsmooth fixed-time gradient feedback achieves FxT-ISS tracking of time-varying optimizers \emph{without requiring a timescale separation} between the plant and controller.} 
\subsection{Model and Problem Statement}
We consider general nonlinear plants of the form
\begin{equation}\label{exsys}
    \dot{x}=f(x,u),
\end{equation}
where $x\in\re^n$ is the state, $u\in\mathcal{L}_N^\infty$ is a measurable and essentially bounded control input, and $f:\re^n\times\re^N\to\re^n$ is a continuous function that satisfies the following assumption:

\vspace{0.1cm}
\begin{assumption}\label{assump_ex}
    There exists a $\cc^1$ $\ell$-Lipschitz function $h:\re^N\to\re^n$ and $\gamma>0$ such that:
    \begin{enumerate}[(a)]
    \item $f(h(u),u)=0$ for all $u\in\mathbb{R}^N$.
    \item There exists $a>0, p\in (0,1)$, and $q>1$ such that the following implication holds:
    \begin{align}\label{ex_fxtiss}
        &|x-h(\hat{u})|\ge\gamma |u-\hat{u}|\implies\\&(x-h(\hat{u}))^\top f(x,u)\le -\frac{a}{2}|x-h(\hat{u})|^{2p}-\frac{a}{2}|x-h(\hat{u})|^{2q},\notag
    \end{align}
    for all $x\in\re^n$, $u\in\re^N$, and $\hat{u}\in\re^N$.\QEDB
    \end{enumerate}
\end{assumption}
\vspace{0.1cm}
\begin{remark}
    In the literature, the mapping $h(u)$ is traditionally referred to as the \emph{quasi-steady state mapping} for the dynamics \eqref{exsys}. It is standard in the feedback optimization literature to assume that this mapping exhibits appropriate stability properties uniformly in $u$, see, e.g \cite{tang2024fixed, 9075378}. We impose a similar condition, but in the context of FxT-ISS. Essentially, item (b) of Assumption \ref{assump_ex} asks that for each fixed reference input $\hat{u}$, the deviation $x-h(\hat{u})$ is FxT-ISS with respect to the deviation on the ``input" $u-\hat{u}$, which is established via the quadratic-like FxT ISS Lyapunov function $|x-h(\hat{u})|^2$. We require that this property holds uniformly in $x, u$, and $\hat{u}$. \tcb{For
suitable controllable or feedback-linearizable plants, this type of
FxT-ISS tracking property can be enforced via standard methods such as homogeneous
feedback \cite{10018220}, terminal/sliding-mode control \cite{zuo2015non}, or implicit Lyapunov
techniques \cite{LOPEZRAMIREZ2020104775}.} \QEDB
\end{remark}
\vspace{0.1cm}
\begin{example}
    One class of systems satisfying Assumption \ref{assump_ex} are those of the following form
    \begin{align}\label{accplant}
    \dot{x}=-\frac{A_1(x-h(u))}{|x-h(u)|^{\tilde{p}}}- \frac{A_2(x-h(u))}{|x-h(u)|^{\tilde{q}}},
\end{align}
where $A_1, A_2\in\re^{n\times n}$ with $A_1, A_2\succ 0$, $\tilde{p}\in (0,1)$, and $\tilde{q}<0$. The system \eqref{accplant} is a significant generalization of the plants considered in \cite{10644358}, where the authors used $m=n$, $A_1=A_2=I_n$, and $h(u)=u$. 

For more insight into system \eqref{accplant}, consider the scalar system $\dot{x}=wx+cv$
where $w,c\in\re\setminus\{0\}$ are known parameters and $v$ is a control input. Then, the following state feedback law:
\begin{equation*}
    v=-\frac{w}{c}h(u)-\frac{2|w|}{c}\left(\frac{x-h(u)}{|x-h(u)|^{\tilde{p}}}+ \frac{x-h(u)}{|x-h(u)|^{\tilde{q}}}\right),
\end{equation*}
where $h$ is $\ell$-Lipschitz and $u\in\re$ is an external input, renders the system into a closed form that satisfies Assumption \ref{assump_ex}. Extensions to multi-variable settings can be done using results from the FxT control literature \cite{6104367}.
\QEDB 
\end{example}

\vspace{0.1cm}
The primary goal is to design an update law on the input $u$ that stabilizes \eqref{exsys}, in a fixed-time, while ensuring that its trajectories converge to the solutions of following the time-varying optimization problem
\begin{subequations}\label{originalfbkproblem}
\begin{align}
    &\min_{x,u}\phi_\theta(x,u)\label{fbk_con}\\
    &\text{subject to: } x=h(u)\label{fbk_const},
\end{align}
\end{subequations}
where the cost function $\phi_\theta:\re^n\times\re^N\to\re$ is $\cc^1$ and depends on a time-varying parameter $\theta\in\re^m$, that evolves according to the following exosystem:
\begin{figure}[t!]
  \centering \includegraphics[width=0.3\textwidth]{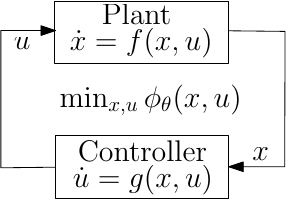}
    \caption{A block diagram depicting the feedback optimization scheme \eqref{ex_fbk_sys}.} \label{fxtfbk}
\end{figure}
\begin{equation}\label{dtheta}    \dot{\theta}=\varepsilon_0\Pi(\theta),\quad\theta\in\Theta,
\end{equation}
where $\varepsilon_0>0$ is a small parameter that controls the rate of change of $\theta$. For the purpose of regularity, the set $\Theta\subset\re^m$ and mapping $\Pi:\re^m\to\re^m$ satisfy the following assumption:
\begin{assumption}\label{assumpt}
    The function $\Pi(\cdot)$ is Lipschitz continuous, and the set $\Theta$ is compact and forward invariant under the dynamics \eqref{dtheta}.\QEDB
\end{assumption}
\vspace{0.1cm}

By substituting \eqref{fbk_const} into \eqref{fbk_con}, we arrive at the following unconstrained quasi-steady state optimization problem
\begin{equation}\label{fbk_uncon}
    \min_u \Phi_\theta(u),
\end{equation}
where $\Phi_\theta(u):=\phi_\theta(h(u),u)$. To guarantee that \eqref{fbk_uncon} is well-defined and has a unique solution for each $\theta\in\re^m$, we impose the following standard assumptions on the cost functions $\Phi_\theta(\cdot)$ \cite{9540998, 10189107}:

\begin{assumption}\label{assump_phi}
    {The function $u\mapsto\Phi_\theta(u)$ is $\cc^1$, $L$-smooth and $\mu$-strongly convex, uniformly in $\theta$, i.e., there exists $L>0$ and $\mu>0$ such that $|\nabla\Phi_\theta(\hat{u})-\nabla\Phi_\theta(u)|\le L|\hat{u}-u|$
    and $\Phi_\theta(\hat{u})\ge \Phi_\theta(u)+\nabla \Phi_\theta(u)^\top (\hat{u}-u)+\frac{\mu}{2}|\hat{u}-u|^2$
    hold for all $u,\hat{u}\in\re^N$. Moreover, there exists a $\mathcal{C}^1$ function $\varphi:\tcb{\Theta}\to\mathbb{R}^N$ such that $\varphi(\theta)=\argmin_{u} \Phi_\theta(u)$.}\QEDB
\end{assumption}
\vspace{0.1cm}

When $\varepsilon_0=0$ in \eqref{dtheta}, the parameter $\theta$ remains constant, which results in a constant solution $\vpt$ to \eqref{fbk_uncon}. However, when $\varepsilon_0$ is large, then the solutions to \eqref{fbk_uncon} may exhibit fast time-variations and be difficult to track without access to an internal model. Therefore, to address \eqref{fbk_uncon}, we propose a fixed-time gradient-based feedback scheme and study FxT-ISS of the interconnection with respect to the ``input" $\varepsilon_0\Pi(\theta)$.
\subsection{Fixed-Time Gradient-Based Feedback}
To design the controller for system \eqref{exsys}, let $\xi_1\in (0,1)$, $\xi_2<0$, $\xi=(\xi_1,\xi_2)$, and consider the function $\fx:\mathbb{R}^N\to\mathbb{R}^N$ given by:
\begin{equation}\label{scale_function}
    \fx(s)=\frac{s}{|s|^{\xi_1}}+\frac{s}{|s|^{\xi_2}},
\end{equation}
which is continuous at $s=0$ \tcb{with $\fx(0)=0$}. To solve the quasi-steady state optimization problem \eqref{fbk_uncon} in fixed-time, we can consider a fixed-time gradient flow on $u$:
\begin{equation*}
    \dot{u}=-\fx(\cg_\theta(h(u),u))
\end{equation*}
where
\begin{equation}
    \cg_\theta(x,u)=H(u)^\top \nabla\phi_\theta(x,u), \ \ H(u)^\top=[\bj_{h}(u)^\top\quad\mathbb{I}_N],\label{ex0ghu}
\end{equation}
\tcb{and $\mathbb{I}_N$ denotes the $N$-dimensional identity matrix.}
Note that, using the chain rule, we have $\nabla\Phi_\theta(u)=\cg_\theta(h(u),u)$. Therefore, to obtain a real-time feedback controller to solve problem \eqref{originalfbkproblem}, we replace the steady state approximation, $h(u)$, with its measured value, $x$, leading to the following interconnection:
\begin{subequations}\label{ex_fbk_sys}
\begin{align}
    \dot{x}&=f(x,u)\\
    \dot{u}&=-\fx(\cg_\theta(x,u))\label{ex_fbk_u}\\
    \dot{\theta}&=\varepsilon_0\Pi(\theta).
\end{align}
\end{subequations}
Note that in \eqref{ex_fbk_sys} we do not impose any timescale separation between the dynamics of $x$ and the dynamics of $u$. To study this closed-loop system, we impose the following mild Lipschitz assumption on $\cg_\theta$, which is also standard in the literature \cite{9075378}:
\begin{assumption}\label{assump_fbk_wlip}
    There exists $K>0$ such that
    \begin{equation*}
        |\cg_\theta(\hat{x},u)-\cg_\theta(x,u)|\le K|\hat{x}-x|
    \end{equation*}
    for all $x,\hat{x}\in\re^n$, $u\in\re^N$, and $\theta\in\re^m$.\QEDB
\end{assumption}
\vspace{0.1cm}

The following theorem characterizes the FxT-ISS properties of system \eqref{ex_fbk_sys}:

\vspace{0.1cm}
\begin{thm}\label{thm_fbk}
    For system \eqref{ex_fbk_sys}, with $\xi_1\in(0,1)$ and $\xi_2<0$, suppose that Assumptions \ref{assump_ex}-\ref{assump_fbk_wlip} and the following condition holds:
    \begin{equation}\label{ex1sgt}
        \mu>K(\ell+\gamma).
    \end{equation}
    Then, there exists $\beta\in\klfx$ and $\varrho\in\ck$ such that, for each $x(0)\in\re^n$, $u(0)\in\re^N$, and $\theta(0)\in\tcb{\Theta}$, each solution to \eqref{ex_fbk_sys} satisfies
    \begin{align*}
        &\left\lvert\begin{bmatrix}
            x(t)-h(\varphi(\theta(t)))\\ u(t)-\varphi(\theta(t))
        \end{bmatrix}\right\rvert\\&~~~~~~~\le \beta\left(\left\lvert\begin{bmatrix}
            x(0)-h(\varphi(\theta(0)))\\ u(0)-\varphi(\theta(0))
        \end{bmatrix}\right\rvert,t\right)+\varrho(|\varepsilon_0\Pi(\theta(t))|_\infty)
    \end{align*}
    for all $t\ge 0$.\QEDB
\end{thm}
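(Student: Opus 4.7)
The plan is to pass to error coordinates and recast \eqref{ex_fbk_sys} as a two-block interconnection driven by the bounded signal $\varepsilon_0\Pi(\theta)$, then invoke Theorem \ref{thm_sgt}. Define $\tx := x - h(\vpt)$ and $\tu := u - \vpt$. Differentiating along \eqref{ex_fbk_sys} and using $\dot\theta=\varepsilon_0\Pi(\theta)$, the error dynamics read
\begin{align*}
\dot{\tx} &= f(x,u) - \bj_h(\vpt)\bj_\varphi(\theta)\varepsilon_0\Pi(\theta),\\
\dot{\tu} &= -\fx(\cg_\theta(x,u)) - \bj_\varphi(\theta)\varepsilon_0\Pi(\theta).
\end{align*}
Assumptions \ref{assumpt} and \ref{assump_phi} together with the $\ell$-Lipschitzness of $h$ ensure that the Jacobian products above are uniformly bounded over the compact forward-invariant set $\Theta$, so $\varepsilon_0\Pi(\theta)$ acts as a bounded exogenous input to both blocks. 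I take $V_1(\tx) = |\tx|^2$ and $V_2(\tu) = |\tu|^2$ as FxT-ISS Lyapunov candidates.

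For the $\tx$ block, applying Assumption \ref{assump_ex}(b) with $\hat u = \vpt$ gives that $|\tx| \ge \gamma|\tu|$, equivalently $V_1 \ge \gamma^2 V_2$, implies $\tx^\top f(x,u) \le -\tfrac{a}{2}V_1^p - \tfrac{a}{2}V_1^q$. The residual cross term $|\tx|\cdot|\varepsilon_0\Pi(\theta)|$ arising from the chain rule is absorbed through Young's inequality and Lemma \ref{lem_sandw} (which furnishes $V_1 \le V_1^p + V_1^q$ since $p<1<q$), yielding
\[
V_1 \ge \max\{\gamma^2 V_2,\ \chi_1(|\varepsilon_0\Pi(\theta)|)\} \ \Longrightarrow\ \dot V_1 \le -\Psi_1(V_1),
\]
for suitable $\chi_1\in\ck_\infty$ and $\Psi_1\in\ckf$. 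The cross-gain of subsystem~1 is therefore the linear function $\gamma_1(s) = \gamma^2 s$.

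For the $\tu$ block, I decompose $\cg_\theta(x,u) = \nabla\Phi_\theta(u) + e$, where Assumption \ref{assump_fbk_wlip} and the $\ell$-Lipschitzness of $h$ yield $|e| \le K(|\tx|+\ell|\tu|)$. Combining this with $\nabla\Phi_\theta(\vpt)=0$, strong convexity ($\tu^\top\nabla\Phi_\theta(u)\ge\mu|\tu|^2$), and smoothness ($\mu|\tu|\le|\nabla\Phi_\theta(u)|\le L|\tu|$), I choose $c_2 > K^2/(\mu-K\ell)^2$ and impose $V_2 \ge c_2 V_1$, i.e.\ $|\tx|\le|\tu|/\sqrt{c_2}$. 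Under \eqref{ex1sgt} we have $\mu > K\ell$, so this choice is admissible and produces constants $\mu_\star, L_\star>0$ with $\tu^\top\cg_\theta(x,u)\ge\mu_\star|\tu|^2$ and $\mu_\star|\tu|\le|\cg_\theta(x,u)|\le L_\star|\tu|$. Substituting these bounds into
\[
\tu^\top\fx(\cg_\theta(x,u)) = \tu^\top\cg_\theta(x,u)\bigl(|\cg_\theta(x,u)|^{-\xi_1} + |\cg_\theta(x,u)|^{-\xi_2}\bigr),
\]
and using that $-\xi_1<0$ and $-\xi_2>0$ force monotone control of the scaling factors by $|\tu|$, the dissipation takes the form $-c\,(V_2^{1-\xi_1/2} + V_2^{1-\xi_2/2})$, with exponents in $(1/2,1)$ and $(1,\infty)$ respectively, hence of class $\ckf$. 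The exogenous term $\bj_\varphi(\theta)\Pi(\theta)$ is absorbed, as before, via Young's inequality and Lemma \ref{lem_sandw}, giving FxT-ISS of subsystem~2 with the linear cross-gain $\gamma_2(s) = c_2 s$.

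Since $\gamma_1,\gamma_2\in\ckp$ are linear, the small-gain condition $\hat\gamma_1\circ\hat\gamma_2(s)<s$ of Theorem \ref{thm_sgt} reduces, after an arbitrarily small enlargement of each gain, to $\gamma^2 c_2 < 1$. Together with the lower bound $c_2 > K^2/(\mu-K\ell)^2$, a valid $c_2$ exists iff $K\gamma/(\mu-K\ell)<1$, which is exactly \eqref{ex1sgt}. Theorem \ref{thm_sgt} then yields FxT-ISS of the $(\tx,\tu)$ system with respect to the input $\varepsilon_0\Pi(\theta)$, which is the claimed $\klfx$-plus-$\ck$ bound. The main technical obstacle is the treatment of the nonsmooth scaling in $\fx$: the two-sided estimate $\mu_\star|\tu|\le|\cg_\theta(x,u)|\le L_\star|\tu|$ needed to extract a $\ckf$-rate of dissipation is available only once the ISS-type condition $V_2 \ge c_2 V_1$ and the gradient-error bound from Assumption \ref{assump_fbk_wlip} jointly constrain $|\tx|$ relative to $|\tu|$, and it is precisely here that condition \eqref{ex1sgt} enters.
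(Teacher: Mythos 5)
Your proposal is correct and follows essentially the same route as the paper's proof: pass to error coordinates $(\tx,\tu)$, take the quadratic Lyapunov candidates $|\tx|^2$ and $|\tu|^2$, derive linear cross-gains $\gamma^2$ and $c_2$ (the paper writes $\gamma_\delta^2$ with $\gamma_\delta=K/(\mu-\ell K-\delta)$, which is your $c_2$ in the $\delta\to 0$ limit), use the lower/upper sandwich $\mu_\star|\tu|\le|\cg_\theta(x,u)|\le L_\star|\tu|$ valid under the gain condition to extract the two $\ckf$ exponents $1-\xi_1/2\in(1/2,1)$ and $1-\xi_2/2>1$ from the normalizing factors of $\fx$, absorb the bounded $\varepsilon_0\Pi(\theta)$ cross terms into a threshold $\chi_i$, and invoke Theorem~\ref{thm_sgt} since $\gamma^2 c_2<1$ reduces to \eqref{ex1sgt}. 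The only cosmetic differences are parameterization (your $\sqrt{c_2}$ versus the paper's $\gamma_\delta+\hat\delta$) and that the paper absorbs the exogenous term via an explicit Lyapunov-level threshold rather than an appeal to Young's inequality, but the underlying estimate is the same.
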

\begin{proof}
   Using the change of coordinates $\tx=x-h(\varphi(\theta))$, $\tu=u-\varphi(\theta)$, we arrive at the following system:
   \begin{subequations}\label{ex1_transformed}
    \begin{align}    \dot{\tx}&=f(\tx+h(\varphi(\theta)), \tu+\varphi(\theta))-\varepsilon_0\bj_h(\varphi(\theta))\bj_\varphi(\theta)\Pi(\theta)\\
    \dot{\tu}&=-\fx(\cg_\theta(\tx+h(\varphi(\theta)), \tu+\varphi(\theta)))-\varepsilon_0\bj_\varphi(\theta)\Pi(\theta).
    \end{align}
    \end{subequations}
    For the $\tx$- and $\tu$-subsystems, consider the $\mathcal{C}^1$ FxT-ISS Lyapunov functions $V(\tx)=|\tx|^2$ and $W(\tu)=|\tu|^2$, respectively. Let $M:=\sup_{\theta\in\Theta}|\bj_\varphi(\theta)|$, which is finite since $\Theta$ is compact. Then, the derivative of $V$ along the trajectories of the $\tx$-subsystem satisfies
    \begin{align*}
        \dot{V}
        &\le 2\tx^\top f(\tx+h(\varphi(\theta)), \tu+\varphi(\theta))+2M\ell|\varepsilon_0\Pi(\theta)||\tx|.
    \end{align*}
    By Assumption \ref{assump_ex}, for $V(\tx)\ge \gamma^2 W(\tu)$, we have
    \begin{align}
        \dot{V}\le -a|\tx|^{2p}-a|\tx|^{2q}+2M\ell|\varepsilon_0\Pi(\theta)||\tx|.\label{ex1_xgu}
    \end{align}
    From \eqref{ex1_xgu}, if we also have $V(\tx)\ge \frac{16M^2\ell^2}{a^2}|\varepsilon_0\Pi(\theta)|^2$, then
    \begin{align}\label{ex1_xsysbd}
        \dot{V}\le -\frac{a}{2}V^p(\tx)-\frac{a}{2}V^q(\tx).
    \end{align}
    Hence, if $V(\tx)\ge\max\{\gamma^2 W(\tu), \frac{16M^2\ell^2}{a^2}|\varepsilon_0\Pi(\theta)|^2\}$, then \eqref{ex1_xsysbd} holds. Next, consider the $\tu$-subsystem, and note that since
    \begin{align}       &\cg_\theta(\tx+h(\varphi(\theta)), \tu+\varphi(\theta))\notag\\&= \cg_\theta(h(\tu+\varphi(\theta)), \tu+\varphi(\theta))\notag\\&~~+\cg_\theta(\tx+h(\varphi(\theta)), \tu+\varphi(\theta))-\cg_\theta(h(\tu+\varphi(\theta)), \tu+\varphi(\theta))\label{gpm},
    \end{align}
    it follows via Assumptions \ref{assump_ex} and \ref{assump_fbk_wlip} that
    \begin{align}
    |\cg_\theta(\tx+h(\varphi(\theta)), \tu+\varphi(\theta))|&\le |\cg_\theta(h(\tu+\varphi(\theta)), \tu+\varphi(\theta))|\notag\\&~~~+ K(|\tx|+\ell |\tu|).\label{ex1_gupbd}
\end{align}  
\begin{figure*}[t!]
  \centering
    \includegraphics[width=0.35\textwidth]{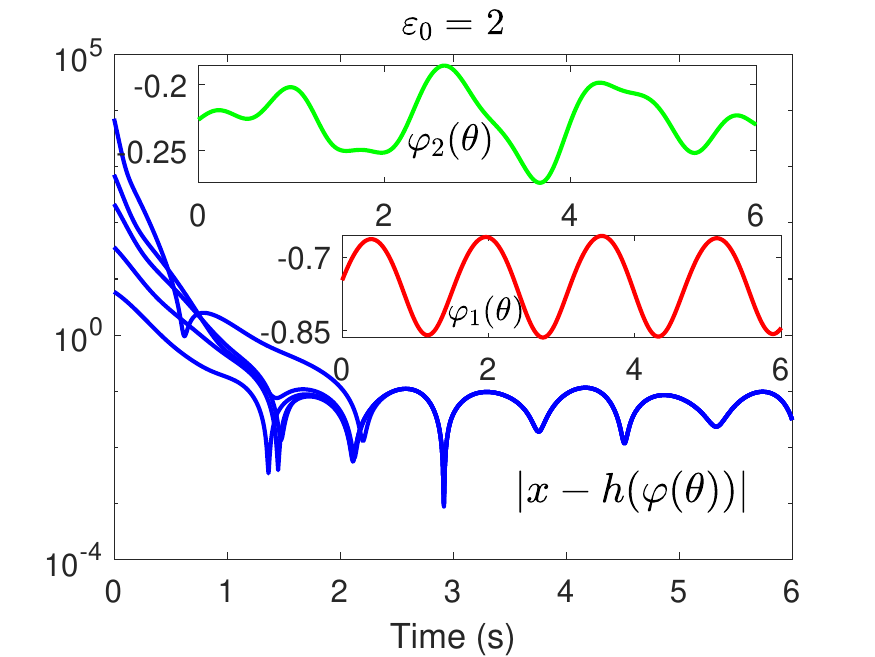}\hspace{-0.5cm}\includegraphics[width=0.35\textwidth]{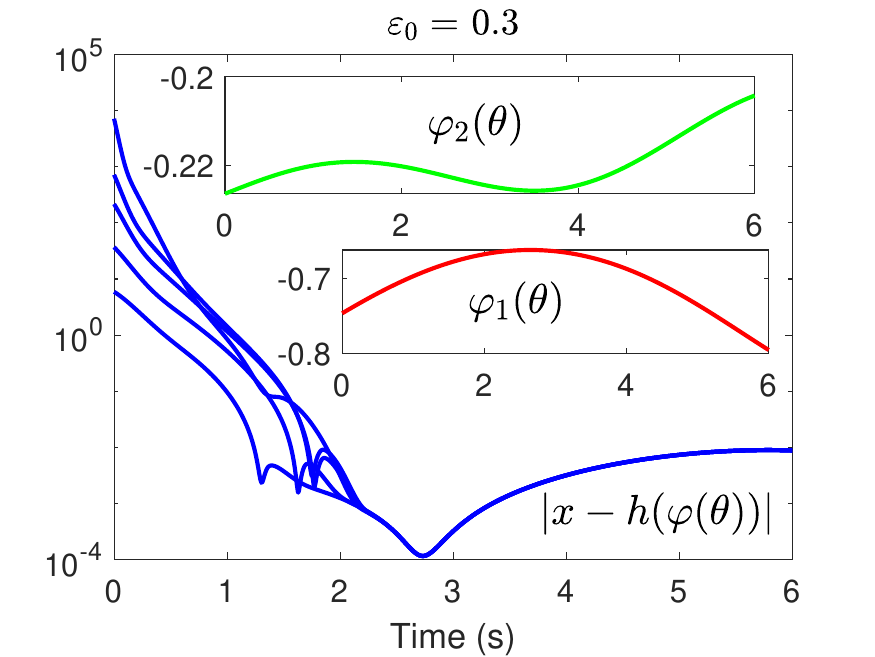}\hspace{-0.5cm}\includegraphics[width=0.35\textwidth]{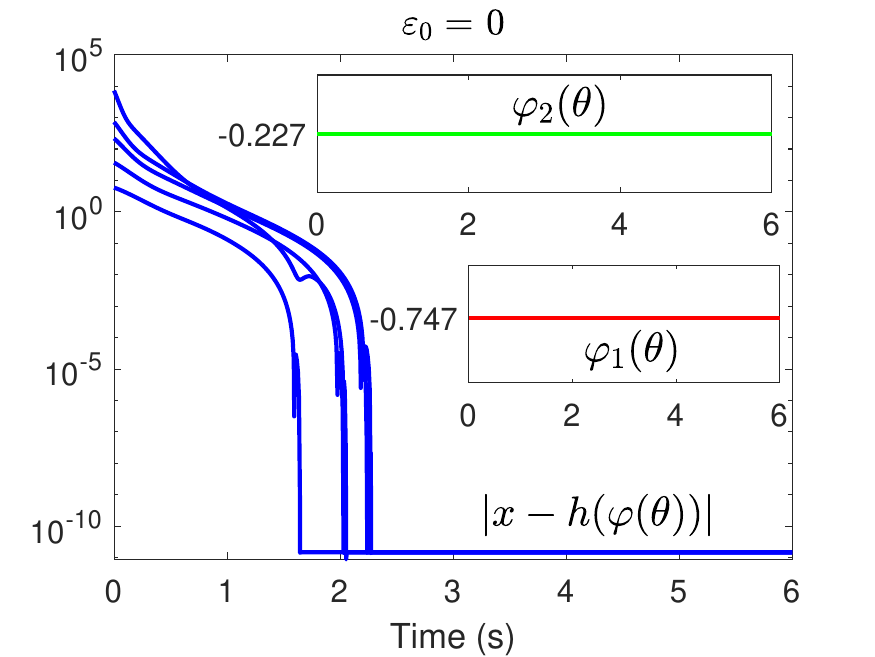}
    \caption{Trajectories of the transformed variable $\tx$ and the time-varying optimizer $\varphi(\theta)$ from the interconnection \eqref{ex1_transformed}, with $\varepsilon_0=2, 0.3, 0$ and varying initial conditions.}\label{ex0plot}  
    \vspace{-0.2cm}
\end{figure*}
Moreover, by Assumption \ref{assump_phi}, we have 
\begin{equation}\label{ex1_gsandw}
    \mu|\tu|\le |\cg_\theta(h(\tu+\varphi(\theta)), \tu+\varphi(\theta))|\le L|\tu|.
\end{equation}
    From \eqref{ex1sgt}, we know that $\mu-\ell K>0$, which implies that we can pick $\delta\in (0, \mu-\ell K)$ sufficiently small, such that $\gamma\gamma_\delta\in (0,1)$, where
    \begin{equation*}
        \gamma_\delta:=\frac{K}{\mu-\ell K-\delta}.
    \end{equation*}
    Suppose that $|\tu|\ge \gamma_\delta|\tx|$. Then, using  \eqref{ex1_gsandw}, we can upper-bound \eqref{ex1_gupbd} as follows:
    \begin{align}\label{upperboundgtheta}
        |\cg_\theta(\tx+h(\varphi(\theta)), \tu+\varphi(\theta))|&\le\left(L+K\left(\frac{1}{\gamma_\delta} +\ell\right)\right)|\tu|\notag\\
        &=(L+\mu-\delta)|\tu|.
    \end{align}
    Similarly, using again \eqref{gpm}, Assumptions \ref{assump_ex} and \ref{assump_fbk_wlip}, and \eqref{ex1_gsandw}, we have:
    \begin{align}
        &|\cg_\theta(\tx+h(\varphi(\theta)), \tu+\varphi(\theta))|\notag\\&\ge |\cg_\theta(h(\tu+\varphi(\theta)), \tu+\varphi(\theta))|- K(|\tx|+\ell |\tu|)\notag\\&\ge \left(\mu-K\left(\frac{1}{\gamma_\delta} +\ell\right)\right)|\tu|\notag\\&=\delta |\tu|,\label{ex1_glowbd}
    \end{align}
    for all $|\tu|\ge \gamma_\delta|\tx|$, and we also obtain:
    \begin{align}
        \tu^\top \cg_\theta(\tx+h(\varphi(\theta)), \tu+\varphi(\theta))&\ge \mu |\tu|^2-K|\tu|(|\tx|+\ell|\tu|)\notag\\
        &\ge \delta|\tu|^2.\label{ex1_ugbd}
    \end{align}
    By combining \eqref{upperboundgtheta} and \eqref{ex1_ugbd}, we obtain
    \begin{equation*}
        \frac{\tu^\top \cg_\theta(\tx+h(\varphi(\theta)), \tu+\varphi(\theta))}{|\cg_\theta(\tx+h(\varphi(\theta)), \tu+\varphi(\theta))|^{\xi_1}}\ge \frac{\delta}{(L+\mu-\delta)^{\xi_1}}|\tu|^{2-\xi_1}.
    \end{equation*}
    Similarly, combining \eqref{ex1_glowbd} and \eqref{ex1_ugbd}, and using the fact that $\xi_2<0$, yields
    \begin{equation*}
        \frac{\tu^\top \cg_\theta(\tx+h(\varphi(\theta)), \tu+\varphi(\theta))}{|\cg_\theta(\tx+h(\varphi(\theta)), \tu+\varphi(\theta))|^{\xi_2}}\ge \delta^{1-\xi_2}|\tu|^{2-\xi_2}.
    \end{equation*}
    Then, we can consider the FxT-ISS Lyapunov function $W(\tu)=|\tu|^2$, to establish that if $W(\tu)\ge \gamma_\delta^2 V(\tx)$, then the derivative of $W$ along the trajectories of the $\tu$-subsystem satisfy:
    \begin{align*}
        \dot{W}&\le -b|\tu|^{2-\xi_1}-b|\tu|^{2-\xi_2}+2M|\varepsilon_0\Pi(\theta)||\tu|.
    \end{align*}
    where
    \begin{equation*}
        b:=\min\left\{\frac{2\delta}{(L+\mu-\delta)^{\xi_1}},\quad 2\delta^{1-\xi_2}\right\}.
    \end{equation*}
    If, additionally, we have  $W(\tu)\ge \frac{16M^2}{b^2}|\varepsilon_0\Pi(\theta)|^2$, then
    \begin{equation}\label{ex1_usysbd}
        \dot{W}\le -\frac{b}{2}W^{1-\frac12\xi_1}(\tu)-\frac{b}{2}W^{1-\frac12\xi_2}(\tu).
    \end{equation}
    Hence, if $W(\tu)\ge\max\{\gamma_\delta^2V(\tx), \frac{16M^2}{b^2}|\varepsilon_0\Pi(\theta)|^2\}$, we have that \eqref{ex1_usysbd} holds. Since $\gamma\gamma_\delta\in (0,1)$, \tcb{it also follows that $\gamma^2\gamma^2_\delta\in(0,1)$. Finally, we can define the functions $\hat{\gamma}_1(s):=\gamma^2 s$ and $\hat{\gamma}_2(s):=\gamma_\delta^2 s$.}
    It is straightforward to verify that $\hat{\gamma}_1\circ\hat{\gamma}_2(s)<s$ for all $s>0$. The conditions of Theorem \ref{thm_sgt} are satisfied, which implies that the interconnected system \eqref{ex1_transformed} is FxT-ISS with respect to the ``input" $\varepsilon_0 \Pi(\theta)$. This concludes the proof.
\end{proof}
\vspace{0.1cm}

The inequality \eqref{ex1sgt} can be interpreted as a measure on the ``relevance" of the input $u$ relative to the state $x$ in the $u$-subsystem \eqref{ex_fbk_u}. The Lipschitz constant $K$, defined in Assumption \ref{assump_fbk_wlip}, upper-bounds the influence that $x$ has on the ``quasi-gradient'' $\cg_\theta(x,u)$, while $\mu$ quantifies the level of strong convexity of the actual gradient $\cg_\theta(h(u),u)$. Since $\gamma$ is often increasing in the Lipschitz constant $\ell$, the parameter $\ell+\gamma$ can be viewed as a term that measures the trade-off between the required input-to-state gains in each subsystem. For instance, if $\ell+\gamma$ is large, meaning that the $x$-subsystem has a large $u$-to-$x$ gain, then, for the small-gain condition to hold, the parameter $\mu$ needs to be large relative to $K$, meaning that the state $u$ needs to be weighted more heavily than $x$ in the $u$-subsystem. Meanwhile, if $\ell+\gamma$ is small, then the $x$- subsystem more easily satisfies the FxT-ISS condition, which allows for more flexibility on $\mu$ relative to $K$ in the $u$-subsystem. This behavior is often expected when studying systems via small-gain theory.
\begin{remark}
     It can be observed that, as long as condition \eqref{ex1sgt} is satisfied, the proposed feedback optimization scheme \eqref{ex_fbk_sys} mitigates the timescale separation requirement imposed in the results of \cite{11239427}. An important consequence of this is that, contrary to the results in \cite{11239427}, the variations of $\theta$ in our setting no longer need to be measured relative to a highly conservative timescale separation parameter. Another important relaxation made in Theorem \ref{thm_fbk} relative to \cite{11239427} is that Theorem \ref{thm_fbk} does not make any assumptions on the degree of homogeneity in either subsystem. Indeed, the fixed-time feedback optimization scheme presented in \cite{11239427} requires that the homogeneity degree of the plant in the 0-limit be smaller than that of the controller, while the homogeneity degree of the plant in the  $\infty$-limit be larger than the corresponding homogeneity degree of the controller (see \cite{doi:10.1137/060675861}). Meanwhile, Theorem \ref{thm_fbk} circumvents this requirement by instead only requiring that \eqref{ex1sgt} holds.
     \QEDB
\end{remark}
\vspace{0.1cm}

\subsection{Numerical Example}
To illustrate Theorem \ref{thm_fbk} numerically, we simulate system \eqref{ex_fbk_sys} with plant dynamics \eqref{accplant} with $x,u\in\re^2$, $h(u)=1.1u$, $\tilde{p}=\frac12, \tilde{q}=-\frac12$, and $A_1=[1.5,0.3;0.3,1.8]$, $A_2=[1.4,0.25;0.25,1.6]$.
It can be verified that this plant satisfies Assumption \ref{assump_ex} with $\gamma=\frac75$. For this example, the cost function takes the form
\begin{equation*}
    \phi_\theta(x,u)=\frac12 x^\top Q_\theta x+c^\top x+\frac12 u^\top P_\theta u+d^\top u,
\end{equation*}
where $Q_\theta, P_\theta\in\re^{2\times 2}$, $c,d\in\re^2$ are given by $Q_{\theta}=[1+d_1(t),0.5;0.5,1+d_2(t)]$, $P_{\theta}=[3.5,0.2;0.2,3.2+d_3(t)]$
%
and $c=[2;1]$, $d=[1.5;0.5]$.
The disturbances are given by $d_1(t)=\frac12\sin(2\varepsilon_0 t), d_2(t)=\frac13\sin(1.5\varepsilon_0 t)$, and $d_3(t)=\frac17\sin(4\varepsilon_0 t)$. These signals can be generated by a system of the form \eqref{dtheta} by setting $\theta(t)\in\mathbb{R}^6$, $d_i(t)=\theta_{2i}(t)$, and $\Pi(\theta)=\mathcal{R}\theta$, where $\mathcal{R}\in\mathbb{R}^{6\times 6}$ is a block diagonal matrix with rotation matrices on the diagonal.
We also have
\begin{equation*}
    \cg_\theta(x,u)=1.1Q_\theta x+P_\theta u+1.1c+d.
\end{equation*}
For this choice of parameters, we obtain $\ell=1.1, K=1.339$, and $\mu=3.37$. Hence, condition \eqref{ex1sgt} is satisfied, and for any choice of $\xi_1\in (0,1)$ and $\xi_2<0$ we conclude that the closed-loop system is FxT ISS. The trajectories of the system, with $\xi_1=\frac13$ and $\xi_2=-\frac15$ and varying initial conditions, are shown in Figure \ref{ex0plot}.
\section{FIXED-TIME GRADIENT PLAY WITH DYNAMIC PLANTS IN NONCOOPERATIVE GAMES}
\label{sec_fixedtime}
In this section, we consider a more challenging problem compared to the standard feedback optimization setting studied in the previous section: fixed-time learning of Nash equilibrium in N-player noncooperative games with dynamic plants in the loop. \tcb{Such settings arise whenever self-interested agents compete over a shared physical system rather than a static payoff, as in power markets on an electrical grid, transmit-power control in wireless networks, and routing in transportation systems, where each agent's action affects a central plant \cite{6060862}.} In the asymptotic regime, and when the plant of each player $i$ is a static map $J_i(u)$ with suitable monotonicity properties, this problem can be solved using a standard pseudo-gradient flow of the form $\dot{u}_i=\nabla_i J_i(u)$ \cite{rosen1965existence}. On the other hand, to achieve fixed-time convergence, \cite[Sec. IV-A]{9683248} introduced the \emph{decoupled fixed-time pseudogradient flow} $\dot{u}_i=-\fx(\nabla_iJ_i(u))$, where the function $\fx$ is given by \eqref{scale_function}, and which only requires for each player to have access to their own partial derivative $\nabla_iJ_i(u)$. In contrast to these settings, here we ask whether a similar FxT result can be obtained now for games with plants characterized by \emph{dynamical systems} \cite{romano2025game,belgioioso2024online}. We will tackle this problem using the FxT Small Gain Theorem studied in Section \ref{sec_main}.
\subsection{Model and Algorithm}
We consider an $N$-player noncooperative game, where each player $i\in [N]$ controls their action $u_i\in \re$, which maps to their payoff, $J_{i}(x,u)\in\re$, via a dynamic plant of the form:
\begin{align}\label{ex_nesplant}
    \dot{x}&=f(x, u),
\end{align}
where $x\in\re^n$ is the state, $u=[u_1,...,u_N]^\top$ is the vector of players' actions, and $f:\re^n\times\re^N\to\re^n$ is a continuous function. We make the following assumption on the plant \eqref{ex_nesplant}, which is similar to Assumption \ref{assump_ex} from Section \ref{sec_fbkopt}:

\vspace{0.1cm}
\begin{assumption}\label{assump_nes_h}
    For system \eqref{ex_nesplant}, there exists an $\ell$-Lipschitz, $\cc^1$ mapping $h:\re^N\to\re^n$ such that $f(h(u),u)=0$ for all $u\in\re^N$. Moreover, there exists $\gamma>0$ such that for each $\hat{u}\in\re^N$, there exists $a>0, p\in (0,1)$, and $q>1$ such that
    \begin{align}\label{ex_nes_fxtiss}
        &|x-h(\hat{u})|\ge\gamma |u-\hat{u}|\notag\\&\Rightarrow 2(x-h(\hat{u}))^\top f(x,u)\le -a|x-h(\hat{u})|^{2p}-a|x-h(\hat{u})|^{2q},
    \end{align}
    for all $x\in\re^n$ and $u\in\re^N$.\QEDB
\end{assumption}
\vspace{0.1cm}

The goal of player $i\in [N]$ is to design an update law on $u_i$ in a way that optimizes their payoff $J_{i}(x,u)$. Since the convergence of \eqref{ex_nesplant} requires $x=h(u)$, the players essentially want to seek the \emph{Nash equilibrium} of the game parameterized by the cost functions $\{\cj_{i}(u)\}_{i=1}^N$, where $\cj_{i}(u)=J_{i}(h(u),u)$. We will refer to this game as the \emph{quasi-steady state game}. The Nash equilibrium of this game is defined as the following:

\vspace{0.1cm}
\begin{definition}
    Consider an $N$-player game, where player $i\in[N]$ has action $u_i\in\re$ and cost function $\cj_{i}(u)$, where $u=[u_1,...,u_N]^\top$. An action profile $u^*\in\re^N$ is said to be a \emph{Nash Equilibrium} (NE) if, for each $i\in[N]$, the following holds
    \begin{equation*}
        \cj_i(u^*_1,...,u^*_{i-1},u_i, u^*_{i+1},...,u^*_N)\geq \cj_i(u^*),
    \end{equation*}
    for all $u_i\in\re$.
    \QEDB
\end{definition}
\vspace{0.1cm}

To seek the Nash equilibrium of the quasi-steady state, we introduce a decoupled fixed-time pseudogradient flow on the players' actions:
\begin{equation*}
    \dot{u}_i=-\fx(\nabla_{i} \cj_{i}(u)),
\end{equation*}
for some given $\xi_1\in (0,1)$ and $\xi_2<0$, where $\nabla_i$ denotes the partial derivative with respect to the $i$-th argument and $\fx$ is given in \eqref{scale_function}. It can be verified, via chain rule, that
\begin{align*}
    \nabla_{i} \cj_{i}(u)&=[H(u)^\top \nabla J_{i}(h(u),u)]_i\\
    &=\begin{bmatrix}
        \frac{\partial h(u)}{\partial u_i}^\top & e_i^\top
    \end{bmatrix}\nabla J_{i}(h(u),u),
\end{align*}
where $H(u)$ is given in \eqref{ex0ghu}, and $e_i$ is the $i$-th standard basis vector of $\re^N$. By replacing the steady-state approximation $h(u)$ with the measured value $x$, we obtain a real-time feedback control law on $u_i$:
\begin{equation}\label{pgf_decent}
    \dot{u}_i=-\fx(\cg_{i}(x,u)),
\end{equation}
where $\cg_{i}$ is given by
\begin{equation*}
    \cg_{i}(x,u)=\begin{bmatrix}
        \frac{\partial h(u)}{\partial u_i}^\top & e_i^\top
    \end{bmatrix}\nabla J_{i}(x,u).
\end{equation*}
We then obtain the following interconnected system:
\begin{subequations}\label{ex_nes_int}
    \begin{align}
        \dot{x}&=f(x,u)\\
        \dot{u}_i&=-\fx(\cg_{i}(x,u)),\quad i=1,...,N
    \end{align}
\end{subequations}
We now introduce the following operator
\begin{equation*}
    \cg(x,u)=[\cg_{1}(x,u),...,\cg_{N}(x,u)]^\top.
\end{equation*}
\begin{figure}[t!]
  \centering \includegraphics[width=0.45\textwidth]{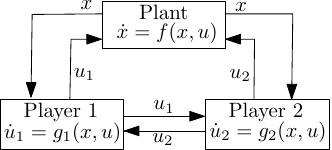}
    \caption{A diagram depicting the interconnection \eqref{ex_nes_int} for a two player game with a plant in the loop. } \label{nes_block}
\end{figure}
The function $\cg(x,u)$ can be seen as a feedback-based version of the \emph{pseudogradient} operator that is found in traditional Nash equilibrium seeking algorithms \cite{yi2019operator,11048686}. In fact, it follows directly by definition that $\cg(h(u),u)$ is the pseudogradient (in the traditional sense) of the quasi-steady state game. 

In this paper, we focus on a class of games termed \emph{potential games}, which are characterized by the following assumption:

\vspace{0.1cm}
\begin{assumption}\label{assump_p}
There exists a $\cc^1$ function $P:\re^N\to\re_{\ge 0}$ such that $\nabla P(u)=\cg(h(u),u)$ for all $u\in\re^N$.
Moreover, the following holds:
    \begin{enumerate}
        \item The Nash equilibrium of the quasi-steady state game is unique and satisfies $u^*=\argmin_u P(u)$.
        \item The function $P$ is $\mu$-strongly convex, i.e., there exists $\mu>0$ such that the following holds
            \begin{align}\label{assump_p_convex}
                P(\hat{u})\ge P(u)+\nabla P(u)^\top (\hat{u}-u)+\frac{\mu}{2}|\hat{u}-u|^2,
            \end{align}
            for all $u, \hat{u}\in\re^N$.
        \item There exists $K>0$ such that
        \begin{equation*}
            |\cg(\hat{x},u)-\cg(x,u)|\le K|\hat{x}-x|,
        \end{equation*}
        for all $x,\hat{x}\in\re^n$ and $u\in\re^N$. \QEDB
    \end{enumerate}
\end{assumption}

\vspace{0.1cm}
\begin{remark}
    The function $P$ is referred to as the \emph{potential function}. Games that admit potential functions are referred to as \emph{potential games}, which are ubiquitous in the game theory literature \cite{marden2017game, 6060862}. Assumption \ref{assump_p} can be viewed as analogous to Assumptions \ref{assump_phi} and \ref{assump_fbk_wlip} in Section \ref{sec_fbkopt} for fixed $\theta$, except that we now do not require $P$ to satisfy any smoothness requirement. It is important to note that under Assumption \ref{assump_p}, the interconnected dynamics \eqref{ex_nes_int} are not equivalent to the dynamics \eqref{ex_fbk_sys} studied in Section \ref{sec_fbkopt}. In particular, the dynamics \eqref{ex_nes_int} of each player $i$ implement  normalizing terms that only depend on the $i^{th}$ player's partial derivative $\nabla_i J_i$, which is essential for a decentralized implementation.
    \QEDB
\end{remark}
\vspace{0.1cm}

We are now ready to state the main result of this section, which establishes FxTS of the Nash equilibrium, $[h(u^*), u^*]^\top$, for the interconnected system \eqref{ex_nes_int}.

\vspace{0.1cm}
\begin{thm}\label{thm_nes}
    Suppose that Assumptions \ref{assump_nes_h}-\ref{assump_p} hold, as well as the following condition:
    \begin{equation}\label{nes_sgt}
        \mu>NK(\ell+\gamma).
    \end{equation}
    Then, there exists $\beta\in\klfx$ such that, for each $x(0)\in\re^n$ and $u(0)\in\re^N$, each solution to \eqref{ex_nes_int} satisfies
    \begin{align*}
        &\left\lvert\begin{bmatrix}
            x(t)-h(u^*)\\ u(t)-u^*
        \end{bmatrix}\right\rvert\le \beta\left(\left\lvert\begin{bmatrix}
            x(0)-h(u^*)\\ u(0)-u^*
        \end{bmatrix}\right\rvert,t\right)
    \end{align*}
    for all $t\ge 0$.\QEDB
\end{thm}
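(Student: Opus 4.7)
The plan is to follow the same blueprint as the proof of Theorem \ref{thm_fbk}, adapted to the decoupled Nash-seeking structure. First I would perform the coordinate change $\tx = x - h(u^*)$ and $\tu = u - u^*$, noting that by Assumption \ref{assump_p} the potential satisfies $\nabla P(u^*) = \cg(h(u^*), u^*) = 0$, so the origin is an equilibrium of the transformed system. Since the interconnection \eqref{ex_nes_int} is autonomous (there is no exogenous drift analogous to $\varepsilon_0 \Pi(\theta)$), the conclusion amounts to FxTS, and the inputs to the two subsystems in the small-gain framework are just the states of the other subsystem. Picking the quadratic Lyapunov candidates $V(\tx) = |\tx|^2$ and $W(\tu) = |\tu|^2$, the plant subsystem can be handled exactly as in the proof of Theorem \ref{thm_fbk}: Assumption \ref{assump_nes_h} with $\hat u = u^*$ yields, whenever $V(\tx) \ge \gamma^2 W(\tu)$, the dissipation inequality $\dot V \le -a V^p - a V^q$, corresponding to a FxT-ISS Lyapunov function with linear gain $\hat\gamma_1(s) = (\gamma + \hat\delta)^2 s \in \ckp$.

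The main obstacle is the $\tu$-subsystem, because the decoupled update $\dot{\tu}_i = -\fx(\cg_i(x,u))$ produces
\begin{equation*}
\dot W = -2\sum_{i=1}^N \left(\frac{\tu_i\,\cg_i(x,u)}{|\cg_i(x,u)|^{\xi_1}} + \frac{\tu_i\,\cg_i(x,u)}{|\cg_i(x,u)|^{\xi_2}}\right),
\end{equation*}
in which the normalizing factors act componentwise and cannot be factored out to form an inner product as was done with the centralized update in Theorem \ref{thm_fbk}. My plan is to first use Assumption \ref{assump_p} together with $\nabla P(u^*) = 0$ to obtain the strong monotonicity bound $\tu^\top \cg(h(u), u) \ge \mu |\tu|^2$, then combine it with the Lipschitz bounds in Assumption \ref{assump_p}(3) and Assumption \ref{assump_nes_h} to derive, on the region $W(\tu) \ge \gamma_\delta^2 V(\tx)$ with $\gamma_\delta = NK/(\mu - \ell NK - \delta)$, both the inner-product bound $\tu^\top \cg(x,u) \ge \delta |\tu|^2$ and the magnitude bound $|\cg(x,u)| \le (L + \mu - \delta)|\tu|$. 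I would then apply Lemma \ref{jensenlemma} to convert the componentwise sums $\sum_i \tu_i \cg_i/|\cg_i|^{\xi_j}$ into the vector-valued quantity $\tu^\top \cg(x,u)/|\cg(x,u)|^{\xi_j}$, up to dimension-dependent constants. The factor $N$ in the small-gain condition \eqref{nes_sgt} enters precisely at this componentwise-to-vector translation, reflecting the loss incurred by each player using only its own scalar partial derivative. With these bounds, I would extract the dissipation inequality $\dot W \le -b W^{1 - \xi_1/2} - b W^{1 - \xi_2/2}$ on the aforementioned region, yielding a second FxT-ISS Lyapunov function with linear gain $\hat\gamma_2(s) = (\gamma_\delta + \hat\delta)^2 s \in \ckp$.

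With both FxT-ISS Lyapunov pairs established, the small-gain condition $\hat\gamma_1 \circ \hat\gamma_2(s) < s$ collapses to $(\gamma + \hat\delta)(\gamma_\delta + \hat\delta) < 1$, which is secured by \eqref{nes_sgt} for $\delta, \hat\delta > 0$ sufficiently small. Invoking Theorem \ref{thm_sgt} then delivers FxTS of the transformed interconnection (with the trivial zero input collapsing the FxT-ISS bound into a pure $\klfx$ bound), and the two-sided quadratic estimates on $V$ and $W$ transfer the resulting bound on $(|\tx|, |\tu|)$ into the claimed bound on $(x - h(u^*), u - u^*)$, which concludes the proof.
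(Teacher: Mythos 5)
You correctly identify the main obstacle (the componentwise normalization in the decoupled flow destroys the inner-product structure), and you correctly anticipate that the factor $N$ in \eqref{nes_sgt} enters at that point. However, the fix you propose does not work, and the paper takes a materially different route for the $\tu$-subsystem.

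The gap is in the claim that Lemma~\ref{jensenlemma} can convert the signed, componentwise-normalized sums $\sum_i \tu_i\,\cg_i/|\cg_i|^{\xi_j}$ into $\tu^\top \cg(x,u)/|\cg(x,u)|^{\xi_j}$ ``up to dimension-dependent constants.'' Lemma~\ref{jensenlemma} only bounds powers of sums of \emph{nonnegative} quantities; here the individual terms $\tu_i\,\cg_i(x,u)$ can have arbitrary signs (strong convexity guarantees only that the \emph{sum} $\tu^\top G(\tu)\ge\mu|\tu|^2$ is positive, not each term), and the per-component denominators $|\cg_i|^{\xi_1}$ reweight the terms incoherently. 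In particular, a negative $\tu_i\,\cg_i$ with $|\cg_i|$ small gets amplified by $1/|\cg_i|^{\xi_1}$, so no dimension-dependent constant can recover an inner-product bound. This is precisely what makes the decoupled flow nontrivial, and it cannot be resolved by Jensen-type inequalities.

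What the paper does instead is twofold. First, it takes $W(\tu)=P(\tu+u^*)-P(u^*)$ (the potential itself) rather than $|\tu|^2$; then $\dot W = -\sum_i G_i(\tu)\,\fx(G_i(\tu)+\ov{G}_i(\tx,\tu))$, so the numerators of the summands become $G_i^2 + G_i\ov{G}_i$ and the denominators become $|G_i+\ov{G}_i|^{\xi_j}$. This puts $\dot W$ exactly in the form treated by the dedicated Lemma~\ref{lem_decentralized}, which lower-bounds $\sum_i (a_i^2+a_ib_i)/|a_i+b_i|^p$ by $c|a|^{2-p}$ under the smallness condition $|b|\le M|a|$ with $M<1/N$ --- this is the step that genuinely handles sign cancellation by isolating the dominant component and explicitly bounding the minimum of $x\mapsto (x^2+xb_i)/|x+b_i|^p$. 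The condition $M<1/N$ is exactly where the factor $N$ in \eqref{nes_sgt} originates, and the strong-convexity inequalities $|\nabla P(\tu+u^*)|\ge\sqrt{2\mu}\,W^{1/2}(\tu)$ and $W(\tu)\ge\frac{\mu}{2}|\tu|^2$ are what let the resulting bound on $|G(\tu)|$ be converted into a $\ckf$ decrease of $W$. Your version is missing both the potential-based Lyapunov function and the substitute for Lemma~\ref{lem_decentralized}; as written, the $\tu$-subsystem dissipation inequality cannot be established.
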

\begin{proof}
    With the transformation $\tx=x-h(u^*), \tu=u-u^*$, we arrive at the following system
    \begin{subequations}\label{ex_nes_transf}
    \begin{align}
        \dot{\tx}&=f(\tx+h(u^*),\tu+u^*)\\
        \dot{\tu}_i&=-\fx(\cg_{i}(\tx+h(u^*),\tu+u^*)),\quad i=1,...,N.
    \end{align}
\end{subequations}
First, note that \eqref{assump_p_convex} implies $P(\tu+u^*)-P(u^*)\ge \frac{\mu}{2}|\tu|^2$. For the $\tx$- and $\tu$-subsystems, consider the FxT-ISS Lyapunov functions $V(\tx)=|\tx|^2$ and $W(\tu)=P(\tu+u^*)-P(u^*)$, respectively. Let $\tilde{\gamma}^2=\frac{2}{\mu}\gamma^2$.
From Assumption \ref{assump_nes_h}, we obtain the following implication
\begin{align*}
    &V(\tx)\ge \tilde{\gamma}^2W(\tu)\Rightarrow |\tx|^2\ge \gamma^2 |\tu|^2\\
    &\Rightarrow \dot{V}(\tilde{x})=2\tx^\top f(\tx+h(u^*),\tu+u^*)\le -aV^p(\tx)-aV^q(\tx).
\end{align*}
Hence, the $\tx$-subsystem satisfies the Lyapunov-based FxT-ISS implication \eqref{fxtiss_imp_assump} with $\gamma_1(s)=\tilde{\gamma}^2 s$. Since $P$ is strongly convex, it also satisfies the following bounds:
\begin{subequations}
    \begin{align}
        |\nabla P(\tu+u^*)|&\ge \mu |\tu|\\
        P(\tu+u^*)-P(u^*)&\le \frac{1}{2\mu}|\nabla P(\tu+u^*)|^2\label{plineq},
    \end{align}
\end{subequations}
for all $\tu\in\re^N$. For the $\tu$-subsystem, let us denote
\begin{align*}
    G_i(\tu)&:=\cg_i(h(\tu+u^*), \tu+u^*)\\
    \ov{G}_i(\tx,\tu)&:=\cg_i(\tx+h(u^*),\tu+u^*)-\cg_i(h(\tu+u^*),\tu+u^*),
\end{align*}
with $G(\tu)=[G_1(\tu),...,G_N(\tu)]^\top$ and $\ov{G}(\tx,\tu)=[\ov{G}_1(\tx,\tu),...,\ov{G}_N(\tx,\tu)]^\top$. Note that $G(\tu)=\nabla P(\tu+u^*)$. By condition \eqref{nes_sgt}, we can pick $\varepsilon>0$ such that
\begin{equation*}
    \mu>NK(\ell+\varepsilon+\gamma).
\end{equation*} Suppose that $W(\tu)\ge \delta^2|\tx|^2$, where $\delta^2$ is given by
\begin{equation*}
    \delta^2={\frac{\mu}{2}}\frac{1}{(\gamma+\varepsilon)^2}.
\end{equation*}
We then have
\begin{align}
    |\ov{G}(\tx,\tu)|&\le K|\tx-(h(\tu+u^*)-h(u^*))|\notag\\
    &\le K(|\tx|+\ell |\tu|)\notag \\&\le 
    K\left(\frac{1}{\delta\sqrt{2\mu}}+\frac{\ell}{\mu}\right)|\nabla P(\tu+u^*)|\notag\\
    &=\frac{K(\ell+\varepsilon+\gamma)}{\mu}|G(\tu)|.
    \label{ex_nes_gbd}
\end{align}
Note that $\frac{K(\ell+\varepsilon+\gamma)}{\mu}<\frac{1}{N}$, hence by Lemma \ref{lem_decentralized} in the Appendix there exists $c>0$ such that:
\begin{align*}
    \dot{W}(\tu)&=-\sum_{i=1}^N G_i(\tu)\fx(G_i(\tu)+\ov{G}_i(\tx,\tu))\\
    &\le -c|G(\tu)|^{2-\xi_1}-c|G(\tu)|^{2-\xi_2}\\
    &\le -c(2\mu)^{1-\frac12\xi_1}W^{1-\frac12\xi_1}(\tu)-c(2\mu)^{1-\frac12\xi_2}W^{1-\frac12\xi_2}(\tu),
\end{align*}
where the bottom inequality uses the fact that $|\nabla P(\tu+u^*)|\ge \sqrt{2\mu}W^\frac12 (\tu),$
which follows directly from \eqref{plineq}.
Since 
\begin{equation*}
    \delta^2\tilde{\gamma}^2=\left(\frac{\gamma}{\gamma+\varepsilon}\right)^2<1,
\end{equation*}
\tcb{we can choose $\hat{\gamma}_1(s)=\tilde{\gamma}^2s$ and $\hat{\gamma}_2(s)=\delta^2s$ and observe that the small gain condition of Theorem \ref{thm_sgt} is satisfied}. This establishes the result.
\end{proof}
\vspace{0.1cm}
\begin{remark}
    The condition \eqref{nes_sgt} is highly reminiscent of condition \eqref{ex1sgt}, but we now require the lower bound on $\mu$ to be proportional to the number of players in the game. Hence, as the player count increases, $u$ needs to be weighted much more heavily than $x$ in the pseudogradient $\cg(x,u)$. This is unsurprising, since we are primarily concerned with players implementing the decoupled pseudogradient flow \eqref{pgf_decent}, where it is typically the case that the dimension should be accounted for \cite{9760031}. Meanwhile, if the players choose to implement a \emph{centralized} fixed-time pseudogradient flow, then the problem can be studied by emulating the results from Section \ref{sec_fbkopt}.
    \QEDB
\end{remark}
\vspace{0.1cm}

\begin{figure}[t!]
  \centering \includegraphics[width=0.45\textwidth]{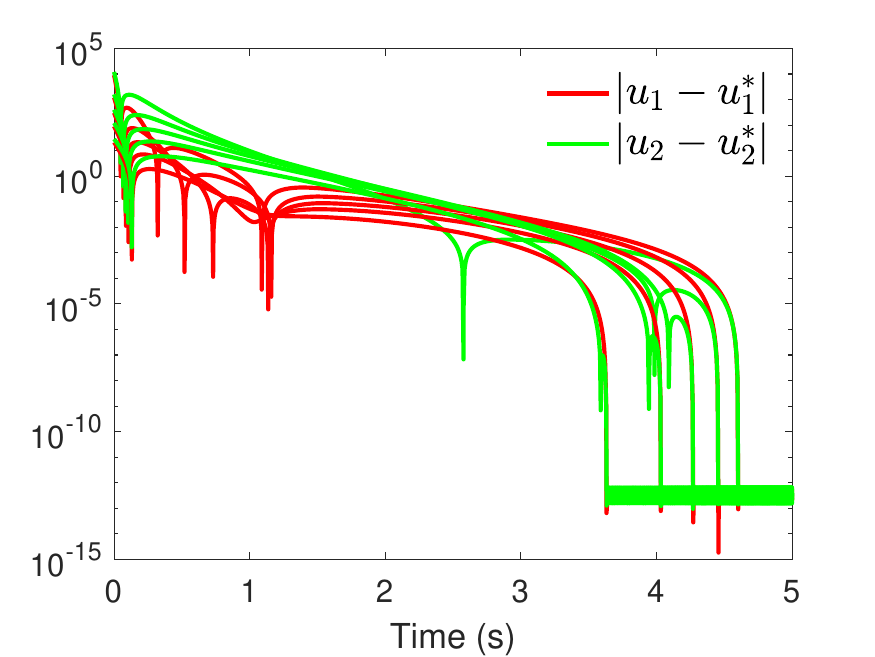}
    \caption{Trajectories showcasing the convergence of the players' actions in \eqref{ex_nes_int} for a two player game.} \label{nes_plot}
\end{figure}
\subsection{Numerical Examples}
%
Consider a 2-player game, where the cost function of the $i^{th}$ player is given by
\begin{equation*}
    J_i(x,u)=\frac12 x^\top Q_i x+c_i^\top x+\frac12 u^\top P_i u+d_i^\top u,
\end{equation*}
where $Q_1=[1,0.5;0.5,0.75]$, $P_1=[4,1;1,2]$, $c_1=[0.6;1.3]$, $d_1=[0.7,0.5]$, $Q_2=[2,1.7;1.7,1.5]$, $P_2=[2,0.7;0.7,3]$, $c_2=[1.5;1]$, $d_2=[0.3;0.8]$, where we used so-called MATLAB notation.
%
Moreover, the plant is given by \eqref{accplant} with $A_1=A_2=\mathbb{I}_2$ and $h(u)=\frac12 u$. For this choice of parameters, we have $\ell=\frac12, \gamma=0.51$, and $\cg(x,u)=\frac12Qx+Pu+\frac12c+d,$
where $Q=[1,0.5;1.7,1.5]$, $P=[4,1;0.7,3]$, $c=[0.6;1]$, $d=[0.7;0.8]$.
%
%
It can be verified that the constrained game is a potential game, with potential function
\begin{equation*}
    P(u)=\frac12 u^\top \mathcal{Q}u+\left(\frac12c+d\right)^\top u,\quad \mathcal{Q}=\begin{bmatrix}
        4.25 & 1.125 \\ 1.125 & 3.375
    \end{bmatrix}.
\end{equation*}
We have $K=1.257$ and $\mu=2.605$, so condition \eqref{nes_sgt} is satisfied. Hence, by Theorem \ref{thm_nes}, we can conclude that the point $[h(u^*), u^*]$ for the interconnected NES dynamics is FxTS. The trajectories of $|u_i-u_i^*|$ with $\tilde{p}=\frac25$, $\tilde{q}=-\frac27, \xi_1=\frac13, \xi_2=-\frac15$ are shown in Figure \ref{nes_plot}.

\section{CONCLUSION}
\label{sec_conclusions}
We introduced Lyapunov-based small-gain conditions for establishing fixed-time input-to-state stability (FxT ISS) in interconnected dynamical systems. To the authors' knowledge, this is the first Lyapunov-based result of its kind in the literature on fixed-time stability. The theoretical developments are demonstrated through several illustrative examples. The first example, a second-order interconnection, provides a clear paradigm for applying the proposed tools to study interconnections of FxTS systems with homogeneous disturbances. In the second example, we apply our results to analyze fixed-time feedback optimization in dynamical systems without requiring time-scale separation between the plant and the controller. Finally, we also show that the proposed results can be used to study equilibrium-seeking problems in potential noncooperative games with dynamic plants in the loop and decoupled FxT gradient-based dynamics. Future work includes extending the results to large-scale networked systems, relaxing some of the structural assumptions on the gain functions, and \tcb{exploring further applications in engineering and sciences.}
%

\section{Appendix}
\begin{lemma}\label{lemdom}
    Let $\sigma(s)=\sum_{i=1}^n c_i s^{p_i}$ satisfy $\sigma(s)>0$ for $s>0$, where $c_i\neq 0$ for each $i\in[n]$ and $p_1<p_2<...<p_n$. Then, there exists $\ov{c}, \un{c}>0$ such that
    \begin{equation*}
        \un{c}(s^{p_1}+s^{p_n})\le\sigma(s)\le \ov{c}(s^{p_1}+s^{p_n}),
    \end{equation*}
    for all $s> 0$.\QEDB
\end{lemma}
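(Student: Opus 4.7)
The plan is to prove the two inequalities separately, with the upper bound being essentially immediate from Lemma \ref{lem_sandw} and the lower bound requiring an asymptotic/continuity argument. The main observation to set up the lower bound is that, although intermediate coefficients $c_i$ may be negative, the positivity of $\sigma$ near zero and at infinity forces the extremal coefficients $c_1$ and $c_n$ to be strictly positive.

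First I would derive $c_1>0$ and $c_n>0$. Writing $\sigma(s)=s^{p_1}\bigl(c_1+\sum_{i=2}^n c_i s^{p_i-p_1}\bigr)$, the sum in parentheses tends to $c_1$ as $s\to 0^+$, so positivity of $\sigma(s)$ for small $s$ (together with $c_1\neq 0$) yields $c_1>0$. A symmetric factoring by $s^{p_n}$ as $s\to\infty$ yields $c_n>0$.

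Second, for the upper bound, Lemma \ref{lem_sandw} applied to each exponent $p_1\le p_i\le p_n$ gives $s^{p_i}\le s^{p_1}+s^{p_n}$ for every $s>0$. Hence
\begin{equation*}
\sigma(s)\le\sum_{i=1}^n|c_i|\,s^{p_i}\le\Bigl(\sum_{i=1}^n|c_i|\Bigr)(s^{p_1}+s^{p_n}),
\end{equation*}
so one may take $\ov{c}=\sum_{i=1}^n|c_i|$.

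Third, for the lower bound, I would define the ratio $\psi(s):=\sigma(s)/(s^{p_1}+s^{p_n})$ on $(0,\infty)$. By the same factoring used above, one checks $\lim_{s\to 0^+}\psi(s)=c_1>0$ and $\lim_{s\to\infty}\psi(s)=c_n>0$. The function $\psi$ is continuous and strictly positive on $(0,\infty)$ (the numerator is positive by hypothesis and the denominator is positive for $s>0$), and the positive finite limits at both endpoints allow one to extend $\psi$ to a continuous, strictly positive function on the compactification $[0,\infty]$. A compactness argument then yields $\un{c}:=\inf_{s>0}\psi(s)>0$, which is exactly the desired lower bound.

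The only conceptually nontrivial step is the lower bound, where the possible presence of negative intermediate coefficients $c_i$ could in principle cause $\psi$ to become small; the reason this does not occur is precisely that the dominant terms in the two asymptotic regimes have positive coefficients, so $\psi$ cannot approach zero at the endpoints, and continuity and positivity handle the interior. The rest is bookkeeping with Lemma \ref{lem_sandw}.
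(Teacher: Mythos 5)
Your proof is correct and follows essentially the same approach as the paper's: establish $c_1>0$ and $c_n>0$ from positivity, form the ratio $\sigma(s)/(s^{p_1}+s^{p_n})$, observe that it tends to $c_1$ and $c_n$ at the endpoints, and conclude by continuity and compactness. The only cosmetic difference is that you obtain the upper bound directly and explicitly via Lemma~\ref{lem_sandw} with $\ov{c}=\sum_i|c_i|$, whereas the paper extracts both bounds from the same compactness argument on the ratio.
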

\begin{proof}
Since $\sigma(s)>0$ for all $s>0$, we must have that $c_1>0$ and $c_n>0$. Consider the function $\tilde{\sigma}:(0, \infty)\to (0, \infty)$ defined by
    $\tilde{\sigma}(s)=\frac{\sigma(s)}{s^{p_1}+s^{p_n}}$. It is straightforward to verify that $\lim_{s\to 0^+}\tilde{\sigma}(s)=c_1$, so we can pick $\varepsilon_1>0$ such that $\tilde{\sigma}(s)\in(\frac{c_1}{2}, \frac{3c_1}{2})$ for all $s\in (0, \varepsilon_1)$. Moreover, since $\lim_{s\to \infty}\tilde{\sigma}(s)=c_n$, we can pick $\varepsilon_2>\varepsilon_1$ such that $\tilde{\sigma}(s)\in (\frac{c_n}{2}, \frac{3c_n}{2})$ for all $s>\varepsilon_2$. Since $\tilde{\sigma}$ is positive and continuous for $s>0$, there exists $\ov{\sigma}, \un{\sigma}>0$ such that $\un{\sigma}\le\tilde{\sigma}(s)\le \ov{\sigma}$ for all $s\in [\varepsilon_1, \varepsilon_2]$. We can then pick $\un{c}=\min\{\un{\sigma}, \frac{c_1}{2}, \frac{c_n}{2}\}$ and $\ov{c}=\max\{\ov{\sigma}, \frac{3c_1}{2}, \frac{3c_n}{2}\}$ to complete the proof.
\end{proof}

\vspace{0.1cm}

\begin{lemma}\label{lem_k1scale}
    Let $\Psi\in\ckf$ and $\sigma(s)=\sum_{i=1}^n c_i s^{r_i}$, where $1\le r_1< r_2< ..< r_n$ and $c_i\neq 0$ for $i\in [n]$. Moreover, suppose $\sigma'(s)>0$ for all $s>0$. Then, there exists $\tilde{\Psi}\in\ckf$ such that $\tilde{\Psi}(\sigma(s))\le\Psi(s)\sigma'(s)$ for all $s>0$.\QEDB
\end{lemma}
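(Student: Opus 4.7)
The plan is to construct $\tilde{\Psi}\in\ckf$ explicitly in the form $\tilde{\Psi}(t)=b_1t^{q_1}+b_2t^{q_2}$, with $q_1\in(0,1)$, $q_2>1$, and $b_1,b_2>0$ chosen sufficiently small. First I would apply Lemma \ref{lemdom} twice: to $\sigma$ itself, which is positive on $(0,\infty)$ since $\sigma(0)=0$ and $\sigma'>0$, to obtain $\sigma(s)\le \ov{c}(s^{r_1}+s^{r_n})$; and to the polynomial $\sigma'(s)=\sum_{i=1}^n c_ir_i s^{r_i-1}$, whose coefficients $c_ir_i$ are nonzero and which is positive by hypothesis, yielding $\sigma'(s)\ge \un{c}'(s^{r_1-1}+s^{r_n-1})$ for some $\un{c}'>0$.

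Writing $\Psi(s)=a_1s^{p_1}+a_2s^{p_2}$ with $a_i>0$, $p_1\in(0,1)$, $p_2>1$, and expanding the product, the lower bound on $\sigma'$ gives
\begin{equation*}
\Psi(s)\sigma'(s)\ge \un{c}'\min(a_1,a_2)\bigl(s^{p_1+r_1-1}+s^{p_2+r_n-1}\bigr),
\end{equation*}
after discarding the two intermediate nonnegative cross terms. For the composition $\tilde{\Psi}(\sigma(s))$, I would apply Lemma \ref{jensenlemma} to the upper bound on $\sigma$, obtaining
\begin{equation*}
\tilde{\Psi}(\sigma(s))\le b_1\ov{c}^{q_1}(s^{r_1q_1}+s^{r_nq_1})+b_2\ov{c}^{q_2}2^{q_2-1}(s^{r_1q_2}+s^{r_nq_2}).
\end{equation*}
The key is then to pick $q_1,q_2$ so that every exponent $r_iq_j$ lies in the interval $[p_1+r_1-1,\,p_2+r_n-1]$. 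Since the minimum of these exponents is $r_1q_1$ and the maximum is $r_nq_2$, this reduces to the two inequalities $q_1\ge 1-\tfrac{1-p_1}{r_1}$ and $q_2\le 1+\tfrac{p_2-1}{r_n}$. Because $r_1\ge 1$ and $p_1\in(0,1)$, the lower bound on $q_1$ lies strictly in $(0,1)$, and because $p_2>1$, the upper bound on $q_2$ is strictly greater than $1$; thus admissible $q_1\in(0,1)$ and $q_2>1$ exist.

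Once $q_1,q_2$ are fixed, Lemma \ref{lem_sandw} allows me to bound each term $s^{r_iq_j}$ by $s^{p_1+r_1-1}+s^{p_2+r_n-1}$, so
\begin{equation*}
\tilde{\Psi}(\sigma(s))\le C(b_1,b_2)\bigl(s^{p_1+r_1-1}+s^{p_2+r_n-1}\bigr),
\end{equation*}
where $C(b_1,b_2)$ is linear in $b_1,b_2$ and hence vanishes as $(b_1,b_2)\to 0$. I would finish by selecting $b_1,b_2>0$ small enough that $C(b_1,b_2)\le \un{c}'\min(a_1,a_2)$, which yields the desired inequality. I expect the main obstacle to be the exponent bookkeeping, in particular verifying that the chosen range of $(q_1,q_2)$ is nonempty inside $(0,1)\times(1,\infty)$; this is precisely where the standing hypothesis $r_1\ge 1$ enters critically, since without it the lower bound on $q_1$ could fail to be positive.
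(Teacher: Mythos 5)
Your proof is correct and follows essentially the same route as the paper: postulate $\tilde{\Psi}(t)=b_1t^{q_1}+b_2t^{q_2}$, lower-bound $\Psi\,\sigma'$ by $\un{c}\,(s^{p_1+r_1-1}+s^{p_2+r_n-1})$ using Lemma~\ref{lemdom}, upper-bound $\tilde{\Psi}(\sigma(s))$ by the same two extremal powers using Lemmas~\ref{jensenlemma} and~\ref{lem_sandw}, and shrink the coefficients. The only cosmetic difference is that you apply Lemma~\ref{lemdom} to $\sigma'$ alone and drop the cross terms, whereas the paper applies it directly to the positive polynomial $\Psi(s)\sigma'(s)$, and you keep $(q_1,q_2)$ in an admissible range where the paper fixes them at the endpoints $\tilde p=\tfrac{p_1+r_1-1}{r_1}$, $\tilde q=\tfrac{p_2+r_n-1}{r_n}$.
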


\vspace{0.1cm}
\begin{proof}
    Let $\Psi(s)=a s^{p}+bs^q$, where $p\in (0,1)$, $q>1$, and $a,b>0$. It suffices to consider $n\ge 3$. We have $\sigma'(s)=\sum_{i=1}^n r_i c_i s^{r_i-1}$, which yields
    \begin{align*}
    \Psi(s)\sigma'(s)&=\sum_{i=1}^n ar_i c_i s^{p+r_i-1}+\sum_{j=1}^n br_j c_j s^{q+r_j-1}\notag\\&\ge {c}s^{p+r_1-1}+{c}s^{q+r_n-1},
    \end{align*}
    where ${c}$ is obtained from Lemma \ref{lemdom}. We then postulate $\tilde{\Psi}\in\ckf$ of the form $\tilde{\Psi}(s)=\varepsilon s^{\tilde{p}}+\varepsilon s^{\tilde{q}}$, where $\tilde{p}=\frac{p+r_1-1}{r_1}\in (0,1)$ and $\tilde{q}=\frac{q+r_n-1}{r_n}>1$. Indeed, computations yield:
    \begin{align*}
        &\tilde{\Psi}(\sigma(s))=\varepsilon\left(\sum_{i=1}^n c_i s^{r_i}\right)^{\tilde{p}}+\varepsilon\left(\sum_{j=1}^n c_j s^{r_j}\right)^{\tilde{q}}\\
        &~~~~\le \varepsilon\left(\sum_{i=1}^n |c_i^{\tilde{p}}|s^{r_i \tilde{p}}+n^{\tilde{q}-1}\sum_{j=1}^n |c_j^{\tilde{q}}|s^{r_j\tilde{q}}\right)\\
    &~~~~\le \varepsilon\left(\sum_{i=1}^n |c_i^{\tilde{p}}|+n^{\tilde{q}-1}\sum_{j=1}^n |c_j^{\tilde{q}}|\right)(s^{p+r_1-1}+s^{q+r_n-1}),
    \end{align*}
    where the middle inequality follows from Lemma \ref{jensenlemma} and the bottom inequality follows from Lemma \ref{lem_sandw} with the fact that $r_i\tilde{p}, r_i \tilde{q}\in [p+r_1-1, q+r_n-1]$ for all $i\in [n]$. We can then pick
    \begin{equation*}
        \varepsilon\le \frac{{c}}{\sum_{i=1}^n |c_i^{\tilde{p}}|+n^{\tilde{q}-1}\sum_{j=1}^n |c_j^{\tilde{q}}|},
    \end{equation*}
    to obtain
    \begin{equation*}
        \tilde{\Psi}(\sigma(s))\le {c}s^{p+r_1-1}+{c}s^{q+r_n-1}\le \Psi(s)\sigma'(s),
    \end{equation*}
    which yields the result.
\end{proof}
\vspace{0.1cm}
\begin{lemma}\label{lem_invscale}
    Suppose $\gamma\in\cki$, and let $\Psi\in\ckf$. Then there exists $\lambda\ge 1$ and $\tilde{\Psi}\in\ckf$ such that $\lambda\gamma^{\lambda-1}(s)\gamma'(s)\Psi(s)\ge \tilde{\Psi}(\gamma^\lambda(s))$ for all $s> 0$.\QEDB
\end{lemma}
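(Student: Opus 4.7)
The plan is to convert the inequality into a statement purely about the polynomial $\sigma := \gamma^{-1} \in \ckp$, and then close it by a careful choice of $\lambda$, $\tilde{p}$, $\tilde{q}$, and small coefficients. Writing $\sigma(v) = \sum_{i=1}^{n} c_i v^{p_i}$ with $0 < p_1 < \cdots < p_n$ and using the identity $\gamma'(\sigma(v))\sigma'(v) = 1$, the substitution $s = \sigma(v)$ recasts the target inequality as
\begin{equation*}
\lambda v^{\lambda-1}\, \Psi(\sigma(v)) \ge \tilde{\Psi}(v^{\lambda})\, \sigma'(v), \quad \forall\, v > 0,
\end{equation*}
which removes $\gamma$ and reduces the problem to a polynomial comparison. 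I would then invoke Lemma \ref{lemdom} on $\sigma$, and also on $\sigma'$ (which is a sum of powers, strictly positive on $(0,\infty)$ by the $\ckp$ hypothesis), to obtain constants $\un{c}, \ov{d} > 0$ with $\sigma(v) \ge \un{c}(v^{p_1}+v^{p_n})$ and $\sigma'(v) \le \ov{d}(v^{p_1-1}+v^{p_n-1})$. Combining these with the elementary estimates $(x+y)^{p} \ge 2^{p-1}(x^p+y^p)$ for $p\in(0,1)$ and $(x+y)^{q}\ge x^q + y^q$ for $q\ge 1$, and then retaining only the smallest and largest resulting powers, one arrives at the lower bound
\begin{equation*}
\lambda v^{\lambda-1}\Psi(\sigma(v)) \ge \tfrac{a\lambda\un{c}^p}{2}\, v^{\lambda-1+p_1 p} + b\lambda\un{c}^q\, v^{\lambda-1+p_n q}.
\end{equation*}

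The heart of the argument is the following choice: pick $\lambda \ge \max\{1,\, 2p_1(1-p)\}$, and set
\begin{equation*}
\tilde{p} = 1 - \tfrac{p_1(1-p)}{\lambda} \in (0,1), \quad \tilde{q} = 1 + \tfrac{p_n(q-1)}{\lambda} > 1.
\end{equation*}
Expanding $\tilde{\Psi}(v^{\lambda})\sigma'(v)$ then produces four monomials with exponents $\lambda\tilde{p}+p_1-1$, $\lambda\tilde{p}+p_n-1$, $\lambda\tilde{q}+p_1-1$, and $\lambda\tilde{q}+p_n-1$. A direct computation using only $p_1 \le p_n$ and $p < 1 < q$ shows that each of these exponents lies in the interval $[\lambda-1+p_1 p,\; \lambda-1+p_n q]$, with equality attained at the two endpoints by construction. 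Lemma \ref{lem_sandw} therefore bounds each such monomial $v^{r}$ by $v^{\lambda-1+p_1 p} + v^{\lambda-1+p_n q}$, so
\begin{equation*}
\tilde{\Psi}(v^{\lambda})\sigma'(v) \le 2\ov{d}(\tilde{a}+\tilde{b})\!\left(v^{\lambda-1+p_1 p} + v^{\lambda-1+p_n q}\right).
\end{equation*}
Choosing $\tilde{a}, \tilde{b} > 0$ small enough that $2\ov{d}(\tilde{a}+\tilde{b}) \le \min\{\tfrac{a\lambda\un{c}^p}{2},\, b\lambda\un{c}^q\}$ closes the inequality, and $\tilde{\Psi}(s) = \tilde{a} s^{\tilde{p}} + \tilde{b} s^{\tilde{q}}$ is in $\ckf$ by construction.

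The main obstacle I anticipate is the band-inclusion step: verifying that all four exponents generated by $\tilde{\Psi}(v^{\lambda})\sigma'(v)$ actually fit inside $[\lambda-1+p_1 p,\, \lambda-1+p_n q]$ uniformly in $v$, since this is what lets Lemma \ref{lem_sandw} collapse the RHS onto the two extremal powers kept on the LHS. The motivating calibration is $\lambda\tilde{p}+p_1-1 = \lambda-1+p_1 p$ and $\lambda\tilde{q}+p_n-1 = \lambda-1+p_n q$ by design, so the other two exponents must be handled by monotonicity in $p_1 \le p_n$ and $p < 1 < q$. Once that bookkeeping is in place, the coefficient smallness argument and the $\ckf$ membership of $\tilde{\Psi}$ follow by inspection.
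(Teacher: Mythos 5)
Your argument is correct and differs from the paper's in a way that is worth noting. The paper works directly in the variable $\gamma(s)$ and splits into the cases $\gamma(s)<1$ and $\gamma(s)\geq 1$: in each case it dominates $(\gamma^{-1})'(\gamma(s))$ by $R\gamma^{r_k-1}(s)$ with $k=1$ or $k=n$, producing two different candidates $\tilde{\Psi}_1,\tilde{\Psi}_n\in\ckf$, and then merges them via Lemma~\ref{lem_fxtbd}. You instead substitute $v=\gamma(s)$, which turns the target into the clean polynomial inequality $\lambda v^{\lambda-1}\Psi(\sigma(v))\ge\tilde{\Psi}(v^\lambda)\sigma'(v)$, apply Lemma~\ref{lemdom} to \emph{both} $\sigma$ and $\sigma'$ to get uniform two-sided bounds valid for all $v>0$, and then collapse all four monomials in the expansion of $\tilde{\Psi}(v^\lambda)\sigma'(v)$ onto the two extremal exponents via the band-inclusion argument and Lemma~\ref{lem_sandw}. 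This avoids the case split and the need to invoke Lemma~\ref{lem_fxtbd}, and it also yields a mildly less restrictive choice of $\lambda$ (you need only $\lambda>(1-p)r_1$ — with your factor of two, $\lambda\ge 2(1-p)r_1$ — while the paper requires $\lambda>(1-p)r_n$). The key steps that make this work, and that you verify carefully, are: $\sigma'\in\cc(\re_{>0},\re_{>0})$ is itself a positive polynomial with nonzero coefficients and distinct ordered exponents (so Lemma~\ref{lemdom} applies, even with negative exponents $p_i-1$), and the calibration $\lambda\tilde{p}+p_1-1=\lambda-1+p_1p$, $\lambda\tilde{q}+p_n-1=\lambda-1+p_nq$ pins the two extreme RHS exponents to the two LHS exponents, with the remaining two falling inside the band by monotonicity. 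The coefficient-smallness step and $\ckf$ membership then close the argument as you state.
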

\begin{proof}
    Denote $\Psi(s)=a s^{p}+b s^{q}$, where $p\in (0,1)$ and $q>1$ and let $\gamma^{-1}(s)=\sum_{i=1}^n c_i s^{r_i}$, where $0<r_1<r_2<...<r_n$ and $c_i\neq 0$ for all $i\in [n]$. 
    From Lemma \ref{lemdom} we can pick $c>0$ such that $\gamma^{-1}(s)\ge c(s^{r_1}+s^{r_n})$. It follows that $\gamma^{-1}(s)\ge cs^{r_1}$ and  $\gamma^{-1}(s)\ge cs^{r_n}$. Since these inequalities involve class-$\mathcal{K}_{\infty}$ functions, it also follows that $\gamma(s)\leq \tilde{c}_k s^\frac{1}{r_k}$ for all $k\in\{1,n\}$ and $s\ge 0$, where $\tilde{c}_k=c^{-\frac{1}{r_k}}$ and where the upper bound is the inverse function of $cs^{r_k}$. It follows that $\Psi(s)\ge a c^{p}\gamma^{p r_k}(s)+b c^{q}\gamma^{q r_k}(s)$ for $k\in\{1,n\}$. Let $R:=\sum_{i=1}^n r_i |c_i|$, and consider the following two cases:

    \vspace{0.1cm}\noindent 
    \textsl{Case 1:} $\gamma(s)\in (0,1)$, which implies
    \begin{equation*}
        \sum_{i=1}^n r_i c_i \gamma^{r_i-1}(s)\le \sum_{i=1}^n r_i |c_i|\gamma^{r_1-1}(s)= R\gamma^{r_1-1}(s).
    \end{equation*}
    Denote $\varrho_\lambda(s):=\lambda\gamma^{\lambda-1}(s)\gamma'(s)\Psi(s)$. By the inverse function theorem, we have 
    \begin{equation}\label{derivativegamma}
    \gamma'(s)=\frac{1}{(\gamma^{-1})'(\gamma(s))},
    \end{equation}
    for $s>0$,
    which yields
    \begin{align*}
        \varrho_\lambda(s)&=\frac{\lambda\gamma^{\lambda-1}(s)\Psi(s)}{\sum_{i=1}^n r_i c_i \gamma^{r_i-1}(s)}\\&\ge \frac{\lambda a c^{p}\gamma^{\lambda-1+p r_k}(s)+\lambda b c^{q}\gamma^{\lambda-1+qr_k}(s)}{R\gamma^{r_1-1}(s)}\\
        &= \varepsilon\gamma^{\lambda+p r_k-r_1}(s)+\varepsilon\gamma^{\lambda+q r_k-r_1}(s),\quad k\in\{1,n\}
    \end{align*}
    where $\varepsilon=\frac{\lambda}{R}\min\left\{ac^{p}, b c^{q}\right\}$. Choosing $k=1$ yields $\varrho_\lambda(s)\ge \tilde{\Psi}_1(\gamma^\lambda(s))$,
where $\tilde{\Psi}_k$ is given by
\begin{equation*}
    \tilde{\Psi}_k(s):=\varepsilon s^{1-\frac{1}{\lambda}{(1-p)r_k}}+\varepsilon s^{1+\frac{1}{\lambda}{(q-1)r_k}}.
\end{equation*}
Picking $\lambda>(1-p)r_1$ guarantees $\tilde{\Psi}_1\in\ckf$. 

\vspace{0.1cm}\noindent 
    \textsl{Case 2:} $\gamma(s)\ge 1$, which implies $\sum_{i=1}^n r_i c_i \gamma^{r_i-1}(s)\le R\gamma^{r_n-1}(s)$. We can then perform similar calculations done above to obtain:
\begin{equation*}
    \varrho_\lambda(s)\ge \varepsilon\gamma^{\lambda+p r_k-r_n}(s)+\varepsilon\gamma^{\lambda+q r_k-r_n}(s),\quad k\in \{1,n\}.
\end{equation*}
If we choose $k=n$, we obtain $\varrho_\lambda(s)\ge \tilde{\Psi}_n(\gamma^\lambda(s))$,
where $\tilde{\Psi}_n\in\ckf$ if $\lambda>(1-p)r_n$. 

\vspace{0.1cm}\noindent 
Since we assume $r_1<r_n$, it suffices to have $\lambda>(1-p)r_n$. By Lemma \ref{lem_fxtbd}, we know that there exists $\tilde{\Psi}\in\ckf$ such that $\min(\tilde{\Psi}_1(s), \tilde{\Psi}_n(s))\ge \tilde{\Psi}(s)$ for all $s>0$. By combining both cases, we obtain
\begin{equation*}
    \varrho_\lambda(s)\ge \min\left\{\tilde{\Psi}_1(\gamma^\lambda(s)), \tilde{\Psi}_n(\gamma^\lambda(s))\right\}\ge \tilde{\Psi}(\gamma^\lambda(s)),
\end{equation*}
which establishes the result.
\end{proof}

\vspace{0.1cm}
\begin{lemma}\label{lem_k1fxt}
    Suppose $V:\re^n\to\re_+$ is a FxT-ISS Lyapunov function for system \eqref{sysfxts}. Let $\sigma\in\ck_\infty$ take the form $\sigma(s)=\sum_{i=1}^n c_i s^{r_i}$, where $1\le r_1< r_2< ..< r_n$ and $c_i\neq 0$ for $i\in [n]$. Moreover, suppose $\sigma'(s)>0$ for all $s>0$. It follows that $\tilde{V}(x)=\sigma(V(x))$ is a FxT-ISS Lyapunov function for system \eqref{sysfxts}.\QEDB
\end{lemma}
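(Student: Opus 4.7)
The plan is to verify the two defining properties of a FxT-ISS Lyapunov function for $\tilde V := \sigma\circ V$. The sandwich condition follows immediately: since $\sigma\in\ck_\infty$, the compositions $\sigma\circ\underline\alpha$ and $\sigma\circ\overline\alpha$ belong to $\ck_\infty$, and monotonicity of $\sigma$ turns the bounds on $V$ into the bounds $\sigma(\underline\alpha(|x|))\le \tilde V(x)\le\sigma(\overline\alpha(|x|))$.

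For the decrease condition, I would set $\tilde\chi := \sigma\circ\chi$, which belongs to $\ck_\infty$. The hypothesis $\sigma'(s)>0$ for all $s>0$, combined with $\sigma\in\ck_\infty$, implies that $\sigma$ is strictly increasing on $[0,\infty)$, so the implication hypothesis $\tilde V(x)\ge\tilde\chi(|u|)$ is equivalent to $V(x)\ge\chi(|u|)$. The FxT-ISS Lyapunov property of $V$ therefore yields $\dot V(x,u)\le -\Psi(V(x))$ whenever the new hypothesis is satisfied.

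Next, I would invoke the chain rule for the upper Dini directional derivative: since $\sigma$ is $\mathcal{C}^1$ and $V$ is locally Lipschitz, a first-order Taylor expansion of $\sigma$ together with the estimate $|V(x+hv)-V(x)|=O(h)$ yields $D\tilde V(x;v) = \sigma'(V(x))\,DV(x;v)$ for $V(x)>0$. Taking $v=f(x,u)$ produces $\dot{\tilde V}(x,u) \le -\sigma'(V(x))\Psi(V(x))$. The boundary case $V(x)=0$ (where $\sigma'(V(x))$ may vanish when $r_1>1$) is handled separately by using the growth $\sigma(s)=O(s^{r_1})$ as $s\to 0^+$ together with local Lipschitzness of $V$ to deduce $D\tilde V(x;v)\le 0$ directly, which suffices since $\tilde\Psi(0)=0$.

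The final step is to apply Lemma \ref{lem_k1scale} to obtain $\tilde\Psi\in\ckf$ satisfying $\sigma'(s)\Psi(s)\ge \tilde\Psi(\sigma(s))$ for all $s>0$. Substituting $s=V(x)$ then gives $\dot{\tilde V}(x,u)\le -\tilde\Psi(\tilde V(x))$, which is the required FxT-ISS decrease inequality. The bulk of the technical work is already encapsulated in Lemma \ref{lem_k1scale}; beyond that, the proof is essentially bookkeeping of compositions, with the only mild subtlety being the chain-rule computation at $V(x)=0$, which the proposed growth estimate resolves.
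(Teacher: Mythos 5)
Your proposal is correct and follows essentially the same route as the paper: verify the sandwich bounds via monotonicity of $\sigma$, set $\tilde\chi=\sigma\circ\chi$, apply the Dini chain rule to get $D\tilde V(x;v)=\sigma'(V(x))\,DV(x;v)$, invoke Lemma~\ref{lem_k1scale} to produce $\tilde\Psi\in\ckf$ with $\tilde\Psi(\sigma(s))\le\sigma'(s)\Psi(s)$, and treat the $V(x)=0$ case separately. The only cosmetic difference is at $V(x)=0$: the paper applies the chain rule there directly (valid because $r_i\ge1$ makes $\sigma$ $\mathcal C^1$ at the origin, so $D\tilde V=\sigma'(0)DV\le 0$), whereas you use a growth estimate, which works for $r_1>1$ but for $r_1=1$ still implicitly needs the fact that $DV\le0$ — a harmless omission since you clearly have that decrease available.
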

\begin{proof}
    First, it is easy to verify that $\sigma(\underline{\alpha}(|x|))\le \tilde{V}(x)\le \sigma(\overline{\alpha}(|x|)).$ Let $\tilde{\chi}=\sigma\circ \chi$, we then have
    \begin{equation*}
        \tilde{V}(x(t))\ge \tilde{\chi}(|u(t)|)\Rightarrow D^+ V(x(t))\le -\Psi(V(x(t))),
    \end{equation*}
    along the trajectories of \eqref{sysfxts} for all $t\ge 0$. Let $t\ge 0$ such that $V(x(t))>0$ and suppose $\tilde{V}(x(t))\ge \tilde{\chi}(|u(t)|)$, which implies
    \begin{align*}
        D^+\tilde{V}(x(t))&\le \sigma'(V(x(t)))D^+V(x(t))\\
        &\le -\sigma'(V(x(t)))\Psi(V(x(t))).
    \end{align*}
    By Lemma \ref{lem_k1scale}, we can pick $\tilde{\Psi}\in\ckf$ such that $\tilde{\Psi}(\sigma(s))\le \sigma'(s)\Psi(s)$ for all $s>0$, which implies $D^+\tilde{V}(x(t))\le -\tilde{\Psi}(\tilde{V}(x(t))).$
    Next, let $t\ge 0$ such that $V(x(t))=0$, which implies $\tilde{V}(x(t))=0$. Then, if $\tilde{V}(x(t))\ge \tilde{\chi}(|u(t)|)$ holds, we have $|u(t)|=0$ and $D^+V(x(t))\le 0$, which yields
    \begin{equation*}
        D^+\tilde{V}(x(t))\le\sigma'(0) D^+ V(x(t))\le 0.
    \end{equation*}
    This establishes the result.
\end{proof}
\vspace{0.1cm}
\begin{lemma}\label{lem_invfxt}
     Suppose $V:\re^n\to\re_+$ is a FxT-ISS Lyapunov function for system \eqref{sysfxts}, and let $\gamma\in\cki$. Then, there exists $\lambda\ge 1$ such that
     $\tilde{V}(x)=\gamma^\lambda(V(x))$
     is a FxT-ISS Lyapunov function for system \eqref{sysfxts}.\QEDB
\end{lemma}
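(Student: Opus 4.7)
The plan is to mirror the structure of the proof of Lemma \ref{lem_k1fxt}, but use Lemma \ref{lem_invscale} in place of Lemma \ref{lem_k1scale} to produce the dominating class-$\ckf$ function for the transformed Lyapunov inequality. First, I would verify the sandwich bound: since $\gamma\in\cki\subset\ck_\infty$ and $\lambda\ge 1$, the composition $\gamma^\lambda$ lies in $\ck_\infty$, and composing with \eqref{sandw} applied to $V$ gives $\gamma^\lambda(\underline{\alpha}(|x|))\le\tilde{V}(x)\le \gamma^\lambda(\overline{\alpha}(|x|))$, which are valid $\ck_\infty$ sandwich bounds for $\tilde{V}$. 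Next, I would define the transformed input gain $\tilde{\chi}:=\gamma^\lambda\circ\chi\in\ck_\infty$, so that by strict monotonicity of $\gamma^\lambda$ the condition $\tilde{V}(x(t))\ge \tilde{\chi}(|u(t)|)$ is equivalent to $V(x(t))\ge \chi(|u(t)|)$; the FxT-ISS Lyapunov property of $V$ then yields $D^+V(x(t))\le -\Psi(V(x(t)))$ along trajectories of \eqref{sysfxts}.

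The central step is a Dini-derivative chain rule for $\tilde{V}$ on the set where $V(x(t))>0$. Here $\gamma$ is differentiable: the inverse function theorem applied to $\gamma^{-1}\in\ckp$ (strictly increasing with strictly positive derivative on $(0,\infty)$) gives $\gamma'(s)=1/(\gamma^{-1})'(\gamma(s))>0$. Since $\gamma^\lambda$ is nondecreasing and differentiable at $V(x(t))>0$, the standard monotone chain rule for Dini derivatives yields $D^+\tilde{V}(x(t))\le \lambda\gamma^{\lambda-1}(V(x(t)))\gamma'(V(x(t)))D^+V(x(t))$. Combined with the preceding bound on $D^+V(x(t))$, this gives $D^+\tilde{V}(x(t))\le -\lambda\gamma^{\lambda-1}(V(x(t)))\gamma'(V(x(t)))\Psi(V(x(t)))$. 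Invoking Lemma \ref{lem_invscale} then furnishes $\lambda\ge 1$ and $\tilde{\Psi}\in\ckf$ such that this upper bound is itself dominated by $-\tilde{\Psi}(\gamma^\lambda(V(x(t))))=-\tilde{\Psi}(\tilde{V}(x(t)))$, producing the required FxT-ISS Lyapunov inequality.

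Finally, I would handle the edge case $V(x(t))=0$ separately, as in Lemma \ref{lem_k1fxt}. The sandwich bound forces $x(t)=0$ and hence $\tilde{V}(x(t))=0$; moreover, $\tilde{V}(x(t))\ge \tilde{\chi}(|u(t)|)$ forces $|u(t)|=0$, and the original hypothesis yields $D^+V(x(t))\le 0$. The main obstacle at this step is that $\gamma\in\cki$ need not be differentiable at the origin, and $\gamma'(s)$ can blow up as $s\to 0^+$ when the smallest exponent $r_1$ of $\gamma^{-1}$ exceeds one, so the naive chain-rule identity used in Lemma \ref{lem_k1fxt} is unavailable. I would resolve this by enlarging $\lambda$ beyond the threshold used in Lemma \ref{lem_invscale} so that additionally $\lambda>r_1$, which regularizes $\gamma^\lambda$ at the origin with $(\gamma^\lambda)'(0)=0$; a direct limsup estimate, using $V(x(t+h))=o(h)$ (which follows from $V\ge 0$ together with $D^+V(x(t))\le 0$) and the local scaling $\gamma^\lambda(s)\sim c\,s^{\lambda/r_1}$ near zero, then yields $D^+\tilde{V}(x(t))\le 0=-\tilde{\Psi}(0)$. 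The conclusion of Lemma \ref{lem_invscale} remains valid for this larger $\lambda$, completing the verification that $\tilde{V}$ is a FxT-ISS Lyapunov function.
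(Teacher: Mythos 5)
Your proof is correct and follows the same overall route as the paper: sandwich bound, transformed input gain $\tilde\chi = \gamma^\lambda\circ\chi$, chain rule on $\{V>0\}$ feeding into Lemma \ref{lem_invscale}, and a separate treatment of the set $\{V=0\}$ handled by enlarging $\lambda$. Your identification of the obstacle at the origin (blowup of $\gamma'$ when $r_1>1$) is exactly the issue the paper addresses, and the fix is the same: take $\lambda$ larger than $r_1$ (the paper uses $\lambda>r_n$, which simultaneously covers the threshold from Lemma \ref{lem_invscale}). The one place where your argument differs in mechanism rather than substance is the edge case itself. You bypass the chain rule with a direct $\limsup$ estimate using $V(x(t+h))=o(h)$ together with $\gamma^\lambda(s)\lesssim s^{\lambda/r_1}$, whereas the paper shows that for $\lambda>r_1$ the map $\sigma=\gamma^\lambda$ is in fact $\mathcal{C}^1$ on all of $[0,\infty)$ with $\sigma'(0)=0$ (the factor $\gamma^{\lambda-1}$ kills the blowup of $\gamma'$), and then applies the chain rule $D^+\tilde V \le \sigma'(0)D^+V \le 0$ exactly as in Lemma \ref{lem_k1fxt}. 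So your claim that the ``naive chain-rule identity used in Lemma \ref{lem_k1fxt} is unavailable'' is slightly too strong --- it becomes available once $\mathcal{C}^1$ regularity at $0$ is established --- but your alternative $\limsup$ computation is a perfectly valid replacement, and the two arguments are interchangeable.
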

\begin{proof}
    Let $\sigma(s)=\gamma^\lambda(s)$ and $\gamma^{-1}(s)=\sum_{i=1}^n c_i s^{r_i}$, where $0<r_1<r_2<...<r_n$ and $c_i\neq 0$ for all $i\in [n]$. First, we show that $\sigma$ is $\mathcal{C}^1$ on $[0, \infty)$. Indeed, by Lemma \ref{lemdom}, there exists $c>0$ such that $\gamma^{-1}(s)\ge c(s^{r_1}+s^{r_n})$ for all $s\ge 0$. Since $\gamma^{-1}$ is $\cc^1$ with a nonzero derivative on $(0, \infty)$, it follows from the inverse function theorem that $\sigma$ is also $\cc^1$ on $(0, \infty)$, and, using \eqref{derivativegamma}, its derivative is given by 
    \begin{equation}\label{derivatiegammaproof}
        \sigma'(s)=\frac{\lambda\gamma^{\lambda-1}(s)}{(\gamma^{-1})'(\gamma(s))}=\frac{\lambda\gamma^{\lambda-1}(s)}{\sum_{i=1}^n r_i c_i \gamma^{r_i-1}(s)},
    \end{equation}
    for all $s>0$.
    Moreover, note that $\gamma(s)\le \tilde{c}_k s^\frac{1}{r_k}$ for each $k\in \{1,n\}$ and $s\ge 0$, where $\tilde{c}_k=c^{-\frac{1}{r_k}}$. For $\lambda>r_n>r_1$, we obtain
    \begin{align*}
        0\le \sigma^+(0):=\lim_{s\to 0^+}\frac{\sigma(s)-\sigma(0)}{s}\le
        \lim_{s\to 0^+}\frac{\tilde{c}^\lambda_1 s^\frac{\lambda}{r_1}}{s}=0,
    \end{align*}
    which results in $\sigma^+(0)=0$. Since the derivative of $\gamma^{-1}$ is a positive definite polynomial, we can use Lemma \ref{lemdom} to obtain $\tilde{c}>0$ such that $(\gamma^{-1})'(\gamma(s))\ge \tilde{c}\gamma^{r_1-1}(s)$ for all $s\ge 0$. For $\lambda>r_1$, we also have
    \begin{equation*}
        0\le\lim_{s\to 0^+}\frac{\lambda\gamma^{\lambda-1}(s)}{\sum_{i=1}^n r_i c_i \gamma^{r_i-1}(s)}\le\lim_{s\to 0^+}\frac{\lambda}{\tilde{c}}\gamma^{\lambda-r_1}(s)=0,
    \end{equation*}
    which implies that $\sigma$ is $\mathcal{C}^1$ on $[0, \infty)$. 

    \vspace{0.1cm}
    Next, suppose $\tilde{V}(x)\ge \tilde{\chi}(|u|)$, where $\tilde{\chi}=\sigma\circ\chi$. This implies $V(x)\ge \chi(|u|)$, and thus there exists $\Psi\in\ckf$ such that $D^+ V(x(t))\le -\Psi(V(x(t)))$ along the trajectories of \eqref{sysfxts} for all $t\ge 0$. Let $t>0$ be such that $V(x(t))>0$. Performing similar computations as in the proof of Lemma \ref{lem_k1fxt}, we obtain:
    \begin{align*}
        D^+\tilde{V}(x(t))\le \sigma'(V(x(t)))D^+ V(x(t))\le -\varrho_\lambda(V(x(t))),
    \end{align*}
    where
    \begin{equation*}
        \varrho_\lambda(s)=\frac{\lambda\gamma^{\lambda-1}(s)\Psi(s)}{\sum_{i=1}^n r_i c_i \gamma^{r_i-1}(s)}=\lambda\gamma^{\lambda-1}(s)\gamma'(s)\Psi(s),
    \end{equation*}
    where the last equality follows by \eqref{derivativegamma} and \eqref{derivatiegammaproof}. Thus, by Lemma \ref{lem_invscale}, there exists $\lambda\geq1$ and $\tilde{\Psi}\in\ckf$ such that 
    $\varrho_\lambda(s)\ge \tilde{\Psi}(\gamma^\lambda(s))$ for all $s>0$. Therefore, we obtain:
    \begin{equation*}
        D^+\tilde{V}(x(t))\le -\tilde{\Psi}(\gamma^\lambda(V(x(t))))=-\tilde{\Psi}(\tilde{V}(x(t))).
    \end{equation*}
    Now pick $t\ge 0$ such that $V(x(t))=0$, which implies $\tilde{V}(x(t))=0$. If $\tilde{V}(x(t))=0$, then $\tilde{V}(x(t))\ge \tilde{\chi}(|u(t)|)$ implies $u(t)=0$, and hence $D^+V(x(t))\le 0$. We then have
    \begin{align*}
        D^+\tilde{V}(x(t))&\le \sigma'(0)D^+ V(x(t))\le 0,
    \end{align*}
    which establishes the result.
\end{proof}

\vspace{0.1cm}
\begin{lemma}\label{lem_maxinv}
    Let $\alpha_1,...,\alpha_N\in\ck_\infty$ and $\alpha(s):=\max_k \alpha_k(s)$. Then, $\alpha^{-1}(s)=\min_k \alpha_k^{-1}(s)$.\QEDB
\end{lemma}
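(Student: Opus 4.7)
The plan is to fix an arbitrary $s\ge 0$, set $t:=\alpha^{-1}(s)$ (which is well-defined since the max of finitely many $\ck_\infty$ functions is itself continuous, strictly increasing, zero at zero, and radially unbounded, hence lies in $\ck_\infty$), and show directly that $t=\min_k\alpha_k^{-1}(s)$. The proof reduces to a standard two-sided inequality argument exploiting the monotonicity of each $\alpha_k$ and the fact that the maximum over a finite set is attained.

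For the inequality $t\le \min_k\alpha_k^{-1}(s)$, I would observe that for every $k$ we have $\alpha_k(t)\le \max_j \alpha_j(t)=\alpha(t)=s$; applying $\alpha_k^{-1}$ (which belongs to $\ck_\infty$ and is therefore strictly increasing) preserves the inequality to yield $t\le \alpha_k^{-1}(s)$ for all $k$, and hence $t\le \min_k \alpha_k^{-1}(s)$. For the reverse inequality, I would use that the finite maximum is attained: there exists some index $j\in [N]$ such that $\alpha_j(t)=\max_k \alpha_k(t)=s$, so $t=\alpha_j^{-1}(s)\ge \min_k \alpha_k^{-1}(s)$. Combining both inequalities gives the desired identity.

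The edge case $s=0$ is handled automatically, since then $t=\alpha^{-1}(0)=0$ and each $\alpha_k^{-1}(0)=0$, so both sides agree trivially. There is no real obstacle in this proof; the only subtlety worth mentioning explicitly is that the existence of an attaining index $j$ relies on the finiteness of the collection $\{\alpha_k\}_{k=1}^N$, which is given in the statement. No tools beyond continuity, strict monotonicity of $\ck_\infty$ functions, and the definition of the pointwise maximum are needed.
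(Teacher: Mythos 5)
Your proof is correct and takes essentially the same approach as the paper's: both pick the point $t=\alpha^{-1}(s)$, use that the finite maximum is attained at some index $j$ so that $t=\alpha_j^{-1}(s)$, and use monotonicity of each $\alpha_k$ to get $t\le\alpha_k^{-1}(s)$ for every $k$. Your explicit handling of $s=0$ is a minor addition but not a substantive difference.
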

\begin{proof}
    We already know that $\alpha\in\ck_\infty$, so its inverse, $\alpha^{-1}$, is well-defined and also of class $\ck_\infty$. Fix $s>0$, and pick $r>0$ such that $\alpha(r)=s$. We have that $\alpha(r)=\alpha_i(r)$ for some $i\in [N]$, which implies $s=\alpha_i(r)\ge \alpha_j(r)$ for each $j\in [N]\setminus \{i\}$. We also have $\alpha^{-1}(s)=r=\alpha_i^{-1}(s)$, and since $\alpha_i^{-1}(s)=r\le \alpha_j^{-1}(s)$ for each $j\in [N]\setminus \{i\}$, we obtain the result.
\end{proof}
\vspace{0.1cm}
\begin{lemma}\label{lem_diss}
    For $i=1,2$, consider the functions $\alpha_i:\mathbb{R}_{\geq0}\to\mathbb{R}_{\geq0}$ and $\sigma_i:\mathbb{R}_{\geq0}\to\mathbb{R}_{\geq0}$, given by
    \begin{subequations}
        \begin{align*}
            \alpha_i(s)&=a_{i} s^{p_i}+b_{i}s^{q_i},\quad a_i,\ b_i>0,\ p_i\in (0,1),\ q_i>1\\
            \sigma_i(s)&=\sum_{j=1}^{n_i}c_{i,j} s^{\eta_{i,j}},\quad c_{i,j},\ \eta_{i,j}>0 \ \ \forall j\in [n_i].
        \end{align*}
    \end{subequations}
    Let $\gamma_i:=\alpha_i^{-1}\circ\sigma_i$, $c_i:=\max_j c_{i,j}$, and suppose that \eqref{conditionthm2} holds and ${\eta_{1,i}}{\eta_{2,j}}\in [p_2, q_2]$ for all $i\in [n_1], j\in [n_2]$. Then, there exists some $C>0$ such that if $c_2 \max_k c_1^{\eta_{2,k}}< C$, there exists $\hat{\gamma}_1\in \ckp$ and $\hat{\gamma}_2\in\cki$ such that $\hat{\gamma}_i(s)\ge\gamma_i(s)$ and $\hat{\gamma}_1\circ\hat{\gamma}_2(s)<s$ for all $s>0$.\QEDB
\end{lemma}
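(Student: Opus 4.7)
The plan is to construct $\hat{\gamma}_1\in\ckp$ and $\hat{\gamma}_2\in\cki$ explicitly as two-term sums of power functions (the second through its inverse), and to verify the three required conditions: $\hat{\gamma}_1>\gamma_1$, $\hat{\gamma}_2>\gamma_2$, and the small-gain condition, which by \eqref{sgeq} is equivalent to $\hat{\gamma}_1(s)<\hat{\gamma}_2^{-1}(s)$ for all $s>0$. Denote $\un{\eta}_i:=\min_k\eta_{i,k}$ and $\ov{\eta}_i:=\max_k\eta_{i,k}$. The key observation is that hypothesis \eqref{conditionthm2} is exactly what renders the interval $[p_2/\un{\eta}_2,\,q_2/\ov{\eta}_2]$ nonempty. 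I would take both $\hat{\gamma}_1$ and $\hat{\gamma}_2^{-1}$ of the common form $A(s^{u_1}+s^{u_2})$ with $u_1=p_2/\un{\eta}_2$ and $u_2=q_2/\ov{\eta}_2$, so that checking the small-gain condition reduces to a coefficient comparison.

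First I would construct $\hat{\gamma}_2\in\cki$ by prescribing its inverse $\hat{\gamma}_2^{-1}(t)=F(t^{u_1}+t^{u_2})\in\ckp$. The joint hypotheses $\eta_{1,i}\eta_{2,j}\in[p_2,q_2]$ and \eqref{conditionthm2} imply $u_k\eta_{2,j}\in[p_2,q_2]$ for every $j\in[n_2]$ and $k\in\{1,2\}$. Thus Lemma \ref{lem_sandw} yields $t^{u_k\eta_{2,j}}\le t^{p_2}+t^{q_2}$, and expanding $(t^{u_1}+t^{u_2})^{\eta_{2,j}}$ via Lemma \ref{jensenlemma} gives a bound of the form
\begin{equation*}
\sigma_2(\hat{\gamma}_2^{-1}(t))\le K c_2\bigl(F^{\un{\eta}_2}+F^{\ov{\eta}_2}\bigr)\bigl(t^{p_2}+t^{q_2}\bigr),
\end{equation*}
with $K>0$ depending only on $n_2$ and $\ov{\eta}_2$. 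Strict domination by $\alpha_2(t)=a_2t^{p_2}+b_2t^{q_2}$, which is equivalent to $\hat{\gamma}_2>\gamma_2$, then holds whenever $F\le F_{\max}(c_2)$, a quantity that grows unboundedly as $c_2\downarrow 0$.

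Next I would set $\hat{\gamma}_1(s)=C'(s^{u_1}+s^{u_2})$ with the same two exponents and verify $\hat{\gamma}_1\ge\gamma_1$. Using the pointwise lower bounds $\alpha_1(v)\ge a_1v^{p_1}$ and $\alpha_1(v)\ge b_1v^{q_1}$ together with Lemmas \ref{lem_sandw} and \ref{jensenlemma}, $\gamma_1=\alpha_1^{-1}\circ\sigma_1$ admits two-term polynomial upper bounds whose effective asymptotic exponents are $\un{\eta}_1/p_1$ as $s\to 0^+$ and $\ov{\eta}_1/q_1$ as $s\to\infty$. Crucially, the strict inequalities $u_1<\un{\eta}_1/p_1$ and $u_2>\ov{\eta}_1/q_1$ follow from $\un{\eta}_1\un{\eta}_2\ge p_2>p_1 p_2$ and $\ov{\eta}_1\ov{\eta}_2\le q_2<q_1 q_2$ (using $p_1<1<q_1$), respectively. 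Consequently, $\gamma_1(s)/(s^{u_1}+s^{u_2})$ is continuous on $(0,\infty)$ and vanishes at both endpoints, hence attains a finite supremum; choosing $C'$ slightly greater than this supremum yields $\hat{\gamma}_1\ge\gamma_1$, and tracking constants through the Lemma \ref{jensenlemma} estimates shows that $C'$ scales as a positive power of $c_1$.

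Finally, since $\hat{\gamma}_1$ and $\hat{\gamma}_2^{-1}$ share the same exponent structure, the small-gain requirement $\hat{\gamma}_1<\hat{\gamma}_2^{-1}$ reduces to the coefficient inequality $C'<F$. Combining this lower bound on $F$ with the upper bound $F\le F_{\max}(c_2)$ of the previous step produces a joint smallness condition of exactly the form $c_2\max_k c_1^{\eta_{2,k}}<C$ for a suitable constant $C$ depending on $a_i,b_i,p_i,q_i,n_i$ and the exponents $\eta_{i,k}$. I expect the main technical obstacle to be the intermediate-$s$ control in the previous step: the asymptotic exponent comparisons only give pointwise dominance as $s\to 0$ and $s\to\infty$, so a careful combination of Lemmas \ref{lem_sandw} and \ref{jensenlemma} is needed to quantify $\sup_{s>0}\gamma_1(s)/(s^{u_1}+s^{u_2})$ uniformly in $c_1$ and to make sure that the resulting scaling is indeed compatible with $F_{\max}(c_2)$.
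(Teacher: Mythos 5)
Your overall strategy is parallel to the paper's: choose $\hat{\gamma}_1\in\ckp$ and $\hat{\gamma}_2^{-1}\in\ckp$ as two-term polynomials in the common pair of exponents $u_1=p_2/\underline{\eta}_2$, $u_2=q_2/\overline{\eta}_2$, then reduce the small-gain condition (via \eqref{sgeq}) to a comparison of the leading coefficients $C'$ and $F$. Your treatment of $\hat{\gamma}_2$ is logically equivalent to the paper's: you verify $\sigma_2(\hat{\gamma}_2^{-1}(t))<\alpha_2(t)$, whereas the paper lower-bounds $\gamma_2^{-1}=\sigma_2^{-1}\circ\alpha_2$ directly (using Lemma \ref{lem_maxinv} and Lemma \ref{jensenlemma}); these are the same computation written in mirror image.

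The genuine gap is in the $\hat{\gamma}_1$ step, and you partially flag it yourself. You lower-bound $\alpha_1$ by $a_1 v^{p_1}$ and $b_1 v^{q_1}$, observe that $\gamma_1(s)/(s^{u_1}+s^{u_2})$ vanishes at both ends, and take $C'$ slightly above the supremum. But the two asymptotic pieces carry prefactors that scale as $c_1^{1/p_1}$ (near $s=0$) and $c_1^{1/q_1}$ (near $s=\infty$), and neither of these is what the conclusion needs: the claimed form $c_2\max_k c_1^{\eta_{2,k}}<C$ is only reached if $C'$ grows \emph{at most linearly} in $c_1$. Asserting that $C'$ "scales as a positive power of $c_1$" does not establish this, and the supremum-at-an-intermediate-$s$ issue you raise is precisely where the argument is incomplete. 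The paper avoids the issue entirely by using the cruder but uniform lower bound $\alpha_1(v)\ge\delta_1 v$ with $\delta_1=\min(a_1,b_1)$, which gives $\alpha_1^{-1}(w)\le w/\delta_1$ and hence $\gamma_1(s)<\sigma_1(s)/\delta_1\le (n_1 c_1/\delta_1)(s^{u_1}+s^{u_2})$ directly from Lemma \ref{lem_sandw}. That one line produces $C'\le n_1 c_1/\delta_1$ uniformly in $s$ — the linear $c_1$-dependence is manifest, and there is no intermediate-$s$ control problem. (Incidentally, since $\min\{(w/a_1)^{1/p_1},(w/b_1)^{1/q_1}\}\le w/\delta_1$ for all $w\ge0$, the two-piece bound you use is actually tighter, so your supremum is also $\le n_1 c_1/\delta_1$; but this has to be observed rather than left implicit.) To close your proof you should either replace the two-piece bound on $\alpha_1$ with the linear bound, or at minimum explicitly prove $\sup_{s>0}\gamma_1(s)/(s^{u_1}+s^{u_2})\le K c_1$ before invoking the coefficient comparison $C'<F$ to obtain the stated smallness condition.
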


\begin{proof}
    Let $\delta_i:=\min(a_i, b_i)$ and note that $\alpha_1(s)> \delta_1 s$ for all $s>0$, which means $\alpha_1^{-1}(s)<\frac{1}{\delta_1}s$ for all $s>0$. Hence, for all $s>0$, we have
    \begin{align}        \gamma_1(s)&<\frac{c_1}{\delta_1}\sum_{j=1}^{n_1} s^{\eta_{1,j}}:=\hat{\gamma}_1(s).
    \end{align}
    Since $\sigma_2(s)\le \max_{k}\{n_2c_{2}s^{\eta_{2,k}}\}$, by Lemma \ref{lem_maxinv} we also have
    \begin{equation}
        \sigma_2^{-1}(s)\ge \min_k\{(n_2 c_2)^{-\frac{1}{\eta_{2,k}}}s^\frac{1}{\eta_{2,k}}\},
    \end{equation}
    which implies
    \begin{align}
        &\gamma_2^{-1}(s)=\sigma_2^{-1}\circ\alpha_2(s)\notag\\&\ge \min_k \left\{(n_2 c_2)^{-\frac{1}{\eta_{2,k}}}\left(\delta_2 (s^{p_2}+s^{q_2})\right)^\frac{1}{\eta_{2,k}}\right\}.\label{diss_step1}
    \end{align}
    By Lemma \ref{jensenlemma}, we have $\left(s^{\frac{p_2}{\eta_{2,k}}}+s^{\frac{q_2}{\eta_{2,k}}}\right)^{\eta_{2,k}}\le M_k(s^{p_2}+s^{q_2}),$
    for each $k\in[n_2]$, where $M_k=\max(1, 2^{\eta_{2,k}-1})$. Dividing by $M_k$ and raising both sides to the $\frac{1}{\eta_{2,k}}$-th power, we obtain:
    \begin{equation}
        \left(\frac{1}{M_k}\right)^{\frac{1}{\eta_{2,k}}}\left(s^{\frac{p_2}{\eta_{2,k}}}+s^{\frac{q_2}{\eta_{2,k}}}\right)\le \left( s^{p_2}+s^{q_2}\right)^\frac{1}{\eta_{2,k}}.\label{diss_step2}
    \end{equation}
    By leveraging \eqref{diss_step2}, we can continue from \eqref{diss_step1} to obtain
    \begin{align*}
        \gamma_2^{-1}(s)&\ge \min_k\left\{\left(\frac{\delta_2}{n_2c_2{M}_k}\right)^\frac{1}{\eta_{2,k}}\left(s^{\frac{p_2}{\eta_{2,k}}}+s^{\frac{q_2}{\eta_{2,k}}}\right)\right\}\\&>M (s^{p}+s^{q}):=\hat{\gamma}_2^{-1}(s),
    \end{align*}
   for $s>0$, where $M=\frac12\min_k \left(\frac{\delta_2}{n_2c_2{M}_k}\right)^\frac{1}{\eta_{2,k}}$, $p=\max_k\frac{p_2}{\eta_{2,k}}$, $q=\min_k\frac{q_2}{\eta_{2,k}}$, and the bottom inequality follows from Lemma \ref{lem_sandw}. Since we assume ${\eta_{1,i}}{\eta_{2,j}}\in [p_2, q_2]$ for all $i\in [n_1], j\in [n_2]$, we can again apply Lemma \ref{lem_sandw} to obtain $\hat{\gamma}_1(s)<\frac{c_1 n_1}{\delta_1}(s^p+s^q).$
    We then obtain that $\hat{\gamma}_1(s)<\hat{\gamma}_2^{-1}(s)$ if $\frac{c_1n_1}{\delta_1}<M$ holds, which, by the definition of $M$, can also be expressed as
    \begin{equation*}
       \max_k\left\{\left(\frac{2c_1 n_1}{\delta_1}\right)^{\eta_{2,k}}\frac{n_2c_2{M}_k}{\delta_2}\right\}<1. 
    \end{equation*}
    Since the following holds
    \begin{align*}
        &\max_k\left\{\left(\frac{2c_1 n_1}{\delta_1}\right)^{\eta_{2,k}}\frac{n_2c_2{M}_k}{\delta_2}\right\}\\&\le \left(c_2\max_k c_1^{\eta_{2,k}}\right)\max_j\left\{\left(\frac{2 n_1}{\delta_1}\right)^{\eta_{2,j}}\frac{n_2{M}_j}{\delta_2}\right\},
    \end{align*}
    we can select $C=\min_j\frac{\delta_2}{n_2 M_j}\left(\frac{\delta_1}{2n_1}\right)^{\eta_{2,j}}$ to complete the proof.
\end{proof}
\vspace{0.1cm}
\begin{lemma}\label{holder_ineq}
    Given $s\in\re^n$ and $0<p<q$, the following holds
    \begin{equation*}
        \left(\sum_{i=1}^n |s_i|^q\right)^{\frac{1}{q}}\le\left(\sum_{i=1}^n |s_i|^p\right)^{\frac{1}{p}}\le n^{\frac{1}{p}-\frac{1}{q}}\left(\sum_{i=1}^n |s_i|^q\right)^{\frac{1}{q}}.
    \end{equation*}
    \QEDB
\end{lemma}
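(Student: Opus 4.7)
The plan is to prove the two inequalities separately by reducing each to a standard normalization or Hölder-type argument. The case $s=0$ makes both sides zero and can be dispensed with at the outset, so I assume $s\neq 0$ throughout.

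For the left-hand inequality $\left(\sum_{i=1}^n |s_i|^q\right)^{1/q}\le\left(\sum_{i=1}^n |s_i|^p\right)^{1/p}$, I would use a homogeneity/normalization argument. Let $A:=\left(\sum_{i=1}^n |s_i|^p\right)^{1/p}$ and define $t_i:=s_i/A$, so that $\sum_{i=1}^n |t_i|^p=1$. In particular, $|t_i|^p\le 1$ for each $i$, hence $|t_i|\le 1$, and since $q>p$ this yields $|t_i|^q\le |t_i|^p$. Summing over $i$ gives $\sum_{i=1}^n |t_i|^q\le 1$. Taking $q$-th roots and multiplying by $A$ recovers the claim after noting $A\cdot\left(\sum|t_i|^q\right)^{1/q}=\left(\sum|s_i|^q\right)^{1/q}$.

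For the right-hand inequality $\left(\sum_{i=1}^n |s_i|^p\right)^{1/p}\le n^{\frac{1}{p}-\frac{1}{q}}\left(\sum_{i=1}^n |s_i|^q\right)^{1/q}$, I would apply Hölder's inequality with conjugate exponents $\alpha:=q/p>1$ and $\beta:=q/(q-p)$, which satisfy $1/\alpha+1/\beta=1$. Writing $|s_i|^p=|s_i|^p\cdot 1$, Hölder's inequality gives
\begin{equation*}
\sum_{i=1}^n |s_i|^p\le\left(\sum_{i=1}^n (|s_i|^p)^{q/p}\right)^{p/q}\left(\sum_{i=1}^n 1^{q/(q-p)}\right)^{(q-p)/q}=\left(\sum_{i=1}^n |s_i|^q\right)^{p/q}n^{(q-p)/q}.
\end{equation*}
Raising both sides to the $1/p$-th power and simplifying the exponent $(q-p)/(pq)=1/p-1/q$ gives the claimed bound.

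The main obstacle is essentially nonexistent here: this is a classical inequality relating $\ell^p$ and $\ell^q$ norms on $\mathbb{R}^n$, and the only substantive step is correctly identifying the conjugate exponents $(q/p,\,q/(q-p))$ for Hölder. Care should be taken only with the trivial case $s=0$ and with keeping track of the exponent arithmetic $(q-p)/(pq)=1/p-1/q$.
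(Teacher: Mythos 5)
Your proof is correct. The paper does not actually prove this lemma; it defers to the standard reference (Folland, Props.\ 6.11 and 6.12), which establish exactly these monotonicity and comparison relations between $\ell^p$ and $\ell^q$ norms on a finite set with counting measure. Your two arguments are the textbook proofs of those two propositions: the normalization argument (set $\sum|t_i|^p=1$, then $|t_i|\le 1$ forces $|t_i|^q\le|t_i|^p$) gives the left inequality, and Hölder with conjugate exponents $q/p$ and $q/(q-p)$, applied to $|s_i|^p\cdot 1$, gives the right inequality with the constant $n^{(q-p)/(pq)}=n^{1/p-1/q}$. Both steps and the exponent arithmetic check out, and you correctly handle the trivial case $s=0$ at the outset. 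There is no gap.
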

\begin{proof}
    See \cite[Prop 6.11]{folland1999real} and \cite[Prop 6.12]{folland1999real}.
\end{proof}
\vspace{0.1cm}
\begin{lemma}\label{lem_decentralized}
    Let $n\in \mathbb{N}$, $p<1$, and $M\in (0, \frac{1}{n})$.
    Then, there exists $c>0$ such that
    \begin{equation}\label{decent_lem}
        \sum_{i=1}^n \frac{a_i^2+a_i b_i}{|a_i+b_i|^p}\ge c |a|^{2-p},
    \end{equation}
    for all $a,b\in\re^n$ that satisfy $|b|\le M|a|$, \tcb{where the summand in \eqref{decent_lem} is understood to equal $0$ when $a_i+b_i=0$.}\QEDB
\end{lemma}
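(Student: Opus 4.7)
The plan is to use positive homogeneity, compactness, and continuity to reduce the inequality to a strict positivity statement on a compact set. Observe that the map $\Phi(a,b):=\sum_i a_i(a_i+b_i)/|a_i+b_i|^p = \sum_i a_i\,\mathrm{sgn}(a_i+b_i)|a_i+b_i|^{1-p}$ (where the latter form extends continuously to points with $a_i+b_i=0$, since $1-p>0$) and $|a|^{2-p}$ are both positively homogeneous of degree $2-p$ in $(a,b)$. I would first normalize to $|a|=1$, reducing the constraint to $|b|\le M$, and observe that the set $K:=\{(a,b):|a|=1,\,|b|\le M\}$ is compact and $\Phi$ is continuous on it. If $\Phi>0$ pointwise on $K$, taking $c:=\min_K\Phi>0$ completes the proof by homogeneity.

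To prove strict positivity of $\Phi$ on $K$, I would use the identity (with $c_i:=a_i+b_i$)
\[
\Phi(a,b)=\sum_i|c_i|^{2-p}-\sum_i b_i\,\mathrm{sgn}(c_i)|c_i|^{1-p},
\]
and apply H\"older's inequality with conjugate exponents $2-p$ and $(2-p)/(1-p)$ to bound the cross term by $\bigl(\sum_i|b_i|^{2-p}\bigr)^{1/(2-p)}\bigl(\sum_i|c_i|^{2-p}\bigr)^{(1-p)/(2-p)}$. Combining this with the positivity of the first sum (since $|c|\ge 1-M>0$), the task reduces to proving the strict inequality $\sum_i|c_i|^{2-p}>\sum_i|b_i|^{2-p}$. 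Invoking Lemma \ref{holder_ineq} to pass between $\ell^{2-p}$ and $\ell^2$ norms, and using the hypothesis $M<1/n$, which forces $|b_i|\le|b|<|a|/n$ and hence gives strong control on $\sum|b_i|^{2-p}$, I would verify this comparison, splitting the cases $p\in[0,1)$ and $p<0$ as needed. When the coarse bound is not tight enough, I would partition $[n]$ into indices with $|a_i|>|b_i|$ (where the $i$-th summand is automatically nonnegative and lower-bounded by $|a_i|(|a_i|-|b_i|)^{1-p}$) and indices with $|a_i|\le|b_i|$ (where the inequality $\sum_{|a_i|\le|b_i|}a_i^2\le|b|^2\le M^2$ confines the total squared $a$-mass on the ``bad'' indices), so the positive contributions dominate the sign-flipped ones.

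The main obstacle is extracting the precise threshold $M<1/n$. A direct H\"older or Cauchy--Schwarz estimate typically yields only the weaker requirement $M\lesssim 1/(1+\sqrt{n})$, so obtaining the sharper $1/n$ bound requires exploiting that each individual component satisfies $|b_i|<|a|/n$ and carefully balancing the potentially negative sign-flipped contributions against the positive ones using the global $\ell^2$ constraint $\sum_i b_i^2\le M^2|a|^2$. Making this combinatorial balancing uniform in the dimension $n$, and simultaneously valid across all exponents $p<1$ (in particular across the transition at $p=0$ where the norm-equivalence constants in Lemma \ref{holder_ineq} change direction), is the crux of the argument.
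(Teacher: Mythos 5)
Your compactness--homogeneity reduction is sound in principle: $\Phi(a,b)=\sum_i a_i\,\mathrm{sgn}(a_i+b_i)|a_i+b_i|^{1-p}$ does extend continuously (since $1-p>0$), both sides of \eqref{decent_lem} are positively homogeneous of degree $2-p$, and if you could establish $\Phi>0$ on the compact set $K=\{|a|=1,\,|b|\le M\}$, then $c:=\min_K\Phi>0$ would finish the argument. That framework is valid, and would in fact be a conceptually cleaner route than the paper's if you could close the positivity step, since the lemma only asserts existence of $c$.

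The gap is that you have not closed that step, and you say so yourself. The H\"older estimate reduces $\Phi>0$ to the \emph{sufficient} condition $\sum_i|c_i|^{2-p}>\sum_i|b_i|^{2-p}$ (with $c_i=a_i+b_i$), but you neither verify that this holds on all of $K$ when $M<1/n$, nor address whether it is necessary --- if it failed somewhere on $K$ you would still need an independent argument there, since H\"older is lossy. You explicitly note that the coarse norm-equivalence bounds from Lemma~\ref{holder_ineq} yield only $M\lesssim 1/(1+\sqrt{n})$ rather than $1/n$, and then defer the refinement ("I would partition $[n]$...", "making this combinatorial balancing uniform ... is the crux of the argument"). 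That deferral is precisely the content of the lemma. As written, the proposal establishes the easy reductions and identifies where the difficulty lies, but does not prove the statement.

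For contrast, the paper's proof proceeds directly and constructively, without compactness. It singles out the index $k=\argmax_i|a_i|$, lower-bounds that single summand by $(1-\delta M\sqrt n)^{1-p}|a_k|^{2-p}$ with $\delta=|b_k|/|b|$, bounds each remaining summand from below by its exact minimum $-K_p|b_i|^{2-p}$ of the one-variable function $\zeta_i(x)=(x^2+xb_i)/|x+b_i|^p$, and then shows the positive contribution dominates by reducing to the quadratic inequality $2\delta^2-2\delta\sqrt n+n-1\ge 0$, which holds for $n\ge 2$. The threshold $M<1/n$ enters through $M\sqrt n<1/\sqrt n$ and the ensuing domination. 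This gives an explicit $c$, whereas your route, even if completed, would be nonconstructive. Either route could in principle work; yours currently has a genuine hole at exactly the place the $1/n$ threshold must be extracted.
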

\begin{proof}
We can assume $n\ge 2$, since if $n=1$ we can take $c=(1-M)^{1-p}$. We let $k=\argmax_i |a_i|$, where it immediately follows that $|a_k|\ge\frac{1}{M\sqrt{n}}|b|$. Without loss of generality, we assume $k=1$. 
Consider the function $\zeta_i(x)=\frac{x^2+xb_i}{|x+b_i|^p}$. Computing the derivative yields $\zeta'_i(x)=\frac{(2-p)x+b_i}{|x+b_i|^p}$. Since $p<1$, it follows that $\zeta_i$ is minimized at $x^*=\frac{-b_i}{2-p}$, and its minimum value is computed to be
\begin{equation}\label{lemdom_zmin}
        \zeta_i(x^*)=-K_p {|b_i|}^{2-p},
\end{equation}
where $K_p=\frac{1}{2-p}\left(\frac{1-p}{2-p}\right)^{1-p}\le 1$. Let $\delta=\frac{|b_1|}{|b|}$, where $-1< -\delta M\sqrt{n}\le 0$ and $0\le\delta\le 1$. We have
    \begin{align}\label{lemdom_a1}
        \frac{a_1^2+a_1 b_1}{|a_1+b_1|^p}\ge (1-\delta M\sqrt{n})^{1-p}|a_1|^{2-p}.
    \end{align}
    Since $|a|\le \sqrt{n}|a_1|$, we have
    \begin{equation}
        |a_1|^{2-p}\ge \frac{1}{n^{1-\frac12p}}|a|^{2-p}.\label{a1bd}
    \end{equation}
    Let $R_N:=\max\{1, N^{\frac12 p}\}$ and $b_{-1}:=[|b_2|,...,|b_n|]\in\re^{n-1}$. First, we assume that $p\in (0,1)$. By Lemma \ref{holder_ineq}, we obtain
    \begin{align}\label{hold1}
        \left(\sum_{i=2}^n |b_i|^{2-p}\right)^{\frac{1}{2-p}}&\le (n-1)^{\frac{1}{2-p}-\frac12}|b_{-1}|\notag\\&=\sqrt{1-\delta^2}(n-1)^{\frac{1}{2-p}-\frac12}|b|.
    \end{align}
    Meanwhile, if $p\le 0$, we apply Lemma \ref{holder_ineq} to obtain
    \begin{equation}\label{hold2}
        \left(\sum_{i=2}^n |b_i|^{2-p}\right)^{\frac{1}{2-p}}\le |b_{-1}|=\sqrt{1-\delta^2}|b|.
    \end{equation}
    We can combine \eqref{hold1} and \eqref{hold2} to obtain
    \begin{equation}\label{hold3}
        \sum_{i=2}^n |b_i|^{2-p}\le R_{n-1}(\sqrt{1-\delta^2}|b|)^{2-p}
    \end{equation}
    Then, with \eqref{lemdom_zmin}, \eqref{lemdom_a1}, \eqref{a1bd}, and \eqref{hold3}, we obtain:
    \begin{align*}
         &\sum_{i=1}^n \frac{a_i^2+a_i b_i}{|a_i+b_i|^p}\ge (1-\delta M\sqrt{n})^{1-p}|a_1|^{2-p}-K_p\sum_{i=2}^n |b_i|^{2-p}\\
         &\ge \frac{(1-\delta M\sqrt{n})^{1-p}}{n^{1-\frac12 p}}|a|^{2-p}- R_{n-1} (\sqrt{1-\delta^2}|b|)^{2-p}\\
         &\ge  \frac{(1-\delta M\sqrt{n})^{1-p}-R_{n-1} (1-\delta^2)^{1-\frac12 p} (M\sqrt{n})^{2-p}}{n^{1-\frac12 p}}|a|^{2-p}.
    \end{align*}
    Since $M<\frac{1}{n}$, we obtain $(1-\delta M\sqrt{n})^{1-p}>\left(\frac{\sqrt{n}-\delta}{\sqrt{n}}\right)^{1-p}$, and $R_{n-1}(1-\delta^2)^{1-\frac12 p}(M\sqrt{n})^{2-p}<\frac{\sqrt{n-1}}{n^{1-\frac12 p}}(1-\delta^2)^{1-\frac12 p}.$
    Consider the quantity
    \begin{align*}
        &\left(\frac{\sqrt{n}-\delta}{\sqrt{n}}\right)^{1-p}-\frac{\sqrt{n-1}}{n^{1-\frac12 p}}(1-\delta^2)^{1-\frac12 p}\\&=\frac{1}{n^{1-\frac12 p}}\left({\sqrt{n}(\sqrt{n}-\delta)^{1-p}-\sqrt{n-1}(1-\delta^2)^{1-\frac12 p}}\right).
    \end{align*}
    To show that this quantity is positive whenever $n\ge 2$ and $p<1$, it suffices to show that $(\sqrt{n}-\delta)^{1-p}\ge(1-\delta^2)^{1-\frac12 p}.$
    Since $1-\delta^2\in [0,1]$ and $p<1$, it suffices to show
    \begin{align}
        &(n+\delta^2-2\delta\sqrt{n})^{1-p}\ge(1-\delta^2)^{1-p}\notag\\&\Longleftrightarrow 2\delta^2-2\delta\sqrt{n}+n-1\ge 0\label{decbds}
    \end{align}
    for $n\ge 2$. The left hand side of \eqref{decbds} is quadratic in $\delta$, so we can compute the critical point and find that the left hand side has a minimum value of $\frac{n}{2}-1\ge 0$. Hence, we can define
    \begin{equation*}
        c:=\min_{\delta\in[0,1]}\frac{(1-\delta M\sqrt{n})^{1-p}-R_{n-1} (1-\delta^2)^{1-\frac12 p} (M\sqrt{n})^{2-p}}{n^{1-\frac12 p}},
    \end{equation*}
    where we have $c>0$. This establishes the result.
\end{proof}

\bibliographystyle{IEEEtran}
\bibliography{Biblio.bib}

@INPROCEEDINGS{9683248,
  author={Poveda, Jorge I. and Krstić, Miroslav},
  booktitle={2021 60th IEEE Conference on Decision and Control (CDC)}, 
  title={Fixed-Time Seeking and Tracking of Time-Varying Extrema}, 
  year={2021},
  volume={},
  number={},
  pages={108-113},
  keywords={Heuristic algorithms;Design methodology;Conferences;Cost function;Stability analysis;Trajectory;Tuning},
  doi={10.1109/CDC45484.2021.9683248}}

@INPROCEEDINGS{10644358,
  author={Tang, Michael and Krstic, Miroslav and Poveda, Jorge I.},
  booktitle={2024 American Control Conference (ACC)}, 
  title={On Fixed-Time Stability for a Class of Singularly Perturbed Systems Using Composite {L}yapunov Functions}, 
  year={2024},
  volume={},
  number={},
  pages={4783-4788},
  keywords={Asymptotic stability;Sufficient conditions;Heuristic algorithms;Estimation;Interconnected systems;Stability analysis;Dynamical systems},
  doi={10.23919/ACC60939.2024.10644358}}

@ARTICLE{sgfinite,
  author={Li, Huanhuan and Liu, Tengfei and Jiang, Zhong-Ping},
  journal={IEEE Transactions on Automatic Control}, 
  title={A {L}yapunov-Based Small-Gain Theorem for a Network of Finite-Time Input-to-State Stable Systems}, 
  year={2024},
  volume={69},
  number={2},
  pages={1052-1059},
  keywords={Stability criteria;Trajectory;Nonlinear systems;Lyapunov methods;Interconnected systems;Standards;Large-scale systems;Finite-time input-to-state stability (FTISS);Lyapunov functions;small-gain theorem},
  doi={10.1109/TAC.2023.3274880}}

@ARTICLE{1098316,
  author={Zames, G.},
  journal={IEEE Transactions on Automatic Control}, 
  title={On the input-output stability of time-varying nonlinear feedback systems Part one: Conditions derived using concepts of loop gain, conicity, and positivity}, 
  year={1966},
  volume={11},
  number={2},
  pages={228-238},
  keywords={Stability;Time varying systems;Feedback loop;Linearity;Space technology;Delay effects;Contracts;Circuits;Information theory;Distributed amplifiers},
  doi={10.1109/TAC.1966.1098316}}

@article{zpjfinite,
author = {Hong, Yiguang and Jiang, Zhong-Ping and Feng, Gang},
title = {Finite-Time Input-to-State Stability and Applications to Finite-Time Control Design},
journal = {SIAM Journal on Control and Optimization},
volume = {48},
number = {7},
pages = {4395-4418},
year = {2010},
doi = {10.1137/070712043}
}

@article{bhatfinite,
author = {Bhat, Sanjay P. and Bernstein, Dennis S.},
title = {Finite-Time Stability of Continuous Autonomous Systems},
journal = {SIAM Journal on Control and Optimization},
volume = {38},
number = {3},
pages = {751-766},
year = {2000},
doi = {10.1137/S0363012997321358}}

@article{rosen1965existence,
  title={Existence and uniqueness of equilibrium points for concave n-person games},
  author={Rosen, J Ben},
  journal={Econometrica: Journal of the Econometric Society},
  pages={520--534},
  year={1965},
  publisher={JSTOR}
}

@article{LOPEZRAMIREZ2020104775,
title = {Finite-time and fixed-time input-to-state stability: Explicit and implicit approaches},
journal = {Systems \& Control Letters},
volume = {144},
pages = {104775},
year = {2020},
issn = {0167-6911},
author = {Francisco Lopez-Ramirez and Denis Efimov and Andrey Polyakov and Wilfrid Perruquetti},
keywords = {Finite-Time Stability, Fixed-Time Stability, Input-to-State Stability},
abstract = {The present article gathers the analysis of non-asymptotic convergence rates (finite-time and fixed-time) with the property of input-to-state stability. Theoretical tools to determine this joint property are presented for the case where an explicit ISS Lyapunov function is known, and when it remains in implicit form (e.g. as a solution of an algebraic equation). For the case of finite-time input-to-state stability, necessary and sufficient conditions are given whereas for the fixed-time case only a sufficient condition is obtained. Academic examples and numerical simulations support the obtained results.}
}

@ARTICLE{6104367,
  author={Polyakov, Andrey},
  journal={IEEE Transactions on Automatic Control}, 
  title={Nonlinear Feedback Design for Fixed-Time Stabilization of Linear Control Systems}, 
  year={2012},
  volume={57},
  number={8},
  pages={2106-2110},
  keywords={Polynomials;Asymptotic stability;Algorithm design and analysis;Control systems;Stability analysis;Convergence;Numerical stability;Finite-time stability;polynomial feedback;second-order sliding mode control},
  doi={10.1109/TAC.2011.2179869}}

@article{giorgi1992dini,
  title={Dini derivatives in optimization—Part I},
  author={Giorgi, Giorgio and Koml{\'o}si, S{\'a}ndor},
  journal={Rivista di matematica per le scienze economiche e sociali},
  volume={15},
  number={1},
  pages={3--30},
  year={1992},
  publisher={Springer}
}

@article{jiang1996lyapunov,
  title={A {L}yapunov formulation of the nonlinear small-gain theorem for interconnected {ISS} systems},
  author={Jiang, Zhong-Ping and Mareels, Iven MY and Wang, Yuan},
  journal={Automatica},
  volume={32},
  number={8},
  pages={1211--1215},
  year={1996},
  publisher={Elsevier}
}

@ARTICLE{9696292,
  author={Liu, Yang and Li, Hongyi and Lu, Renquan and Zuo, Zongyu and Li, Xiaodi},
  journal={IEEE/CAA Journal of Automatica Sinica}, 
  title={An Overview of Finite/Fixed-Time Control and Its Application in Engineering Systems}, 
  year={2022},
  volume={9},
  number={12},
  pages={2106-2120},
  keywords={Convergence;Stability analysis;Asymptotic stability;Lyapunov methods;Automation;Upper bound;Sufficient conditions;Adding a power integrator;finite/fixed-time control and application;homogeneous theory;sliding mode control},
  doi={10.1109/JAS.2022.105413}}

@article{CHEN2020412,
title = {A new fixed-time stability theorem and its application to the fixed-time synchronization of neural networks},
journal = {Neural Networks},
volume = {123},
pages = {412-419},
year = {2020},
issn = {0893-6080},
author = {Chuan Chen and Lixiang Li and Haipeng Peng and Yixian Yang and Ling Mi and Hui Zhao},
keywords = {Fixed-time stability, Fixed-time synchronization, Neural networks},
abstract = {In this paper, we derive a new fixed-time stability theorem based on definite integral, variable substitution and some inequality techniques. The fixed-time stability criterion and the upper bound estimate formula for the settling time are different from those in the existing fixed-time stability theorems. Based on the new fixed-time stability theorem, the fixed-time synchronization of neural networks is investigated by designing feedback controller, and sufficient conditions are derived to guarantee the fixed-time synchronization of neural networks. To show the usability and superiority of the obtained theoretical results, we propose a secure communication scheme based on the fixed-time synchronization of neural networks. Numerical simulations illustrate that the new upper bound estimate formula for the settling time is much tighter than those in the existing fixed-time stability theorems. Moreover, the plaintext signals can be recovered according to the new fixed-time stability theorem, while the plaintext signals cannot be recovered according to the existing fixed-time stability theorems.}
}

@article{hu2020fixed,
  title={Fixed/preassigned-time synchronization of complex networks via improving fixed-time stability},
  author={Hu, Cheng and He, Haibo and Jiang, Haijun},
  journal={IEEE Transactions on Cybernetics},
  volume={51},
  number={6},
  pages={2882--2892},
  year={2020},
  publisher={IEEE}
}

@INPROCEEDINGS{10885949,
  author={Tang, Michael and Krstic, Miroslav and Poveda, Jorge I.},
  booktitle={2024 IEEE 63rd Conference on Decision and Control (CDC)}, 
  title={Filterless Fixed-Time Extremum Seeking for Scalar Quadratic Maps}, 
  year={2024},
  volume={},
  number={},
  pages={5256-5261},
  keywords={Costs;Smoothing methods;Heuristic algorithms;Perturbation methods;Filtering algorithms;Cost function;Filtering theory;Stability analysis;Transient analysis;Convergence},
  doi={10.1109/CDC56724.2024.10885949}}

@ARTICLE{9760031,
  author={Poveda, Jorge I. and Krstić, Miroslav and Başar, Tamer},
  journal={IEEE Transactions on Automatic Control}, 
  title={Fixed-Time {N}ash Equilibrium Seeking in Time-Varying Networks}, 
  year={2023},
  volume={68},
  number={4},
  pages={1954-1969},
  keywords={Convergence;Games;Heuristic algorithms;Asymptotic stability;Aerodynamics;Standards;Numerical stability;Extremum seeking;learning in games;Nash equilibria;noncooperative games},
  doi={10.1109/TAC.2022.3168527}}

@ARTICLE{9714166,
  author={Lei, Yan and Wang, Yan-Wu and Morărescu, Irinel-Constantin and Postoyan, Romain},
  journal={IEEE Transactions on Automatic Control}, 
  title={Event-Triggered Fixed-Time Stabilization of Two Time Scales Linear Systems}, 
  year={2023},
  volume={68},
  number={3},
  pages={1722-1729},
  keywords={Asymptotic stability;Symmetric matrices;Numerical stability;Stability criteria;Standards;Convergence;Linear systems;Event-triggered control;fixed-time stabilization;two time scales system},
  doi={10.1109/TAC.2022.3151818}}

@INPROCEEDINGS{10886634,
  author={Zimenko, K. and Efimov, D. and Polyakov, A. and Ping, X.},
  booktitle={2024 IEEE 63rd Conference on Decision and Control (CDC)}, 
  title={On Small-Gain Theorem for Interconnected Finite/Fixed-Time Input-to-State Stable Systems*}, 
  year={2024},
  volume={},
  number={},
  pages={4368-4372},
  keywords={Interconnected systems;Stability analysis;Nonlinear systems;Convergence},
  doi={10.1109/CDC56724.2024.10886634}}

@article{tang2024fixed,
  title={Fixed-Time Input-to-State Stability for Singularly Perturbed Systems via Composite {L}yapunov Functions},
  author={Tang, Michael and Krstic, Miroslav and Poveda, Jorge},
  journal={arXiv preprint arXiv:2412.16797},
  year={2024}
}

@article{belgioioso2024online,
  title={Online feedback equilibrium seeking},
  author={Belgioioso, Giuseppe and Liao-McPherson, Dominic and de Badyn, Mathias Hudoba and Bolognani, Saverio and Smith, Roy S and Lygeros, John and D{\"o}rfler, Florian},
  journal={IEEE Transactions on Automatic Control},
  volume={70},
  number={1},
  pages={203--218},
  year={2024},
  publisher={IEEE}
}

@article{romano2025game,
  title={Game-Theoretic Steady-State Control: A General Framework for {NE} Seeking with {LTI} Agents},
  author={Romano, Andrew R and Pavel, Lacra},
  journal={IEEE Transactions on Automatic Control},
  year={2025},
  publisher={IEEE}
}

@article{bianchin2022online,
  title={Online optimization of switched {LTI} systems using continuous-time and hybrid accelerated gradient flows},
  author={Bianchin, Gianluca and Poveda, Jorge I and Dall’Anese, Emiliano},
  journal={Automatica},
  volume={146},
  pages={110579},
  year={2022},
  publisher={Elsevier}
}

@incollection{sontag2008input,
  title={Input to state stability: Basic concepts and results},
  author={Sontag, Eduardo D},
  booktitle={Nonlinear and optimal control theory: lectures given at the CIME summer school held in Cetraro, Italy June 19--29, 2004},
  pages={163--220},
  year={2008},
  publisher={Springer}
}

@article{jiang1994small,
  title={Small-gain theorem for {ISS} systems and applications},
  author={Jiang, Z-P and Teel, Andrew R and Praly, Laurent},
  journal={Mathematics of Control, Signals and Systems},
  volume={7},
  number={2},
  pages={95--120},
  year={1994},
  publisher={Springer}
}

@article{bao2018ios,
  title={An {IOS} small-gain theorem for large-scale hybrid systems},
  author={Bao, Adiya and Liu, Tengfei and Jiang, Zhong-Ping},
  journal={IEEE Transactions on Automatic Control},
  volume={64},
  number={3},
  pages={1295--1300},
  year={2018},
  publisher={IEEE}
}

@inproceedings{liberzon2012small,
  title={Small-gain theorems of {LaSalle} type for hybrid systems},
  author={Liberzon, Daniel and Ne{\v{s}}i{\'c}, Dragan and Teel, Andrew R},
  booktitle={2012 IEEE 51st IEEE Conference on Decision and Control (CDC)},
  pages={6825--6830},
  year={2012},
  organization={IEEE}
}

@article{dragan1997small,
  title={A small gain theorem for linear stochastic systems},
  author={Dragan, Vasile and Halanay, Aristide and Stoica, Adrian},
  journal={Systems \& control letters},
  volume={30},
  number={5},
  pages={243--251},
  year={1997},
  publisher={Elsevier}
}

@article{jiang2004nonlinear,
  title={Nonlinear small-gain theorems for discrete-time feedback systems and applications},
  author={Jiang, Zhong-Ping and Lin, Yuandan and Wang, Yuan},
  journal={Automatica},
  volume={40},
  number={12},
  pages={2129--2136},
  year={2004},
  publisher={Elsevier}
}

@inproceedings{zhongping2008nonlinear,
  title={Nonlinear small-gain theorems for discrete-time large-scale systems},
  author={Zhongping, Jiang and Yuandan, Lin and Yuan, Wang},
  booktitle={2008 27th Chinese Control Conference},
  pages={704--708},
  year={2008},
  organization={IEEE}
}

@article{mironchenko2021nonlinear,
  title={Nonlinear small-gain theorems for input-to-state stability of infinite interconnections},
  author={Mironchenko, Andrii and Kawan, Christoph and Gl{\"u}ck, Jochen},
  journal={Mathematics of Control, Signals, and Systems},
  volume={33},
  number={4},
  pages={573--615},
  year={2021},
  publisher={Springer}
}

@article{mironchenko2021small,
  title={Small gain theorems for general networks of heterogeneous infinite-dimensional systems},
  author={Mironchenko, Andrii},
  journal={SIAM Journal on Control and Optimization},
  volume={59},
  number={2},
  pages={1393--1419},
  year={2021},
  publisher={SIAM}
}

@article{liberzon2014lyapunov,
  title={{L}yapunov-based small-gain theorems for hybrid systems},
  author={Liberzon, Daniel and Ne{\v{s}}i{\'c}, Dragan and Teel, Andrew R},
  journal={IEEE Transactions on Automatic control},
  volume={59},
  number={6},
  pages={1395--1410},
  year={2014},
  publisher={IEEE}
}

@article{kawan2020lyapunov,
  title={A {L}yapunov-based small-gain theorem for infinite networks},
  author={Kawan, Christoph and Mironchenko, Andrii and Swikir, Abdalla and Noroozi, Navid and Zamani, Majid},
  journal={IEEE Transactions on Automatic Control},
  volume={66},
  number={12},
  pages={5830--5844},
  year={2020},
  publisher={IEEE}
}

@ARTICLE{9075378,
  author={Hauswirth, Adrian and Bolognani, Saverio and Hug, Gabriela and Dörfler, Florian},
  journal={IEEE Transactions on Automatic Control}, 
  title={Timescale Separation in Autonomous Optimization}, 
  year={2021},
  volume={66},
  number={2},
  pages={611-624},
  keywords={Optimization;Heuristic algorithms;Steady-state;Stability analysis;Asymptotic stability;Adaptive control;Trajectory;Closed-loop systems;gradient methods;optimization},
  doi={10.1109/TAC.2020.2989274}}

@ARTICLE{8673636,
  author={Colombino, Marcello and Dall’Anese, Emiliano and Bernstein, Andrey},
  journal={IEEE Transactions on Control of Network Systems}, 
  title={Online Optimization as a Feedback Controller: Stability and Tracking}, 
  year={2020},
  volume={7},
  number={1},
  pages={422-432},
  keywords={Optimization;Heuristic algorithms;Time-varying systems;Linear systems;Control systems;Steady-state;Power system stability;Control theory;optimization;power system control},
  doi={10.1109/TCNS.2019.2906916}}

@article{doi:10.1137/060675861,
author = {Andrieu, Vincent and Praly, Laurent and Astolfi, Alessandro},
title = {Homogeneous Approximation, Recursive Observer Design, and Output Feedback},
journal = {SIAM Journal on Control and Optimization},
volume = {47},
number = {4},
pages = {1814-1850},
year = {2008},
doi = {10.1137/060675861}
,
    abstract = { We introduce two new tools that can be useful in nonlinear observer and output feedback design. The first one is a simple extension of the notion of homogeneous approximation to make it valid both at the origin and at infinity (homogeneity in the bi-limit). Exploiting this extension, we give several results concerning stability and robustness for a homogeneous in the bi-limit vector field. The second tool is a new recursive observer design procedure for a chain of integrator. Combining these two tools, we propose a new global asymptotic stabilization result by output feedback for feedback and feedforward systems. }
}

@article{fxtiss_converse,
author = {Efimov, Denis and Polyakov, Andrey},
title = {On Converse {L}yapunov Theorem for Fixed-Time Input-to-State Stability},
journal = {SIAM Journal on Control and Optimization},
volume = {62},
number = {1},
pages = {118-134},
year = {2024},
doi = {10.1137/22M1497596}
,
    abstract = { Abstract. Input-to-state stability is one of the most utilizable robust stability properties for nonlinear dynamical systems, while (nearly) fixed-time convergence is a kind of decay for trajectories of disturbance-free systems that is independent in initial conditions. The presence of both these features for a system can be checked by the existence of a proper Lyapunov function. The objective of this work is to provide the conditions for a converse result that (nearly) fixed-time input-to-state stable systems admit a respective Lyapunov function. Similar auxiliary results for uniform finite-time stability and uniform (nearly) fixed-time stability are obtained. }
}

@ARTICLE{11239427,
  author={Tang, Michael and Krstić, Miroslav and Poveda, Jorge I.},
  journal={IEEE Transactions on Automatic Control}, 
  title={Fixed-Time Input-to-State Stability for Singularly Perturbed Systems via Composite {L}yapunov Functions}, 
  year={2025},
  volume={},
  number={},
  pages={1-8},
  keywords={Asymptotic stability;Stability analysis;Lyapunov methods;Power system stability;Numerical stability;Control theory;Aerodynamics;Standards;Perturbation methods;Interconnected systems},
  doi={10.1109/TAC.2025.3631559}}

@ARTICLE{9540998,
  author={Bianchin, Gianluca and Cortés, Jorge and Poveda, Jorge I. and Dall’Anese, Emiliano},
  journal={IEEE Transactions on Control of Network Systems}, 
  title={Time-Varying Optimization of {L}{T}{I} Systems Via Projected Primal-Dual Gradient Flows}, 
  year={2022},
  volume={9},
  number={1},
  pages={474-486},
  keywords={Time-varying systems;Optimization;Trajectory;Linear systems;Power system dynamics;Dynamical systems;Costs;Cyber-physical systems;networked control systems;optimization;optimal control;transportation networks},
  doi={10.1109/TCNS.2021.3112762}}

@ARTICLE{10189107,
  author={Bastianello, Nicola and Carli, Ruggero and Zampieri, Sandro},
  journal={IEEE Transactions on Automatic Control}, 
  title={Internal Model-Based Online Optimization}, 
  year={2024},
  volume={69},
  number={1},
  pages={689-696},
  keywords={Signal processing algorithms;Optimization;Costs;Convergence;Approximation algorithms;Digital control;Robust control;Digital control;online gradient descent;online optimization;robust control;structured algorithms},
  doi={10.1109/TAC.2023.3297504}}

@ARTICLE{11048686,
  author={Tang, Michael and Javed, Umar and Chen, Xudong and Krstic, Miroslav and Poveda, Jorge I.},
  journal={IEEE Transactions on Automatic Control}, 
  title={Deception in {N}ash Equilibrium Seeking}, 
  year={2025},
  volume={},
  number={},
  pages={1-16},
  keywords={Games;Heuristic algorithms;Nash equilibrium;Cost function;Vectors;Stochastic processes;Multi-agent systems;Costs;Transient analysis;Training},
  doi={10.1109/TAC.2025.3582524}}

@article{yi2019operator,
  title={An operator splitting approach for distributed generalized {N}ash equilibria computation},
  author={Yi, Peng and Pavel, Lacra},
  journal={Automatica},
  volume={102},
  pages={111--121},
  year={2019},
  publisher={Elsevier}
}

@incollection{marden2017game,
  title={Game-theoretic learning in distributed control},
  author={Marden, Jason R and Shamma, Jeff S},
  booktitle={Handbook of dynamic game theory},
  pages={1--36},
  year={2017},
  publisher={Springer}
}

@ARTICLE{6060862,
  author={Frihauf, Paul and Krstic, Miroslav and Basar, Tamer},
  journal={IEEE Transactions on Automatic Control}, 
  title={{N}ash Equilibrium Seeking in Noncooperative Games}, 
  year={2012},
  volume={57},
  number={5},
  pages={1192-1207},
  keywords={Games;Convergence;Nash equilibrium;Mathematical model;Heuristic algorithms;Stability analysis;Nonlinear dynamical systems;Extremum seeking;learning;Nash equilibria;noncooperative games},
  doi={10.1109/TAC.2011.2173412}}

@article{tang2025lyapunov,
  title={A {L}yapunov-Based Small-Gain Theorem for Fixed-Time Stability},
  author={Tang, Michael and Krstic, Miroslav and Poveda, Jorge},
  journal={arXiv preprint arXiv:2511.23474},
  year={2025}
}

@book{folland1999real,
  title={Real analysis: modern techniques and their applications},
  author={Folland, Gerald B},
  year={1999},
  publisher={John Wiley \& Sons}
}

@ARTICLE{10018220,
  author={Polyakov, Andrey and Krstic, Miroslav},
  journal={IEEE Transactions on Automatic Control}, 
  title={Finite- and Fixed-Time Nonovershooting Stabilizers and Safety Filters by Homogeneous Feedback}, 
  year={2023},
  volume={68},
  number={11},
  pages={6434-6449},
  keywords={Safety;Trajectory;Symmetric matrices;Stability analysis;Nonlinear systems;Convergence;Closed loop systems;Control design;nonlinear control systems;robust control},
  doi={10.1109/TAC.2023.3237482}}

@article{liu2023singular,
  title={Singular perturbation: When the perturbation parameter becomes a state-dependent function},
  author={Liu, Tengfei and Jiang, Zhong-Ping},
  journal={IFAC-PapersOnLine},
  volume={56},
  number={1},
  pages={294--300},
  year={2023},
  publisher={Elsevier}
}

@article{zuo2015non,
  title={Non-singular fixed-time terminal sliding mode control of non-linear systems},
  author={Zuo, Zongyu},
  journal={IET control theory \& applications},
  volume={9},
  number={4},
  pages={545--552},
  year={2015},
  publisher={Wiley Online Library}
}

\end{document}